    \let\Cref\crtCref
    \let\cref\crtcref
\theoremstyle{plain}
\newtheorem{theorem}{Theorem}[section]
\newtheorem{lemma}{Lemma}[section]
\newtheorem{proposition}[lemma]{Proposition}
\newtheorem{corollary}[lemma]{Corollary}
\theoremstyle{definition}
\newtheorem{definition}{Definition}[section]
\newtheorem{assumption}{Assumption}[section]
\theoremstyle{remark}
\crefname{assumption}{assumption}{assumptions}
\Crefname{assumption}{Assumption}{Assumptions}
\newcommand{\RomanNum}[1]{\uppercase\expandafter{\romannumeral #1}}
\let\hat\widehat
\let\tilde\widetilde
\newcommand{\cI}{\mathcal{I}}
\newcommand{\cM}{\mathcal{M}}
\newcommand{\cN}{\mathcal{N}}
\newcommand{\cO}{\mathcal{O}}
\newcommand{\E}{{\mathbb E}}
\newcommand{\argmin}{\mathop{\mathrm{argmin}}}
\newcommand{\Tr}{\mathop{\text{tr}}\kern.2ex}
\DeclareMathOperator{\ind}{\mathds{1}}  
\def\eqref#1{equation~\ref{#1}}
\def\1{\bm{1}}
\DeclareMathAlphabet{\mathsfit}{\encodingdefault}{\sfdefault}{m}{sl}
\SetMathAlphabet{\mathsfit}{bold}{\encodingdefault}{\sfdefault}{bx}{n}
\title{GenAI vs. Human Creators: Procurement Mechanism Design in Two-/Three-Layer Markets}
\author{\name Rui Ai \email ruiai@mit.edu \\
      \addr Massachusetts Institute of Technology
      \\
      \name David Simchi-Levi \email dslevi@mit.edu \\
      \addr Massachusetts Institute of Technology
      \\
      \name Haifeng Xu\thanks{Supported by the AI2050 program at Schmidt Sciences (Grant G-24-66104).} \email haifengxu@uchicago.edu \\
      \addr The University of Chicago
}
\newcommand{\rebuttal}[1]{{\color{black}    #1}}
\begin{document}

\maketitle

\begin{abstract}

With the rapid advancement of generative AI (GenAI), mechanism design adapted to its unique characteristics poses new theoretical and practical challenges. Unlike traditional goods, content from one domain can enhance the training and performance of GenAI models in other domains. For example, OpenAI’s video generation model Sora~\citep{liu2024sora} relies heavily on image data to improve video generation quality. In this work, we study \emph{nonlinear procurement mechanism design} under \emph{data transferability}, where online platforms employ both human creators and GenAI to satisfy cross-domain content demand. We propose optimal mechanisms that maximize either platform revenue or social welfare and identify the specific properties of GenAI that make such high-dimensional design problems tractable. Our analysis further reveals which domains face stronger competitive pressure and which tend to experience overproduction. Moreover, the growing role of data intermediaries, including labeling companies such as Scale AI and creator organizations such as The Wall Street Journal, introduces a third layer into the traditional platform–creator structure. We show that this three-layer market can result in a lose-lose outcome, reducing both platform revenue and social welfare, as large pre-signed contracts distort creators’ incentives and lead to inefficiencies in the data market. These findings suggest a need for government regulation of the GenAI data ecosystem, and our theoretical insights are further supported by numerical simulations.

    
\end{abstract}
\section{Introduction}\label{sec:intro}
Large models have entered the era of multimodality~\citep{yin2023survey}. Modern commercial systems such as GPT-4~\citep{achiam2023gpt}, Gemini~\citep{team2023gemini}, and Claude~\citep{Claude3ModelCard} are trained on diverse types of data, including text, images, audio, and video. These modalities are not isolated; transferability among them is well established both theoretically and empirically through transfer learning~\citep{howard2018universal}. Just as film directors may draw inspiration from Shakespeare’s works, image datasets can enhance the performance of video generation models. This interdependence links the valuation of data across domains into a unified economic problem and naturally raises the question:

{\centering\textit{How do we price content across different domains?}\par}

The answer depends on the degree of transferability, a feature absent from traditional goods. Bricks, for instance, can only build houses but not chips, while data can improve the performance of models across tasks. Data with high transferability should therefore command a higher price, whereas specialized data that benefits only a single task should be valued lower. In this paper, we incorporate parameters capturing data transferability into a formal production model to analyze how this property shapes market outcomes.

The transferability of data also raises important ethical concerns. Recent debates have centered on whether GenAI will cause certain professions to vanish, alongside growing discussions on the implications of artificial general intelligence~\citep{george2023chatgpt}.
We want to ask: 

{\centering\textit{From an economic perspective, will any domains disappear?}\par}

In \Cref{sec:optMD}, we show that although GenAI may reduce employment in some domains, it cannot completely replace them. This result follows from the principles of diminishing marginal returns and increasing marginal costs. Moreover, in domains with high data transferability, we observe a tendency toward overproduction, which distorts the allocation of creative effort. Such imbalances highlight the need for regulatory oversight and further research on how GenAI reshapes data generation markets. 

The growing demand for data to train large models has driven many AI companies to purchase high-quality datasets. For example, OpenAI reportedly spent about \$250 million acquiring data from The Wall Street Journal~\citep{wsj2024openai}. Industry forecasts further predict that the global generative AI market for content creation will expand from USD 15.2 billion in 2024 to USD 175.3 billion by 2033~\citep{MarketUs2024}. This surge in data demand can be traced back to the creation of ImageNet~\citep{deng2009imagenet}, which catalyzed the modern data economy. The emergence of specialized data annotation companies such as Scale AI, Lionbridge, Aurora AI, and Amazon Mechanical Turk has since transformed the landscape, pushing the market from a two-layer structure toward a three-layer structure that includes professional data brokers. This evolution raises a key question: 

{\centering    \textit{How does the three-layer market impact revenue and social welfare?}\par}

In \Cref{sec:three}, we examine two types of platforms—those that maximize revenue and those that maximize social welfare—and show that in the three-layer market, both platform revenue and social welfare decline relative to the two-layer setting, resulting in a clear lose-lose outcome. The presence of data brokers distorts creators’ incentives and reduces overall efficiency, highlighting the need for regulatory oversight to address inefficiencies in the data market.



\subsection{Our Contributions}
We summarize our contributions in three main aspects and elaborate on each below.

\paragraph{Mechanism design with competition between humans and generative AI.}
Recent studies have begun examining market equilibria that arise from competition between AI content generators and human creators~\citep{yao2024human}. Our work is the first to approach this problem from the buyer’s perspective. We study online sharing platforms that rely on both human creators and GenAI for content production, situating the analysis within the broader framework of procurement mechanism design. Going beyond the classical single-dimensional setting~\citep{myerson1981optimal}, we derive optimal mechanisms for five of six multi-dimensional environments and establish tight upper and lower bounds for revenue and social welfare in the remaining case. Our results show that although GenAI reduces the overall demand for human-created content, human creators remain indispensable from an economic standpoint. Their outputs not only satisfy subscriber preferences but also provide essential data for GenAI training. Furthermore, in domains with high data transferability, we find that overproduction is more likely to occur, reflecting the complex interactions between human and algorithmic production incentives.

\paragraph{Distinguishing between two/three-layer markets regarding data brokers.} 

Finding an efficient market structure has been a long-standing goal in digital economics~\citep{liang2018survey,agarwal2019marketplace}. For online platforms, a central question is whether to buy content directly from human creators as crowdsourcing, or through data companies. Our analysis shows that the latter, a three-layer market where the platform purchases data from a broker who sources it from humans, leads to a lose-lose outcome: both platform revenue and social welfare decline. The inefficiency stems from asymmetric information. Large, transparent contracts reveal the platform’s and broker’s valuations to creators, making price discrimination impossible and weakening incentives for efficient production. As a result, both welfare and revenue fall, underscoring the need for policy intervention.

\paragraph{Numerical experiments validate the conclusions about market dynamics.} 
Finally, we validate~our theoretical results using simulated data. In the two-layer market, the experiments reveal content overproduction in specific domains. We also compare platform revenue and social welfare across the two- and three-layer settings, illustrating the ``lose-lose'' outcome in the latter. These numerical experiments offer valuable inspiration and insights for future real-world mechanism design and implementation using real data.

\subsection{Related Works}
Our work is closely related to the literature on \emph{procurement mechanism design}~\citep{myerson1981optimal,mcafee1987auctions,laffont1993theory}, \emph{data pricing}~\citep{jia2019towards,ghorbani2019data,schoch2022cs,ai2024instrumental}, \emph{human vs. GenAI competition}~\citep{esmaeili2024strategize,yao2024human,fish2024algorithmic,duetting2024mechanism} and \emph{multi-layer markets}~\citep{fallah2024three,xu2020data}. 
We generalize~physical goods procurement to transferable digital content and analyze the ensuing human–GenAI competition across multi-layer markets. A thorough comparison and additional 
related work are provided in \Cref{app:related}.





\section{Content Procurement in (Two-Layer) Platform-Creator Markets}\label{sec:two+market}
We begin with the two-layer market, in which online platforms procure content directly from human creators. For example, platforms such as YouTube invite creators to upload content in exchange for monetary rewards, providing a representative case of this market structure.
\subsection{A Model of Content Procurement in the GenAI Era}\label{sec:setup}

An online platform features $K$ domains (e.g., sports, pop music, knowledge sharing), denoted by $[K] = {1, 2, \ldots, K}$. To serve its user base, the platform requires $d_k \in \mathbb{R}$ units of content in each domain $k \in [K]$, produced by human creators and/or GenAIs.
We assume that each human creator specializes in a single domain\footnote{Although some creators may produce content across domains, most major creators focus on one area in practice; see~\Cref{fig:one_domain} for examples.}. For instance, @dream has attracted over 30 million YouTube subscribers through Minecraft challenge videos.
For analytical convenience, each domain $k$ is represented by a single creator who incurs a cost $c_k x_k^{\rho_k}$ to produce $x_k$ units of content, following standard models of the creator economy~\citep{hu2023incentivizing,jagadeesan2024supply,yao2024human}. We interpret $x_k$ as a \emph{calibrated quantity}, meaning that production is weighted by quality metrics such as click-through rate or content accuracy~\citep{radlinski2008does}. The parameter $c_k$ captures the creator’s \emph{production efficiency}, which is private information, while $\rho_k$ reflects the population-level growth speed of the production cost and is assumed to be publicly known, as it can be inferred from market data.
This representative-creator assumption is without loss of generality: when multiple creators exist in the same domain with heterogeneous cost parameters $c_k$, they can be aggregated into a single ``meta'' creator with an effective cost coefficient $c_k$ (see~\Cref{app:multi} for details). Consistent with prior work, we assume $\rho_k \ge 1$ to reflect increasing marginal costs~\citep{marshall2009principles}. This captures the empirical observation that creative ideas become progressively harder to generate over given time, resulting in a convex production cost function.
\begin{figure}[htbp]
    \centering
    \subfigure{
        \includegraphics[width=0.4\textwidth]{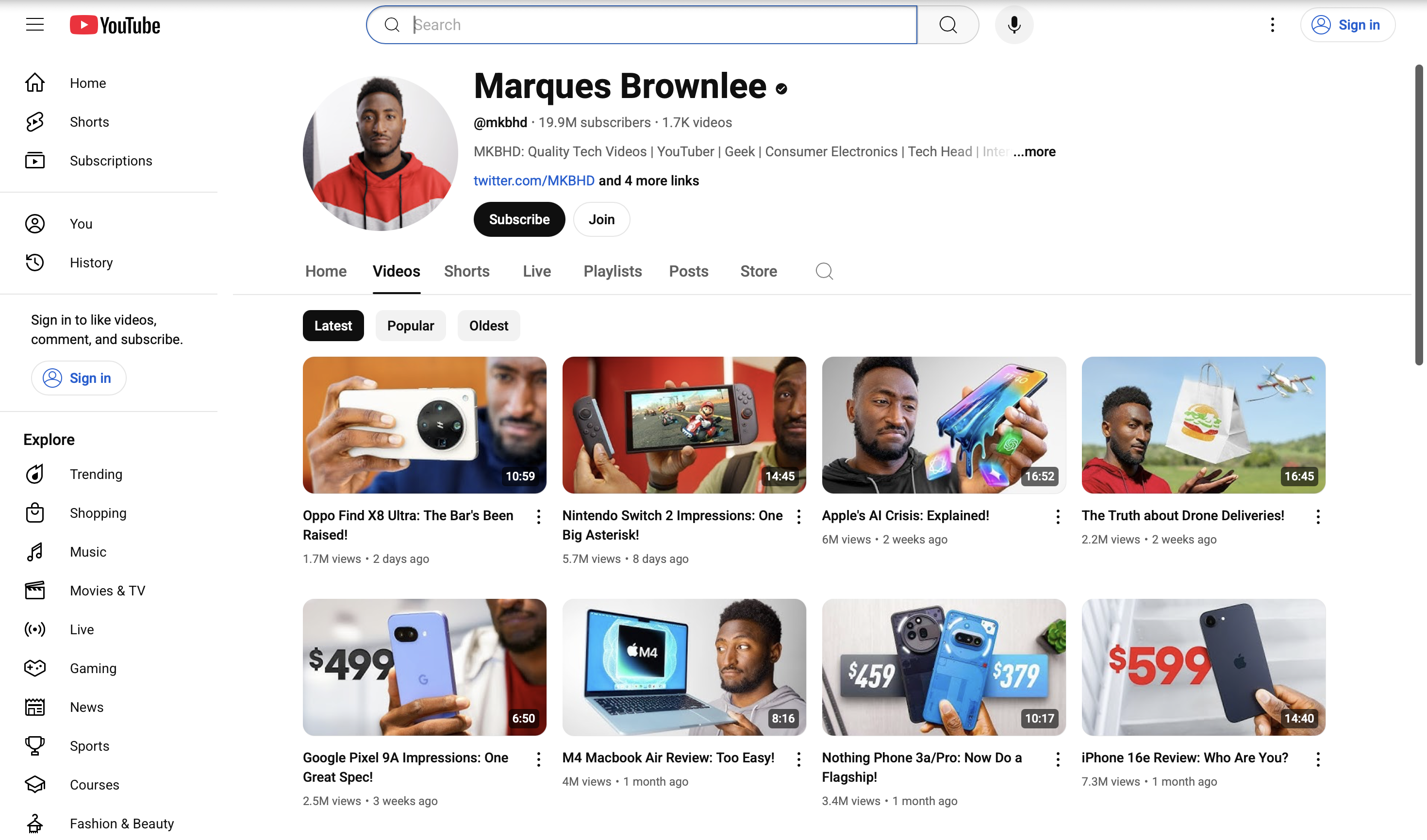}
    }
    \subfigure{
        \includegraphics[width=0.4\textwidth]{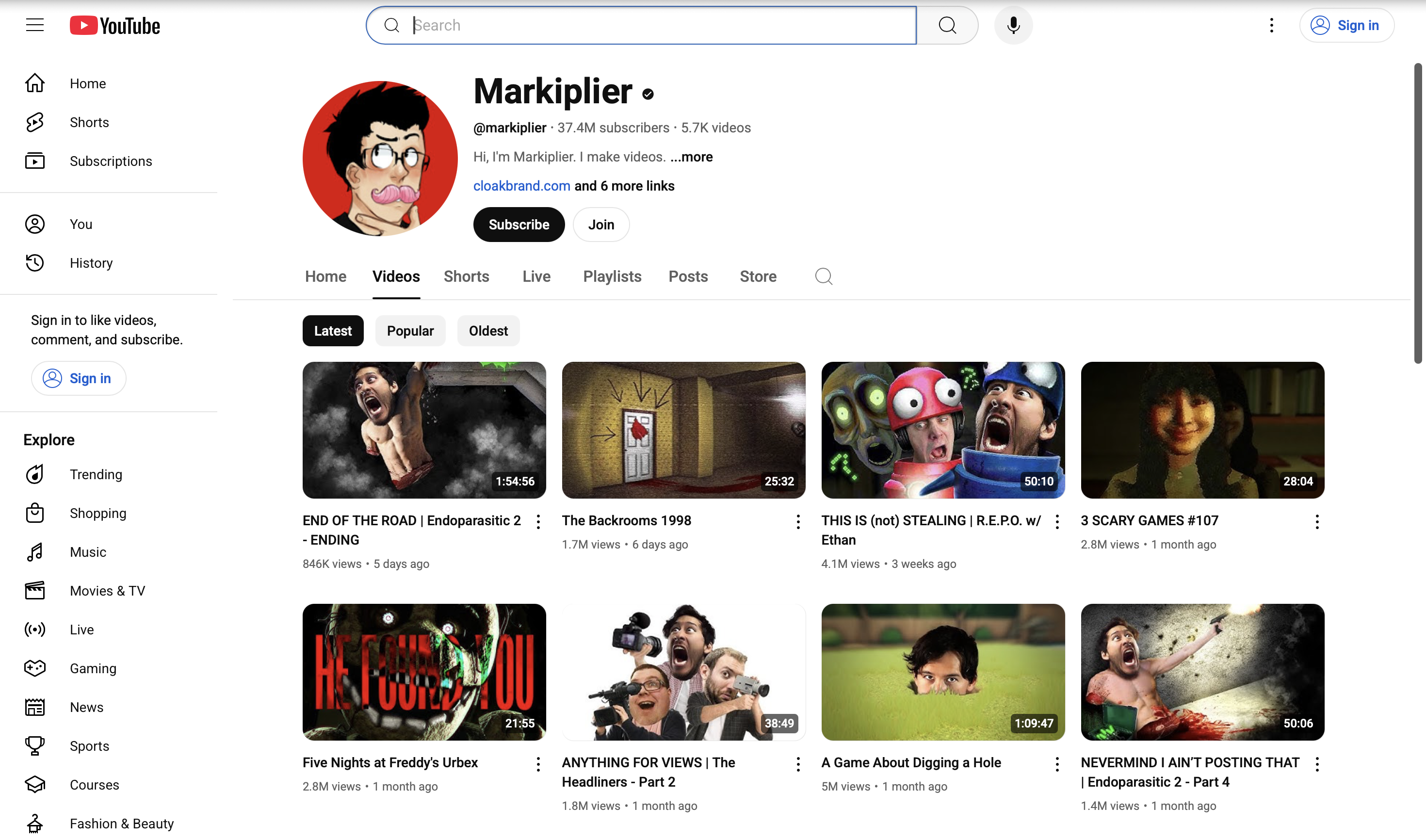}
    }
    \caption{@mkbhd focuses on tech reviews, and @markiplier is dedicated to gaming content, especially horror-themed Let's Plays.}
    \label{fig:one_domain}
\end{figure}


To meet its demand requirements $(d_1, \ldots, d_K)$, the platform (e.g., a video-sharing site such as YouTube) asks human creators to report their private cost parameters $c = (c_1, \ldots, c_K)$. To account for possible misreporting, let $\hat c = (\hat c_1, \ldots, \hat c_K)$ denote the reported costs. The mechanism then proceeds according to the standard timeline in mechanism design~\citep{hart1988contract,salant1989inducing}:
\begin{enumerate}[nosep]
    \item The platform commits to a mechanism $x(\cdot)$ and $p(\cdot)$;
    \item The creators' costs $\hat c$ are then elicited; 
    \item The platform implements ``allocation outcomes'' $x(\hat c)=(x_1(\hat c),...,x_K(\hat c))$ by procuring $x_k(\hat c)$ amount of  domain-$k$ content from creator $k$ for each $k$;
    \item The payments  $p(\hat c)=(p_1(\hat c),...,p_K(\hat c))$ is then executed by paying $p_k(\hat c)$ to creator $k$.   
\end{enumerate}

While procurement auctions have been extensively studied in prior work~\citep{laffont1993theory}, our setting differs by incorporating the platform’s ability to use GenAI to augment content creation at negligible cost \rebuttal{(We discuss the non-negligible case in \Cref{app:cost})}. Specifically, we assume that the platform can employ GenAI, together with human-created content, to partially meet demand with minimal additional expense relative to human creation. Motivated by the scaling laws of GenAI’s creative capacity~\citep{kaplan2020scaling,yao2024human}, we model this process as follows: $x_i$ units of content in domain $i$ enable GenAI to generate $\mu_{ik} x_i^{\gamma_{ik}}$ units of essentially new content in domain $k$, where $\mu_{ik}$ captures the transferability from domain $i$ to domain $k$, and $\gamma_{ik}$ represents the capability of employed GenAI tool. \rebuttal{\citet{yao2024human} adopts a similar assumption to analyze the symbiosis or conflict between GenAI and human creators.}
This formulation highlights three key distinctions between GenAI, human creators, and traditional machines. First, GenAI exhibits transferability, allowing knowledge from one domain to benefit others with varying efficiency. Second, its marginal cost of production is negligible compared with human effort. Third, GenAI’s productivity depends on human creation, unlike traditional machines. The power-law relationship between model capability and training data is well documented in the GenAI literature~\citep{kaplan2020scaling}, and we assume that these capability parameters are known, as they can be empirically estimated~\citep{alabdulmohsin2022revisiting,goyal2024scaling,lin2024selecting}. Finally, as is standard, we restrict $\gamma_{ik} \in [0,1]$ to reflect the diminishing efficiency of data. Accordingly, given reported costs $\hat c$, the platform’s design must satisfy the following demand constraints: 

{\centering$ \text{Demand: }
    x_k(\hat c)+\sum_{i=1}^K\mu_{ik}x_i(\hat c)^{\gamma_{ik}}\ge d_k\ \text{for any }k\in[K].$\par}

The revelation principle~\citep{roughgarden2010algorithmic} implies that, without loss of generality, we can restrict attention to incentive-compatible (IC) mechanisms in which truthful reporting constitutes a Nash equilibrium~\citep{myerson1986multistage}. \rebuttal{In \Cref{app:terms}, we provide illustrations about the economic background for these and other relevant terms.} Accordingly, the incentive-compatibility condition is given by

{\centering$ {\text{IC: }} \E_{-k}[-c_kx_k^{\rho_k}(c)+p_k(c)]\ge\E_{-k} [-c_kx_k^{\rho_k}(\hat c_k,c_{-k})+p_k(\hat c_k,c_{-k})]\ \text{for any }k\in[K],$\par}

where $c_{-k}$ is $c$ except the $k$-th entry. The expectation is taken over domains $i\neq k$. The nonlinearity of the cost function hinders the application of existing methods~\citep{myerson1981optimal}. We first focus on the Nash equilibrium~\citep{nash1950equilibrium}, assuming no collusion, so it suffices to consider one-shot deviations. We will later extend the analysis to more complex settings involving creator union and data brokers. An interesting future direction is to extend it to correlated equilibrium and coarse correlated equilibrium~\citep{aumann1987correlated} with limited signals.

Moreover, since human creators can always reject any allocation–payment pair offered by the platform and quit the market, the mechanism must also satisfy the individual rationality (IR) condition:

{\centering${\text{IR: }}    \E_{-k}[-c_kx_k^{\rho_k}(c)+p_k(c)]\ge 0\ \text{for any }k\in[K].$\par}

We consider the following class of allocation rules, which are natural when items are substitutable, as in content production. In a valid mechanism, a creator’s production $x_k$ decreases with her own cost and increases weakly as the costs of other creators rise.
\begin{definition}[Valid Mechanisms]\label{def:refine}
    We say a mechanism is  valid if its allocation rule satisfies
    \begin{itemize}[nosep]
        \item $x_k(c_k,c_{-k})\ge x_k(\tilde c_k,c_{-k})$ for any $\tilde c_k\ge c_k$ and 
        \item $x_k(c_k,c_{-k})\le x_k(c_k,\tilde c_{-k})$ if  $\tilde c_i\ge c_i$ for any $i\neq k$.
    \end{itemize}
\end{definition}

In practice, platforms typically pursue one of two objectives. Some aim to maximize revenue, or equivalently, minimize total cost. Others, such as large technology firms like Google~\citep{google_social_impact} and Meta~\citep{meta_social_impact}, emphasize social welfare, defined as the aggregate utility of all market participants, or equivalently, the minimization of creators’ total costs, since monetary transfers do not affect overall welfare. These considerations motivate the two optimization objectives:
\begin{itemize}[nosep]
    \item Type 1: $\min_{x,p} \E_c[\sum_{k=1}^K p_k(c)]$;
    \item Type 2: $\min_{x,p} \E_c[\sum_{k=1}^K c_k x_k^{\rho_k}(c)]$.
\end{itemize}
For the second objective, when multiple mechanisms yield the same level of social welfare, we select the one with the minimum total payment, breaking ties in favor of the platform.


For simplicity, we impose the following assumption on the distribution of cost parameter $c$.
\begin{assumption}[Independent cost]\label{ass:cost}
    We assume that the cost components $c_1,...,c_K$ are independent with p.d.f. $f_1(c_1),...,f_K(c_K)$ and corresponding c.d.f. as $F_1(c_1),...,F_K(c_K)$, respectively. Additionally, we define $f(c)=\prod_{k=1}^K f_k(c_k)$, $F(c)=\prod_{k=1}^K F_k(c_k)$, $f_{-i}(c_{-i})=\prod_{k\neq i}f_k(c_k)$ and $F_{-i}(c_{-i})=\prod_{k\neq i}F_k(c_k)$.
\end{assumption}

To avoid lengthy discussions of corner cases, we impose standard regularity assumptions on the above distributions, as commonly adopted in both the economics and machine learning literature~\citep{myerson1981optimal,wang2019optimism,ai2022reinforcement}.
\begin{assumption}[Bounded cost and density]\label{ass:lower}
    We assume that for any $k\in[K]$, $c_k$ is bounded, say $c_k\in[a_k,b_k]$, and $f_k(c_k)$ is lower bounded from 0.
\end{assumption}
\begin{assumption}[Monotone reverse hazard rate]\label{ass:concave}
    We assume that for all $k\in[K]$, it holds that $F_k(\cdot)$ is log-concave, in other words, the ratio $\frac{f_k(\cdot)}{F_k(\cdot)}$ is monotone decreasing.
\end{assumption}
\Cref{ass:lower} can always be ensured through distribution truncation, while \Cref{ass:concave} is a standard and widely popular assumption in economics~\citep{kleiber2003statistical,bagnoli2006log,wang2024relyingmetricsevaluatedagents}. These assumptions are mild and hold for many commonly used distributions, such as the truncated Gaussian and the uniform distribution~\citep{golrezaei2019dynamic}.



\subsection{The Optimal Mechanism $\cM_1$ for Revenue-Maximizing Procurement}\label{sec:optMD}
We first consider the case in which the platform aims to maximize its own revenue. Accordingly, the platform faces the following revenue-maximization (or cost-minimization) problem:
\begin{equation}\label{eqn:optimization}
\begin{array}{rl}
Rev_1= \max_{x,p}\ & \E_c[-\sum_{k=1}^K p_k(c)]\\
 \text{s.t.}\  &   x_k( c)+\sum_{i=1}^K\mu_{ik}x_i(c)^{\gamma_{ik}}\ge d_k \text{ and }x_k(c)\ge 0 \ \text{for any }k\in[K]\\
& \E_{-k}[-c_kx_k^{\rho_k}(c)+p_k(c)]\ge\E_{-k}[ -c_kx_k^{\rho_k}(\hat c_k,c_{-k})+p_k(\hat c_k,c_{-k})] \\
&    \E_{-k}[ -c_kx_k^{\rho_k}(c)+p_k(c)]\ge 0,\\
\end{array}
\end{equation}
where the expectation is taken over $c\sim F(c)$. In addition, the corresponding social welfare is $SW_1=\E_c[-\sum_{k=1}^K c_kx_k^{\rho_k}(c)]$.
We now present the optimal mechanism $\cM_1$ under information asymmetry between human creators and the platform in \Cref{alg:opt-MD}.
\begin{algorithm}[htbp]
   \caption{Mechanism $\cM_1$ for revenue-maximizing platforms without union in the two-layer market.}
    \label{alg:opt-MD}
\begin{algorithmic}
    \STATE {\bfseries Input:} Report $\hat c$.
    \STATE Calculate virtual cost: $vc_k(\hat c)=\hat c_k+\frac{F_k(\hat c_k)}{f_k(\hat c_k)}$ for any $k\in[K]$.
    \STATE Call some oracle to solve the auxiliary optimization problem
    \begin{equation*}
\begin{aligned}
 y=\argmin_{y}\ & \sum_{k=1}^K vc_k(\hat c)y_k\\
 \text{s.t.}\  &   y_k^{1/\rho_k}+\sum_{i=1}^K\mu_{ik}y_i^{\gamma_{ik}/\rho_i}\ge d_k \text{ and } y_k\ge 0\ \text{for any }k\in[K].
\end{aligned}
\end{equation*}
    \STATE Calculate the allocations: $x_k(\hat c)=y_k^{1/\rho_k}$ for any $k\in[K]$.
    \STATE Calculate the payments: $p_k(\hat c)=\hat c_k y_k+\int_{\hat c_k}^{b_k}y_k(\hat c_{1},...,\hat c_{k-1},t_k,\hat c_{k+1},...,\hat c_K)dt_k$ for any $k\in[K]$.
    \STATE {\bfseries Output:} Allocation-payment pair $(x(\hat c),p(\hat c))=((x_1(\hat c),...,x_K(\hat c)),(p_1(\hat c),...,p_K(\hat c)))$.
\end{algorithmic}
\end{algorithm}

\begin{theorem}\label{thm:opt} 
    Mechanism $\cM_1$ is a valid mechanism satisfying IC and IR conditions\footnote{Since \Cref{alg:opt-MD} is a valid mechanism, it actually satisfies ex-post IC and IR conditions.} and achieves the highest expected revenue/lowest cost for the platform. 
\end{theorem}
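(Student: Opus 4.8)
The plan is to follow the Myerson-style program, adapted to this nonlinear procurement setting and organized around the change of variables $y_k = x_k^{\rho_k}$ that linearizes each creator's cost in the effective quantity $y_k$. After this substitution, creator $k$'s interim utility $U_k(c_k) = \E_{-k}[p_k(c) - c_k y_k(c)]$ is affine in the type $c_k$ with slope $-Q_k(c_k)$, where $Q_k(c_k) = \E_{-k}[y_k(c)]$, placing us in the classical single-parameter regime for each creator.

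First I would establish the IC characterization. By the envelope theorem any IC mechanism satisfies $U_k'(c_k) = -Q_k(c_k)$, and IC is equivalent to $Q_k$ being non-increasing in $c_k$ together with the payment identity $U_k(c_k) = U_k(b_k) + \int_{c_k}^{b_k} Q_k(t)\,dt$. Since raising the rent of the highest-cost type only inflates payments, cost minimization forces $U_k(b_k) = 0$, so IR binds at $b_k$ and, because $Q_k \ge 0$, IR then holds at every type. Plugging the resulting interim payment into the objective and integrating by parts (Fubini) converts the expected total payment into the expected virtual cost $\E_c[\sum_k vc_k(c_k)\,y_k(c)]$, with $vc_k(c_k) = c_k + F_k(c_k)/f_k(c_k)$. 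This is exactly the objective of the auxiliary program in \Cref{alg:opt-MD}, so the minimization decouples across realizations and reduces to the pointwise program $\min_y \sum_k vc_k(\hat c)\,y_k$ subject to the demand and nonnegativity constraints.

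Next I would verify that the pointwise minimizer is valid and implementable. The feasible region is convex: each demand constraint is the superlevel set $\{y : g_k(y) \ge d_k\}$ of the concave function $g_k(y) = y_k^{1/\rho_k} + \sum_i \mu_{ik} y_i^{\gamma_{ik}/\rho_i}$, concavity following from $\rho_i \ge 1$ and $\gamma_{ik} \in [0,1]$, so every exponent lies in $(0,1]$. Minimizing a linear functional over a fixed convex set, a two-point revealed-preference inequality $\langle vc' - vc,\, y' - y\rangle \le 0$ immediately yields own-monotonicity: $y_k^*$ is non-increasing in $vc_k$; since \Cref{ass:concave} makes $vc_k$ strictly increasing in $c_k$ and $x_k = y_k^{1/\rho_k}$ is increasing in $y_k$, the first bullet of \Cref{def:refine} follows and interim monotonicity of $Q_k$ is inherited. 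With monotonicity established, I would check that the algorithm's payment $p_k(\hat c) = \hat c_k y_k(\hat c) + \int_{\hat c_k}^{b_k} y_k(\hat c_{-k}, t)\,dt$ is precisely the ex-post Myerson payment, certifying ex-post IC and IR and matching the interim identity. Optimality then closes the argument: every IC and IR mechanism pays at least the expected virtual cost, and $\cM_1$ attains this lower bound pointwise through an implementable allocation.

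The main obstacle is the cross-monotonicity half of validity, namely that $x_k$ weakly increases when some $c_i$ with $i \ne k$ rises. The simple two-point revealed-preference argument controls only the coordinate whose cost changed, so it does not by itself deliver cross-monotonicity; that property genuinely relies on the substitutes structure encoded in the separable-concave form $g_k(y) = \sum_i g_{ki}(y_i)$. I would prove it via comparative statics on the KKT system: at the optimum the binding demand constraints give $vc_k = \sum_j \lambda_j\, g'_{kj}(y_k)$ with multipliers $\lambda_j \ge 0$, and raising $vc_i$ drives $y_i$ down, which through the coupling coefficients $\mu_{ij}$ slackens the constraints that $y_k$ helps satisfy and pushes $y_k$ up. Making this rigorous, including the bookkeeping over which constraints are active and the handling of ties or non-uniqueness at kinks, is the technical heart of the proof; everything else is the standard envelope-plus-integration-by-parts machinery.
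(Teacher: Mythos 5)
Your proposal follows the same route as the paper's proof for the substantive claims: the change of variables $y_k=x_k^{\rho_k}$, the envelope characterization $U_k(c_k)=U_k(b_k)+\int_{c_k}^{b_k}Q_k(t)\,dt$ with $U_k(b_k)=0$ forced by cost minimization, the integration by parts that turns expected payments into expected virtual cost, the convexity of the feasible region via concavity of $g_k$, and the pointwise convex program implemented with the Myerson payment. Your two-point revealed-preference inequality for own-monotonicity is correct and is actually more explicit than what the paper writes; combined with the monotone reverse hazard rate assumption it yields the first bullet of validity, and from there implementability, ex-post IC, ex-post IR, and optimality go through exactly as in the paper.

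The one genuine gap is the cross-monotonicity half of validity, which you correctly isolate but do not close. Two remarks. First, the intuition in your sketch has the direction reversed: lowering $y_i$ \emph{tightens} (not slackens) every constraint $g_j(y)\ge d_j$ in which $y_i$ appears, and it is this tightening that would be supposed to force other coordinates up. Second, and more seriously, the KKT comparative-statics route is not guaranteed to deliver the claim once $K\ge 3$: the optimum can jump between different covering patterns of binding constraints, and a coordinate that shares no binding constraint with $y_i$ after the jump can strictly decrease when $vc_i$ rises. For instance, with constraints $y_1+y_2\ge 1$ and $y_2+y_3\ge 1$ (attainable with $\rho\equiv 1$, $\gamma\equiv 1$, suitable $\mu$, and $d_3=0$), moving the virtual cost vector from $(1,3,1)$ to $(3,3,1)$ sends the minimizer from $(1,0,1)$ to $(0,1,0)$, so $y_3$ falls as $c_1$ rises, violating the second bullet of the validity definition. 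To be fair, the paper itself dispatches this point with a single sentence (``$\cM_1$ is valid for the same reason''), so your proposal is no less complete than the published argument here; but a full proof of the theorem as stated would require either establishing cross-monotonicity under additional structure or observing that only own-monotonicity is actually used for IC, IR, and revenue optimality.
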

While \Cref{thm:opt} is formulated in a single-parameter mechanism design setting, the main challenge in its proof arises from the \emph{nonlinearity} of the designer’s allocation constraint, namely $y_k^{1/\rho_k} + \sum_{i=1}^K \mu_{ik} y_i^{\gamma_{ik}/\rho_i} \ge d_k$ in \Cref{alg:opt-MD}. This constraint, induced by the nonlinear cost function and data transferability, prevents the direct application of \rebuttal{the standard analysis by \citet{myerson1981optimal}}. We address this difficulty in two steps. First, by applying a nonlinear transformation, we replace the search for the optimal allocation $x(\cdot)$ with that for $y(\cdot)$, a monotonic transformation of $x(\cdot)$, which linearizes the cost term. Second, we show that the nonlinear constraints introduced by GenAI still preserve convexity. Together, these steps reformulate Problem~\ref{eqn:optimization} into a convex optimization problem, for which we establish the existence of an efficient solution algorithm.

\begin{proposition}\label{thm:complexity}
\Cref{alg:opt-MD} can be implemented to output an $\epsilon$-optimal mechanism\footnote{It means compared to the optimal mechanism, the extra loss is at most $\epsilon$.} in $\cO(\frac{1}{\epsilon^4})$ time. Furthermore, if $\max_i\{\frac{\gamma_{ik}}{\rho_i}\}<1$ for any $k\in[K]$, the computational complexity reduces to $\cO(\frac{1}{\epsilon^3})$. 
\end{proposition}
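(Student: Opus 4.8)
The plan is to bound separately the two computational primitives that \Cref{alg:opt-MD} invokes and then combine them. Running the mechanism requires (i) one oracle call to solve the auxiliary program for the allocation $y$, and (ii) evaluating the $K$ payment integrals $\int_{\hat c_k}^{b_k} y_k(\hat c_1,\dots,t_k,\dots,\hat c_K)\,dt_k$, each of which re-invokes the oracle along the one-dimensional ray in which only the $k$-th reported cost varies. As noted before \Cref{thm:complexity} and in the proof of \Cref{thm:opt}, each oracle call is a convex program: the objective $\sum_k vc_k(\hat c)\,y_k$ is linear, and every constraint $y_k^{1/\rho_k}+\sum_i\mu_{ik}y_i^{\gamma_{ik}/\rho_i}\ge d_k$ is the superlevel set of a concave function since each exponent $1/\rho_k,\ \gamma_{ik}/\rho_i$ lies in $(0,1]$. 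The total cost is therefore (number of oracle calls) $\times$ (cost per call), up to the constant factor $K$, and I expect task (ii) to dominate, so the accounting reduces to bounding the number of quadrature nodes and the per-call solve cost.

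First I would bound the quadrature. By \Cref{def:refine} the allocation $x_k$, and hence $y_k=x_k^{\rho_k}$, is monotone in the own-cost coordinate $t_k$, and by \Cref{ass:lower} it is uniformly bounded on $[\hat c_k,b_k]$. For a bounded monotone integrand a uniform-grid rule with $N$ nodes incurs error $\cO(1/N)$, so $N=\cO(1/\epsilon)$ nodes suffice to approximate each payment integral to additive error $\cO(\epsilon)$. It then remains to propagate tolerances: solving each nodal convex program to accuracy $\cO(\epsilon)$ must keep the compounded error in the allocation and in all payments at $\cO(\epsilon)$, so that the output is an $\epsilon$-optimal mechanism. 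Here I would verify that a small constraint violation together with a small objective gap translates into an $\cO(\epsilon)$ revenue loss, using the boundedness of the virtual costs guaranteed by \Cref{ass:lower}.

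Next I would bound the per-call solve cost. Treating each call as a convex program with linear objective and the concave power-law constraints above, a projected first-order (subgradient) scheme reaches an $\epsilon$-optimal, $\epsilon$-feasible point in $\cO(G^2R^2/\epsilon^2)$ iterations, where $R$ bounds the feasible diameter (finite by \Cref{ass:lower}) and $G$ bounds the constraint (sub)gradients that the method encounters. In general I expect the geometry to force $G$ to grow mildly with the target accuracy, yielding a per-call cost of $\cO(1/\epsilon^3)$; when $\max_i\{\gamma_{ik}/\rho_i\}<1$ for every $k$, no transfer term is linear (a linear term would require $\gamma_{ik}=\rho_i=1$), so all transfer terms are strictly concave and the feasible boundary is uniformly curved with a well-separated optimizer, which I would leverage to obtain $G=\cO(1)$ and a per-call cost of $\cO(1/\epsilon^2)$. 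Multiplying either per-call bound by the $\cO(1/\epsilon)$ quadrature nodes (and absorbing the constant $K$) gives the claimed $\cO(1/\epsilon^4)$, and $\cO(1/\epsilon^3)$ under the extra condition.

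The hard part will be the per-call analysis: quantifying how the exponents $\{1/\rho_k,\gamma_{ik}/\rho_i\}$ govern the constraint-gradient bound $G$, and showing precisely that the strict inequality $\max_i\{\gamma_{ik}/\rho_i\}<1$ removes the degeneracy (a flat constraint face arising when some $\gamma_{ik}/\rho_i=1$) that is responsible for the extra factor of $1/\epsilon$. A secondary but necessary technicality is the tolerance bookkeeping of the previous step, ensuring the errors from inexact nodal solves and from discretizing the payment integral both stay $\cO(\epsilon)$; this relies on the monotonicity from \Cref{def:refine} and the boundedness from \Cref{ass:lower}.
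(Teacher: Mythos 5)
Your high-level decomposition (cost $=$ number of integral-evaluation nodes $\times$ per-node convex-solve cost, with the payment integrals dominating) matches the paper's, and your quadrature step is fine — indeed, by monotonicity of $y_k$ in its own coordinate (validity) and boundedness ($y_k\le d_k^{\rho_k}$), a uniform Riemann sum with $\cO(1/\epsilon)$ nodes is \emph{sharper} than what the paper uses (the paper estimates each payment integral by Monte Carlo, which costs $\cO(1/\epsilon^2)$ samples). The genuine gap is in your per-call analysis. You assert a per-call cost of $\cO(1/\epsilon^3)$ on the grounds that the subgradient bound $G$ "grows mildly with the target accuracy," and then claim the strict inequality $\max_i\{\gamma_{ik}/\rho_i\}<1$ yields $G=\cO(1)$ and hence $\cO(1/\epsilon^2)$ per call. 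Neither claim is supported: the standard projected-subgradient guarantee for a linear objective over a bounded convex feasible set is $\cO(G^2R^2/\epsilon^2)$ with $G$ and $R$ \emph{constants} (here $R\le\max_k d_k^{\rho_k}$), so the generic per-call cost is $\cO(1/\epsilon^2)$, not $\cO(1/\epsilon^3)$; and the role of $\max_i\{\gamma_{ik}/\rho_i\}<1$ is not to tame $G$ but to make every constraint function $y\mapsto y_k^{1/\rho_k}+\sum_i\mu_{ik}y_i^{\gamma_{ik}/\rho_i}$ \emph{strictly} concave (all diagonal Hessian entries strictly negative), so the feasible region is a \emph{strongly convex set}; minimizing a linear functional over a strongly convex set admits an accelerated $\cO(1/\epsilon)$ rate. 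That is exactly how the paper allocates the factors: $\cO(1/\epsilon^2)$ Monte Carlo samples $\times$ $\cO(1/\epsilon^2)$ per subgradient solve $=\cO(1/\epsilon^4)$, improving to $\cO(1/\epsilon^2)\times\cO(1/\epsilon)=\cO(1/\epsilon^3)$ under strong convexity.

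Concretely, the step you yourself flag as "the hard part" — showing that the exponents force an extra $1/\epsilon$ in $G$ and that strict concavity removes it — is where the argument would fail: there is no degenerate "flat face" penalty of that form in subgradient complexity, and if you instead use the correct $\cO(1/\epsilon^2)$ per-call bound your accounting gives $\cO(1/\epsilon^3)$ in general and (with the strongly-convex-set rate) $\cO(1/\epsilon^2)$ under the extra condition, which no longer matches the two bounds you are asked to prove (though it would still establish them as upper bounds). To repair the proposal along the paper's lines, replace the $G$-growth story with: (i) the Hessian computation showing the feasible set is convex, and strongly convex when $\max_i\{\gamma_{ik}/\rho_i\}<1$; (ii) the $\cO(1/\epsilon^2)$ versus $\cO(1/\epsilon)$ per-call rates for linear minimization over convex versus strongly convex sets; and (iii) an explicit $\cO(1/\epsilon^2)$-node rule (Monte Carlo as in the paper, or your deterministic quadrature with the weaker node count) for the payment integrals. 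Your tolerance-propagation remark (inexact nodal solves contribute $\cO(\epsilon)$ total error because each node's weight is $\cO(1/N)$) is correct and worth keeping.
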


Randomized payments can further reduce the computational complexity to $\cO(1/\epsilon^2)$ and $\cO(1/\epsilon)$, as detailed in \Cref{cor:one-sample} of \Cref{app:reduction}. These results demonstrate that online platforms can efficiently maximize revenue even when incorporating GenAI into content creation. This extends the scope of procurement mechanism design and provides a theoretical foundation for pricing in the rapidly evolving GenAI economy.

Furthermore, we derive the following corollary from \Cref{thm:opt}, which shows that original human-created content remains essential in every domain and that complete substitution does not occur. However, unlike markets without GenAI, overproduction may emerge as a new phenomenon.
\begin{corollary}\label{corr:positive}
    Assuming data has transferability, i.e., all $\mu_{ik}$ and $\gamma_{ik}$ are not zero, 
    then no domain will disappear, i.e.,
    all $x_k(c)$ will be positive, no matter the value of $c$.
    However, some demand constraints will be non-binding, i.e., some $ x_k( c)+\sum_{i=1}^K\mu_{ik}x_i(c)^{\gamma_{ik}}>d_k$. It shows that with the development of GenAI, some domains will be overproduced to augment knowledge transfer.
\end{corollary}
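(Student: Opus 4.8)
The plan is to work directly with the convex reformulation underlying \Cref{alg:opt-MD}. Since the mechanism sets $x_k(\hat c)=y_k^{1/\rho_k}$, we have $x_k>0$ if and only if $y_k>0$, so it is equivalent (and more transparent) to analyze the optimum of the cost-minimization program $\min_x \sum_k vc_k\, x_k^{\rho_k}$ subject to $x_k+\sum_i \mu_{ik}x_i^{\gamma_{ik}}\ge d_k$ and $x_k\ge 0$. As established in the proof of \Cref{thm:opt}, this program is convex: the objective is convex because $\rho_k\ge 1$, and the feasible set is convex because the left-hand side of each demand constraint is concave (a linear term $x_k$ plus the concave maps $x_i\mapsto x_i^{\gamma_{ik}}$ with $\gamma_{ik}\le 1$). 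Taking any allocation with all coordinates sufficiently large gives a strictly feasible point, so Slater's condition holds and the KKT conditions are necessary and sufficient. I would introduce multipliers $\lambda_k\ge 0$ for the demand constraints and $\nu_k\ge 0$ for $x_k\ge 0$, recording for each $k$ with $x_k>0$ the stationarity identity $vc_k\rho_k x_k^{\rho_k-1}=\lambda_k+\sum_j \lambda_j\mu_{kj}\gamma_{kj}x_k^{\gamma_{kj}-1}$.

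To prove the first claim (no domain disappears), I would rule out $x_k=0$ at the optimum by a boundary perturbation. Raising $x_k$ from $0$ to a small $\delta>0$ increases the objective by only $vc_k\delta^{\rho_k}$, while relaxing domain $k$'s own constraint by $\delta$ and every other constraint $j$ by $\mu_{kj}\delta^{\gamma_{kj}}$. Because $\gamma_{kj}\le 1\le\rho_k$, the gained slack, of order $\delta^{\gamma_{kj}}$ (or order $\delta$ on the own constraint when $\rho_k>1$), is of strictly larger order than the incurred cost $\delta^{\rho_k}$ whenever $\gamma_{kj}<1$ or $\rho_k>1$. This extra slack can be traded back by slightly lowering production in a domain whose constraint binds, producing a strictly cheaper feasible allocation and contradicting optimality; since transferability gives $\mu_{kj}>0$ for every pair and at least one demand constraint must bind at any cost-minimizer (otherwise all production could be scaled down), such a binding domain always exists. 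Equivalently, the cross term $\sum_j\lambda_j\mu_{kj}\gamma_{kj}x_k^{\gamma_{kj}-1}$ in the KKT identity diverges as $x_k\downarrow 0$, which is incompatible with its finite left-hand side. This is precisely the interplay of diminishing marginal returns ($\gamma_{kj}<1$) and increasing marginal cost ($\rho_k>1$) cited in \Cref{corr:positive}; the lone degenerate case, $\rho_k=1$ together with $\gamma_{kj}=1$ on all binding constraints, is excluded once data is genuinely transferable with diminishing efficiency.

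For the overproduction claim I would argue that not all demand constraints can remain binding, by exhibiting the mechanism through which slack arises. Having shown every $x_k>0$, each domain's content spills over into all others, so the supply to domain $k$ is $x_k+\sum_i\mu_{ik}x_i^{\gamma_{ik}}$ with every summand strictly positive. I would then show the cost-minimizer over-serves some domain: whenever a domain $i$ transfers efficiently to others (large $\mu_{ij},\gamma_{ij}$) while its own demand $d_i$ is comparatively small, the optimum drives $x_i$ upward to satisfy the other domains cheaply through transfer, so that $x_i+\sum_{i'}\mu_{i'i}x_{i'}^{\gamma_{i'i}}>d_i$. I would formalize this by checking that the KKT system admits a solution with $\lambda_i=0$: setting $\lambda_i=0$ forces $vc_i\rho_i x_i^{\rho_i-1}=\sum_{j\ne i}\lambda_j\mu_{ij}\gamma_{ij}x_i^{\gamma_{ij}-1}$, which is satisfiable exactly when $x_i$ is large enough that $d_i$ is strictly over-met, confirming slack at domain $i$ and matching the statement that high-transferability domains tend to be overproduced.

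The main obstacle is twofold. First, the no-disappearance argument cannot be run through plain stationarity, because the constraint gradients blow up at $x_k=0$; it must be phrased as a genuine feasible-direction comparison of orders $\delta^{\gamma_{kj}}$ against $\delta^{\rho_k}$, and the degenerate fully-linear case must be isolated. Second, and more delicate, overproduction is not universal: for $K=1$, or in fully symmetric instances where $x_1=\cdots=x_K$, all constraints bind, and strict complementarity makes ``all binding'' an open condition in parameter space. Hence the second claim is inherently existential/conditional rather than generic, and the real work is to pin down the transferability asymmetry (or small own-demand condition) under which some $\lambda_k=0$ at the optimum. Stating the overproduction conclusion precisely, rather than as a universally quantified assertion, is the crux of the argument.
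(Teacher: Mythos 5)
Your argument for the first claim is essentially the paper's, just carried out in $x$-space instead of $y$-space. The paper works with the linearized program $\min_y \sum_k vc_k y_k$ and observes that at $y_k=0$ the active demand constraints have infinite slope in the $k$-th coordinate, which cannot be balanced against the finite virtual cost $vc_k$ (finite by \Cref{ass:lower}); your feasible-direction perturbation trading a cost of order $\delta^{\rho_k}$ against slack of order $\delta^{\gamma_{kj}}$ is the rigorous rendering of that same marginal comparison, and your remark that the KKT cross term diverges as $x_k\downarrow 0$ is literally the paper's one-line argument. You are in fact more careful than the paper on one point: the corollary only assumes $\mu_{ik},\gamma_{ik}\neq 0$, so the degenerate case $\rho_k=1$ with $\gamma_{kj}=1$ (where the orders coincide and the slope at zero is finite) is not excluded by the stated hypotheses; the paper silently assumes it away, while you flag it explicitly. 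Your worry about whether lowering a compensating $x_j$ might violate \emph{other} constraints is handled by noting that raising $x_k$ creates strictly positive slack in \emph{every} constraint (all $\mu_{kj}>0$), so a sufficiently small reduction of $x_j$ stays feasible — you gesture at this but it deserves one explicit line.

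On the second claim, your diagnosis is correct and is the more honest account. The paper's own ``proof'' of the overproduction clause is essentially a restatement of the definition of binding versus non-binding (``when the optimal $y$ is on the boundary of the $k$-th set, the constraint is tight; otherwise it is loose'') and does not establish that some constraint is non-binding for all parameter values; indeed it cannot, since for $K=1$ or in symmetric instances every constraint binds, exactly as you observe. The paper in effect defers this clause to the numerical experiments in \Cref{sec:num}. Your proposal to exhibit slack via a KKT solution with $\lambda_i=0$ under an asymmetry or small-own-demand condition is the right way to turn the clause into a provable statement, but note that this makes the conclusion conditional, whereas \Cref{corr:positive} asserts it unconditionally — so what you would prove is a corrected version of the statement, not the statement as written. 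This is a legitimate gap in the paper rather than in your attempt.
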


\subsection{The Optimal Mechanism $\cM_2$ for Welfare-Maximizing Procurement}\label{sec:two+SW}
To maximize social welfare, an intuitive way is to allocate $x$ corresponding to $\argmin_x\sum_k c_k x_k^{\rho}(c)$ for each $c$. This yields the following optimization formulation and the corresponding revenue for such kind of platforms is $Rev_2= \E_c[-\sum_{k=1}^K p_k(c)]$.
\begin{equation}
\begin{array}{rl}
SW_2= \max_{x,p}\ & \E_c[-\sum_{k=1}^K c_k x_k^{\rho_k}(c)]\\
 \text{s.t.}\  &   x_k( c)+\sum_{i=1}^K\mu_{ik}x_i(c)^{\gamma_{ik}}\ge d_k \text{ and }x_k(c)\ge 0 \ \text{for any }k\in[K]\\
& \E_{-k}[-c_kx_k^{\rho_k}(c)+p_k(c)]\ge\E_{-k}[ -c_kx_k^{\rho_k}(\hat c_k,c_{-k})+p_k(\hat c_k,c_{-k})] \\
&    \E_{-k}[ -c_kx_k^{\rho_k}(c)+p_k(c)]\ge 0.\\
\end{array}
\end{equation}


However, two key challenges arise in deriving the optimal design:
(a) incentivizing each creator to report their cost truthfully, and (b) identifying a payment rule $p(\cdot)$ that satisfies both IR and validity constraints. We resolve both challenges in the affirmative and derive the optimal mechanism $\cM_2$ for this setting by substituting $vc(\hat c)$ with $\hat c$. Details are provided in \Cref{algo:MD-two-SW} of \Cref{app:M2}.

We conclude this section with our main result, which establishes the optimality of \Cref{algo:MD-two-SW}, i.e., mechanism $\cM_2$. The availability of the optimal mechanism also explains the rationale for the presence of companies in the market that pursue social welfare maximization.

\begin{theorem}
    \label{thm:two-sw}
    Mechanism $\cM_2$ is a valid mechanism satisfying IC and IR conditions and achieves the
highest social welfare with polynomial time complexity\footnote{In application, it is also known as Fully Polynomial-Time Approximation Scheme (FPTAS), which means that we can find an 
$\epsilon$-optimal solution within polynomial time with respect to ${1}/{\epsilon}$.}.

\end{theorem}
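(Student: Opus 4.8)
The plan is to show that $\cM_2$'s allocation coincides with the pointwise welfare-optimal allocation, that this allocation is valid (monotone), and that validity together with the envelope payment rule delivers IC and IR; optimality then follows because social welfare is a function of the allocation alone. As for $\cM_1$, I would begin with the change of variables $y_k = x_k^{\rho_k}$, which turns each creator's cost $c_k x_k^{\rho_k}$ into the linear quantity $c_k y_k$ and the welfare objective into $-\E_c[\sum_k c_k y_k(c)]$. For each report $\hat c$, the welfare-maximizing allocation is the minimizer of $\sum_k c_k y_k$ over $\{y \ge 0 : y_k^{1/\rho_k} + \sum_i \mu_{ik} y_i^{\gamma_{ik}/\rho_i} \ge d_k\}$, which is precisely $\cM_2$'s auxiliary program (the program of $\cM_1$ with $vc_k$ replaced by $\hat c_k$). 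Since $1/\rho_k, \gamma_{ik}/\rho_i \in (0,1]$, each constraint function is concave and the feasible region is convex; this is the same convex structure established for \Cref{thm:opt}, so both the well-definedness of the allocation and the complexity claim are inherited from \Cref{thm:complexity}.

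The technical core is verifying that this pointwise minimizer is a valid allocation. Own-cost monotonicity follows from a revealed-preference argument: writing $y = y(c)$ and $y' = y(c')$ where $c'$ differs from $c$ only by raising $c_k$, the two optimality inequalities $\sum_j c_j y_j \le \sum_j c_j y_j'$ and $\sum_j c_j' y_j' \le \sum_j c_j' y_j$ add to $(c_k' - c_k)(y_k - y_k') \ge 0$, forcing $y_k' \le y_k$, i.e. $x_k$ weakly decreases in $c_k$. (Unlike the revenue case, no regularity on $vc_k$ is needed, since the weights are the true costs, trivially increasing in $c_k$.) The cross-cost condition — that $x_k$ weakly increases when some $c_i$, $i\neq k$, rises — is the step I expect to be the main obstacle, since it is a substitution effect that the revealed-preference identity does not yield directly. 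It must come from the transferability structure: raising $c_i$ lowers $y_i$, which shrinks the term $\mu_{ik} y_i^{\gamma_{ik}/\rho_i}$ in domain $k$'s demand constraint and thereby forces $y_k$ up. I would establish this through KKT stationarity and the monotone-comparative-statics structure of the constraint set, mirroring the validity argument used for $\cM_1$.

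With validity in hand, IC and IR are standard. In the $y$-coordinates the problem is a textbook linear single-parameter procurement setting, so Myerson's characterization applies: because $\E_{-k}[y_k(\hat c_k, c_{-k})]$ is non-increasing in $\hat c_k$, truthful reporting is optimal exactly when payments follow the envelope formula $p_k(\hat c) = \hat c_k y_k + \int_{\hat c_k}^{b_k} y_k(\hat c_{-k}, t_k)\,dt_k$, which is $\cM_2$'s payment. This makes the highest-cost type $b_k$ earn zero rent and every lower type earn non-negative rent, giving IR; and since IR binds at the top, it is the minimum-payment implementation, matching the tie-break defining $Rev_2$.

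Finally I would close the optimality argument. Social welfare depends only on the allocation, and by Myerson the IC–IR–implementable allocations are exactly the valid ones. The allocation of $\cM_2$ minimizes $\sum_k c_k y_k$ pointwise, hence minimizes $\E_c[\sum_k c_k y_k]$ over all feasible allocations — \emph{a fortiori} over the implementable ones — and, being valid, it lies in the implementable set. Thus no IC–IR mechanism attains higher welfare, so $\cM_2$ is optimal. The step I would spend the most effort on is the cross-cost monotonicity; everything else either mirrors the proof of \Cref{thm:opt} or follows from the clean decoupling of welfare from payments, which is the genuinely new conceptual ingredient here relative to the revenue case.
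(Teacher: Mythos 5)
Your proposal is correct and follows essentially the same route as the paper: transform to $y_k=x_k^{\rho_k}$ so the welfare objective becomes linear, take the pointwise minimizer of $\sum_k c_k y_k$ over the convex feasible region (i.e., $\cM_1$ with $vc_k$ replaced by $c_k$), implement it with the envelope payment so that the top type $b_k$ earns zero rent, and inherit the complexity bound from \Cref{thm:complexity}. The cross-cost monotonicity you flag as the main obstacle is treated no more explicitly in the paper itself, which simply appeals to the linearity of the objective, the convexity of the feasible set, and the monotonicity of the weights (here trivially $c_k$ itself, with no regularity assumption needed), exactly as you anticipate.
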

\subsection{Creator Union and Optimal Procurement Mechanism Design}\label{sec:union}

Recently, OpenAI and The Wall Street Journal reached a data agreement valued at up to \$250 million~\citep{wsj2024openai}. At the same time, data-selling companies such as Scale AI have experienced rapidly rising valuations, driven by the growing demand for high-quality data to fine-tune large language models~\citep{scale2023openai}. Economically, these intermediaries can be viewed as aggregating data from human creators and repackaging it for sale, similar to a creator union. We defer the analysis of the case in which such intermediaries secure contracts before sourcing data from creators to \Cref{sec:three}. This raises a key question: when a creator union appears, represented by a single agent producing content across all $K$ domains, how will the revenue of data-obsessing platforms and social welfare be affected?


We retain the setting from \Cref{sec:setup}, except that the representative creator now has a cost vector $c = (c_1, \ldots, c_K)$. We also let $a = (a_1, \ldots, a_K)$ and $b = (b_1, \ldots, b_K)$. Unlike before, one-shot deviations are insufficient to capture strategic behavior, as the creator may misreport multiple entries simultaneously. This possibility of collusion complicates the mechanism design problem by introducing correlations, analogous to the distinction between Nash and correlated equilibria~\citep{gilboa1989nash}.

\subsubsection{Revenue-Maximizing Procurement Mechanism $\cM_3$ with a Creator Union}\label{sec:union_rev}

Pessimistically, general high-dimensional mechanism design problems remain open and agnostic~\citep{briest2010pricing,hart2013menu,daskalakis2015multi,hart2017approximate}. Therefore, we formulate the corresponding optimization problem and derive bounds on platform revenue and social welfare, leaving explicit analytical solutions for future research.
The revenue-maximizing platform now faces the following optimization problem (Mechanism $\cM_3$), where the payment rule $p(\cdot)$ is a scalar function and the corresponding social welfare is $SW_3=\E_c [-\sum_{k=1}^K c_k x_k^{\rho_k}(c)] $:

\begin{equation}\label{eq:rev-2}
\begin{array}{rl}
Rev_3= \max_{x,p}\ & \E_c[-p(c)]\\
 \text{s.t.}\  &   x_k( c)+\sum_{i=1}^K\mu_{ik}x_i(c)^{\gamma_{ik}}\ge d_k\text{ and }x_k(c)\ge 0\ \text{for any }k\in[K]\\
& p(c)-\sum_{k=1}^K c_k x_k^{\rho_k}(c)\ge p(\hat c)-\sum_{k=1}^K c_k x_k^{\rho_k}(\hat c)\\
&p(c)-\sum_{k=1}^K c_k x_k^{\rho_k}(c)\ge  0.\\
\end{array}
\end{equation}


The main challenge in this problem is that the union may misreport multiple components of the cost vector $c$ simultaneously. The resulting correlations make the optimization problem effectively contain infinitely many constraints and require the use of path integrals in defining the payment rule. Finding the optimal solution remains a difficult open problem in economics.
To be specific, recall in the one-dimensional case, the optimal payment in \Cref{alg:opt-MD} has the form $p_k=c_ky_k+\int_c^b y_k(c_1,...,c_{k-1},t_k,c_{k+1},...,c_K)dt_k$ with truthful reports. However, in the high-dimensional case, the corresponding payment would become $p=c\cdot y+\int_c^b y(t)\cdot t$\footnote{The symbol $\cdot$ represents the dot product.}. When $c$ is a high-dimensional vector, it raises an extra question that the integration results might be different for different paths from $c$ to $b$, which contradicts the fixed contract. Hence, simply applying methods in the one-dimensional case won't yield a valid mechanism, and it hinders the optimal high-dimensional mechanism design in relevant scenarios.
In contrast, as shown in the next section, a platform that maximizes social welfare can efficiently obtain the optimal solution, highlighting a notable advantage.


\subsubsection{ Welfare-Maximizing Procurement Mechanism $\cM_4$ under a Creator Union}\label{sec:union+sw}

The main obstacle to obtaining explicit solutions in high-dimensional mechanism design lies in the IC condition. To ensure truthful reporting by the creator union, the payment rule must involve a high-dimensional integral, and it is generally difficult to construct one that is independent of the integration path.

In this section, we turn to studying the properties of social welfare maximizers who aim to minimize total social cost $\E_c[\sum_k c_kx_k^{\rho_k}(c)]$. Similarly, we use $Rev_4=\E_c[-p(c)]$ to denote corresponding platform revenue.
\begin{equation*}
\begin{array}{rl}
SW_4= \max_{x,p}\ & \E_c[-\sum_k c_kx_k^{\rho_k}(c)]\\
 \text{s.t.}\  &   x_k( c)+\sum_{i=1}^K\mu_{ik}x_i(c)^{\gamma_{ik}}\ge d_k\text{ and }x_k(c)\ge 0\ \text{for any }k\in[K]\\
& p(c)-\sum_{k=1}^K c_k x_k^{\rho_k}(c)\ge p(\hat c)-\sum_{k=1}^K c_k x_k^{\rho_k}(\hat c) \\
&p(c)-\sum_{k=1}^K c_k x_k^{\rho_k}(c)\ge  0.\\
\end{array}
\end{equation*}

Nonetheless, the analysis shows that the optimal mechanism can be derived in an explicit form and computed in polynomial time. This may explain why some companies choose to maximize social welfare, as the corresponding mechanism is easier for customers to understand and involves only a small loss in revenue. We now present the optimal mechanism $\cM_4$ under this setting in \Cref{alg:opt-MD_2}.
\begin{algorithm}[htbp]
   \caption{Mechanism $\cM_4$ for Type 2 platforms with union in the two-layer market.}
    \label{alg:opt-MD_2}
\begin{algorithmic}
    \STATE {\bfseries Input:} Report $\hat c$.
    \STATE Replace virtual cost: $vc(\hat c)$ by $\hat c$.
    \STATE Call \Cref{alg:opt-MD} for $y$.
    \STATE Calculate the allocations: $x_k(\hat c)=y_k^{1/\rho_k}$ for any $k\in[K]$.
    \STATE Calculate the payment: $p(\hat c)=\hat c\cdot y+\int_{\hat c}^b y(t)\cdot dt$.
    \STATE {\bfseries Output:} Allocation-payment pair $(x(\hat c),p(\hat c))=((x_1(\hat c),...,x_K(\hat c)),p(\hat c))$.
\end{algorithmic}
\end{algorithm}


Note that calculating the payment involves an integration path. A natural question is whether $p(\cdot)$ is well-defined, that is, independent of the chosen path. We answer the question in the following \Cref{lem:ind+path}.
\begin{lemma}\label{lem:ind+path}
    The payment $p(\cdot)$ in \Cref{alg:opt-MD_2} is path independent and so well-defined.
\end{lemma}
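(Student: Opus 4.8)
The plan is to reduce the claim to the conservativeness of the allocation field and then exploit the special structure of $y(\cdot)$ as the minimizer of a parametric linear program with a \emph{fixed} feasible set. In the payment formula $p(\hat c)=\hat c\cdot y+\int_{\hat c}^b y(t)\cdot dt$, the only term that references an integration path is the line integral $\int_{\hat c}^b y(t)\cdot dt=\int_{\hat c}^b\sum_{k=1}^K y_k(t)\,dt_k$, since $\hat c\cdot y(\hat c)$ is evaluated pointwise. Hence it suffices to show that the vector field $t\mapsto y(t)$ on the box $[a,b]\subseteq\RR^K$ is conservative, i.e., that its line integral between two fixed endpoints is independent of the chosen path.

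First I would make the structure of $y(\cdot)$ explicit. By construction in \Cref{alg:opt-MD_2}, $y(\hat c)$ is obtained by calling \Cref{alg:opt-MD} with the virtual cost replaced by the true report $\hat c$, so
$$y(\hat c)=\argmin_{y\in C}\ \hat c\cdot y,\qquad C=\Big\{y\ge 0:\ y_k^{1/\rho_k}+\sum_{i=1}^K\mu_{ik}y_i^{\gamma_{ik}/\rho_i}\ge d_k\ \text{for all }k\in[K]\Big\}.$$
The key observation is that the feasible set $C$ does not depend on $\hat c$ and is convex (precisely the convexity established in the proof of \Cref{thm:opt}). Consequently the value function $V(\hat c):=\min_{y\in C}\hat c\cdot y$ is a pointwise infimum of functions that are linear in $\hat c$, hence concave on $[a,b]$.

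The heart of the argument is to identify $y(\cdot)$ as a gradient field of $V$. A one-line supergradient computation gives, for any $\hat c,\hat c'$, the inequality $V(\hat c')\le \hat c'\cdot y(\hat c)=V(\hat c)+y(\hat c)\cdot(\hat c'-\hat c)$, so $y(\hat c)$ is a supergradient of the concave function $V$ at $\hat c$; wherever $V$ is differentiable this forces $\nabla V(\hat c)=y(\hat c)$. Being concave, $V$ is locally Lipschitz, hence differentiable almost everywhere, and its restriction to any segment is one-dimensional concave and therefore absolutely continuous. The fundamental theorem of calculus then yields $\int_{\hat c}^b y(t)\cdot dt=V(b)-V(\hat c)$ along the straight path, and along any rectifiable path by the same potential-difference identity, which is exactly path independence and well-definedness.

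The main obstacle I anticipate is the differentiability/uniqueness step: since the objective $\hat c\cdot y$ is linear, the minimizer of the program need not be unique a priori, and $V$ can fail to be differentiable on a null set of reports. I would close this gap by invoking the validity of the mechanism (\Cref{def:refine}), which selects a single-valued monotone allocation $y(\cdot)$, together with the curvature of the active constraints coming from $\rho_k\ge 1$ and $\gamma_{ik}\in[0,1]$, to guarantee a unique minimizer and hence $\nabla V=y$ on a full-measure set; the absolute continuity of $V$ along segments then upgrades this almost-everywhere identity to the exact line-integral evaluation above. An alternative, should one prefer a smooth argument, is to assume $y(\cdot)\in C^1$ and verify the symmetry $\partial y_k/\partial\hat c_j=\partial y_j/\partial\hat c_k$ directly from the KKT system, but the value-function route avoids case analysis and is the cleaner path.
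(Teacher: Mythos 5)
Your proposal is correct and follows essentially the same route as the paper: the paper also defines the potential $\Psi(c)=\min_{y\in\cI}c\cdot y$ and shows $\nabla\Psi(c)=y(c)$ (via a perturbation argument distinguishing vertex and smooth-segment optima), from which $\nabla\times y=0$ and path independence follow. Your supergradient/concavity phrasing of the envelope step is a slightly more rigorous rendering of the same idea and handles the nondifferentiable points more cleanly than the paper's case analysis.
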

We now state the main theorem establishing the optimality of our \Cref{alg:opt-MD_2}.
\begin{theorem}\label{thm:opt+two}
    Mechanism $\cM_4$ is a valid mechanism satisfying IC and IR conditions and achieves the
highest social welfare with polynomial time complexity.
\end{theorem}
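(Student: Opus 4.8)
The plan is to establish each clause of the claim—validity, IC, IR, and welfare-optimality—and then the complexity bound, all resting on one structural observation. After the substitution of $vc(\hat c)$ by $\hat c$, the auxiliary program inside \Cref{alg:opt-MD_2} computes $y(c) = \argmin_{y \in \mathcal Y}\, c \cdot y$, where the feasible region $\mathcal Y = \{y \ge 0 : y_k^{1/\rho_k} + \sum_i \mu_{ik} y_i^{\gamma_{ik}/\rho_i} \ge d_k \ \forall k\}$ does \emph{not} depend on $c$. Hence the value function $V(c) := \min_{y \in \mathcal Y}\, c \cdot y$ is concave (a pointwise minimum of linear functions of $c$), and by the envelope theorem $y(c)$ is a supergradient selection, $y(c) \in \partial V(c)$. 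I would open the proof by recording this fact, since it simultaneously explains path-independence of the payment—already isolated as \Cref{lem:ind+path}—and supplies the monotonicity structure used below.

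For IC, I would write the union's rent under truthful reporting as $U(c) = p(c) - c \cdot y(c) = \int_c^b y(t)\cdot dt = V(b) - V(c)$, using the gradient theorem together with \Cref{lem:ind+path}. The payoff from reporting $\hat c$ when the true cost is $c$ equals $p(\hat c) - c\cdot y(\hat c) = U(\hat c) + (\hat c - c)\cdot y(\hat c)$, so IC is exactly the inequality $U(c) \ge U(\hat c) + (\hat c - c)\cdot y(\hat c)$. Since $y(\hat c)$ is a supergradient of the concave $V$ at $\hat c$, the supergradient inequality $V(c) \le V(\hat c) + y(\hat c)\cdot(c-\hat c)$ holds; subtracting both sides from $V(b)$ yields precisely the IC inequality. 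The key point is that this argument is global and handles non-differentiability of $V$ automatically, so truthfulness survives even though the union reports a full vector and may deviate in all coordinates at once.

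For IR, I would note that $V$ is coordinatewise nondecreasing—for any fixed $y \ge 0$, raising some $c_k$ raises $c \cdot y$, hence raises the minimum—so $U(c) = V(b) - V(c) \ge 0$ for every $c \le b$. For validity, since $x_k = y_k^{1/\rho_k}$ is an increasing transform, it suffices to verify the two conditions of \Cref{def:refine} for $y(c)$. Own-cost monotonicity follows directly from concavity of $V$: anti-monotonicity of the supergradient gives $(y(c)-y(c'))\cdot(c-c') \le 0$, and perturbing only the $k$-th coordinate yields $y_k(c_k,c_{-k}) \ge y_k(\tilde c_k, c_{-k})$ for $\tilde c_k \ge c_k$. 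The cross-monotonicity—that $y_k$ weakly rises when some $c_i$, $i \neq k$, rises—is the substitutes property of the demand constraints, which I would establish exactly as in the proof of \Cref{thm:opt}, since the present allocation solves the same convex program with the trivially monotone map $\hat c$ in place of $vc(\hat c)$. Welfare-optimality is then immediate: social welfare depends only on the allocation, the cost-minimizing feasible allocation at each realized $c$ is by definition $y(c)$, and we have just shown this first-best allocation is implementable by an IC, IR payment, so no IC, IR mechanism can do better.

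Finally, for complexity I would observe that the allocation at any report reduces to solving the convex program defining $V$, which admits an $\epsilon$-optimal solution in time polynomial in $1/\epsilon$ exactly as in \Cref{thm:complexity}; the payment $\int_{\hat c}^b y(t)\cdot dt$ can, by path-independence, be evaluated along the coordinate axes as a sum of $K$ one-dimensional integrals, each discretized into polynomially many evaluations of $y(\cdot)$, giving the FPTAS guarantee. The main obstacle is the IC step: in a genuine multidimensional screening problem, truthfulness typically fails for cost-minimizing allocations, and the reason it survives here is the $c$-independence of $\mathcal Y$, which forces the allocation to be a supergradient of a concave potential and thereby converts global IC into a single convexity statement. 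Everything else is comparative statics or a direct appeal to earlier results.
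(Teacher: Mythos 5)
Your proposal is correct and follows essentially the same route as the paper: your value function $V(c)=\min_{y\in\cI}c\cdot y$ is exactly the potential function $\Psi$ constructed in the proof of \Cref{lem:ind+path}, and your supergradient inequality $V(c)\le V(\hat c)+y(\hat c)\cdot(c-\hat c)$ is a repackaging of the paper's observation that the payment is the constant $b\cdot y(b)$, so that IC reduces to $c\cdot y(c)\le c\cdot y(\hat c)$, which holds by minimality of $y(c)$ over the $c$-independent feasible set. The remaining steps—IR from $y\ge 0$ (equivalently, monotonicity of $V$), validity inherited from the $\cM_2$ allocation rule, welfare optimality from pointwise cost minimization being implementable, and polynomial complexity via \Cref{thm:complexity}—match the paper's argument.
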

This theorem reveals a surprising result: despite the high-dimensional setting, the welfare-maximizing mechanism can be computed efficiently. The proof is constructive, identifying an optimal payment rule $p(\cdot)$ that is independent of the integration path. The distinctive properties of GenAI are crucial to this result. Since GenAI introduces only convex demand constraints, the allocation, whether at a vertex or along a smooth segment, remains orthogonal to cost after transformation. We construct a potential function $\Psi(c)$, detailed in~\Cref{app:union}, and establish the relationship between its gradient and the allocation rule. This approach provides a new method for addressing high-dimensional mechanism design problems, distinct from cycle monotonicity~\citep{lavi2007truthful}, and advances understanding of this open question. Building on the above results, the next section presents a partial comparison of $Rev_{1-4}$ and $SW_{1-4}$.


\subsection{Comparison of Revenue and Social Welfare for Mechanisms $\cM_{1-4}$}\label{sec:two+compare}

Intuitively, a revenue-maximizing platform (Type 1) should generate lower social welfare than a welfare-maximizing platform (Type 2) under the same setting, while a Type 2 platform typically earns less revenue. Regarding the creator union, since it aggregates information from all human creators, it is expected to possess greater market power, potentially leading to higher total utility for creators. The following theorem formalizes these comparison results.

\begin{theorem}\label{thm:two+compare}
Let $Rev_i$ and $SW_i$ denote the revenue and welfare of valid mechanism $\cM_i$.
For the revenue objective, we have $Rev_1\ge Rev_3$ and $Rev_1\ge Rev_2\ge Rev_4$.
    For the social welfare objective, we have $SW_2=SW_4\ge\max\{SW_1,SW_3\}$.
\end{theorem}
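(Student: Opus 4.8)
The plan is to establish the five inequalities in \Cref{thm:two+compare} by exploiting that all four mechanisms share the \emph{same} feasible region for allocations (the demand constraints and nonnegativity are identical across \eqref{eqn:optimization} and the three subsequent programs) and differ only in objective and in whether the agent is a union. I would organize the argument around two observations: (i) each $\cM_i$ is \emph{optimal} for its own objective within its own constraint class, so any other valid mechanism in the same class gives a weaker value; and (ii) enlarging the agent's deviation power (moving from $K$ independent creators to a single union) \emph{shrinks} the set of implementable allocations, since the union's IC condition encodes strictly more deviations (simultaneous multi-entry misreports) than the separate per-creator IC conditions. Observation (ii) is the conceptual backbone: the union can mimic any single-coordinate deviation, so every union-IC mechanism is also IC in the separable sense, whence the union's feasible set is contained in the non-union feasible set.

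For the welfare chain $SW_2=SW_4\ge\max\{SW_1,SW_3\}$ I would argue as follows. The equality $SW_2=SW_4$ is the crux and the main obstacle: a priori the union's shrunken feasible set could lower attainable welfare, so I must show the welfare-optimal allocation is implementable in \emph{both} regimes. The key is that both $\cM_2$ and $\cM_4$ use the \emph{same} pointwise-optimal allocation $x(c)=\argmin_x\sum_k c_k x_k^{\rho_k}(c)$ subject to the demand constraints — this allocation is computed per realization and does not depend on the agent structure. I would invoke \Cref{lem:ind+path} and \Cref{thm:opt+two}: the latter guarantees this welfare-optimal allocation is union-IC via the constructed path-independent payment, and $\cM_2$ achieves the same allocation by \Cref{thm:two-sw}. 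Since the allocation and hence $\E_c[\sum_k c_k x_k^{\rho_k}]$ coincide, $SW_2=SW_4$. The inequalities $SW_2\ge SW_1$ and $SW_4\ge SW_3$ then follow because $\cM_2,\cM_4$ maximize welfare over their respective classes while $\cM_1,\cM_3$ optimize a different (revenue) objective and are merely feasible for the welfare program; thus their welfare cannot exceed the welfare-optimum in the same class.

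For the revenue chain I would treat the three inequalities separately. $Rev_1\ge Rev_2$: both are non-union revenue values over the same IC/IR/validity class, and $\cM_1$ is revenue-optimal there by \Cref{thm:opt}, so any other mechanism — including the welfare-maximizer $\cM_2$ viewed through its induced payments $Rev_2$ — yields weakly less revenue. $Rev_1\ge Rev_3$: here I use containment of feasible sets from observation (ii); the union regime is a constrained version of the separable regime for the revenue objective (the union-IC constraints are more stringent), so the optimal revenue can only drop when passing to the union. The subtle point requiring care is that in the non-union problem the payment decomposes coordinatewise while in the union problem it is a single scalar function; I would argue that any union-feasible $(x,p)$ induces a coordinatewise-feasible mechanism of weakly higher designer value, giving $Rev_1\ge Rev_3$. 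Finally $Rev_2\ge Rev_4$ follows by the tie-breaking convention: both $\cM_2$ and $\cM_4$ fix the welfare-optimal allocation and then minimize payment (maximize revenue) subject to IC/IR, but $\cM_4$ does so under the more restrictive union-IC constraints, so its minimal payment is weakly larger, i.e. $Rev_4\le Rev_2$. The main obstacle throughout is rigorously justifying the feasible-set containment between the union and non-union IC conditions while accounting for the scalar-versus-vector payment structure; once that containment is clean, each inequality reduces to ``optimizer beats feasible competitor'' within a fixed class.
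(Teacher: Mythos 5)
Your welfare chain and $Rev_1\ge Rev_2$ match the paper's argument and are fine: $\cM_2$ and $\cM_4$ share the pointwise welfare-optimal allocation (implementable under union IC by \Cref{thm:opt+two}), so $SW_2=SW_4$ and this dominates the welfare of any other feasible allocation, including those of $\cM_1$ and $\cM_3$; and $Rev_1\ge Rev_2$ is just ``optimizer beats feasible competitor'' within a common class. The gap is in $Rev_2\ge Rev_4$ and $Rev_1\ge Rev_3$, where you lean on a ``feasible-set containment'' between the union and non-union IC classes. That containment does not exist as stated: a union mechanism carries a single scalar payment $p(c)$ while a separable mechanism carries a vector $(p_1,\dots,p_K)$ satisfying interim per-creator IC/IR, so neither problem is literally a constrained version of the other. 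You yourself flag the decomposition of $p$ into per-creator payments as ``the subtle point requiring care'' but never resolve it --- and that decomposition is precisely where the content of the theorem lives.

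The missing idea is an explicit comparison of information rents via a path integral. For a fixed valid allocation with $y_k=x_k^{\rho_k}$, the minimal union-IC/IR payment is $p(c)=c\cdot y(c)+\int_c^b y(t)\cdot dt$ (envelope condition plus IR at $b$, well-defined by \Cref{lem:ind+path}), while the minimal separable total payment is $\sum_k\bigl[c_k y_k(c)+\int_{c_k}^{b_k}y_k(t_k,c_{-k})\,dt_k\bigr]$. The paper compares the two rent terms by evaluating the path integral along the coordinate-by-coordinate path from $c$ to $(b_1,c_2,\dots,c_K)$, then to $(b_1,b_2,c_3,\dots,c_K)$, and so on, which yields $\int_c^b y(t)\cdot dt=\sum_k\int_{c_k}^{b_k}y_k(b_1,\dots,b_{k-1},t_k,c_{k+1},\dots,c_K)\,dt_k$, and then invokes the second bullet of \Cref{def:refine} (validity: $y_k$ weakly increases in others' costs) to conclude $\int_c^b y(t)\cdot dt\ge\sum_k\int_{c_k}^{b_k}y_k(t_k,c_{-k})\,dt_k$. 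This single inequality delivers both of your problematic steps: $Rev_2\ge Rev_4$ because the two mechanisms share an allocation and the union rent dominates the separable rent, and $Rev_1\ge Rev_3$ because $\cM_3$'s payment is at least $c\cdot y(c)+\sum_k\int_{c_k}^{b_k}y_k(t_k,c_{-k})\,dt_k$, which is in turn at least $\cM_1$'s minimal payment since $\cM_3$'s allocation is feasible for Problem~\ref{eqn:optimization}. Without this path-choice-plus-validity step your argument does not close.
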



The most technically interesting comparison lies between $Rev_2$ and $Rev_4$. Since the payment rule of $\cM_4$ involves a path integral, we select a specific integration path and, using the validity of the mechanism, show that $Rev_2 \ge Rev_4$. The complete proof is provided in \Cref{app:two+compare}.

As shown in \Cref{thm:two+compare}, we establish the comparative rankings of revenue and social welfare among \Cref{alg:opt-MD,algo:MD-two-SW,alg:opt-MD_2}, and provide upper bounds for $\cM_3$ under mild assumptions, since it lacks an explicit solution. We find that the presence of a creator union reduces platform revenue by weakening its pricing power. In contrast, for a social welfare–maximizing platform, the union does not affect the outcome. Intuitively, social welfare depends solely on the creators’ total production cost, so while the union may alter the payment rule, it has no incentive to distort the allocation. As social welfare is allocation-dependent only, its optimality remains preserved.

\section{ Content Procurement in Platform-Broker-Creator Markets
}\label{sec:three}
In this section, we extend our analysis beyond the union case and consider a three-layer market~\citep{fallah2024three}. In this setting, the platform first offers a contract $(z, t)$ to a data broker~\citep{liu2021dealer}, who decides whether to accept it. The broker then engages with human creators and faces a mechanism design problem similar to that in \Cref{sec:setup}. Unlike the union, which maximizes the total utility of creators, the broker focuses solely on profit, defined as the difference between $z$ and $\sum_{k=1}^K p_k$. In practice, companies such as SchoolDigger and Datarade serve as examples of data brokers~\citep{zhang2024survey}. To build intuition, \Cref{fig:three} in \Cref{app:three-layer} illustrates the timeline of this mechanism design setting.

\subsection{ Revenue-Maximizing Procurement Mechanism $\cM_5$ with a Data Broker}\label{sec:three+rev}
We first consider the scenario in which the platform seeks to maximize its revenue. To identify the optimal mechanism, we begin by formulating the broker’s optimization problem given $(z, t)$. Since both the allocation and payment rules depend on $(z, t)$, we denote them by $x(\cdot; z, t)$ and $p(\cdot; z, t)$, respectively. Because the broker signs the contract with the platform in advance, the credibility constraint~\citep{https://doi.org/10.3982/ECTA15925}
 imposes an additional condition that, for any report $\hat c$,
 
{\centering$x_k(\hat c)\ge z_k\ \text{for any }k\in [K].$\par}

Consequently, the broker needs to solve the following optimization problem: 
\begin{equation}\label{eq:sub}
\begin{array}{rl}
\max_{x,p}\ & \E_c[t-\sum_{k=1}^K p_k(c;z,t)]\\
 \text{s.t.}\  & \E_{-k}[-c_kx_k^{\rho_k}(c;z,t)+p_k(c;z,t)]\ge\E_{-k}[ -c_kx_k^{\rho_k}(\hat c_k,c_{-k};z,t)+p_k(\hat c_k,c_{-k};z,t)] \\
&    \E_{-k}[ -c_kx_k^{\rho_k}(c;z,t)+p_k(c;z,t)]\ge 0\\
& x_k(c;z,t)\ge z_k \ \text{for any }k\in[K],
\end{array}
\end{equation}
where $t$ is a fixed constant and $z$ is a fixed $K$-dimensional nonnegative vector. Let $x^*(\cdot; z, t)$ and $p^*(\cdot; z, t)$ denote the optimal allocation and payment rules of Problem~\ref{eq:sub}. The platform’s objective is then given by

\begin{equation}\label{eq:main}
\begin{array}{rl}
Rev_5=\max_{z,t}\ & \E_c[-t]\\
 \text{s.t.}\  &   z_k+\sum_{i=1}^K\mu_{ik}z_i^{\gamma_{ik}}\ge d_k\text{ and }z_k\ge 0 \ \text{for any }k\in[K]\\
&    \E_{c}[t-\sum_{k=1}^K p^*_k(c;z,t)]\ge 0.\\
\end{array}
\end{equation}
The second constraint ensures the participation of the data broker in the market, serving as the broker’s IR condition. Moreover, the corresponding social welfare is given by
$SW_5=\E_c[-\sum_{k=1}^K c_k(x_k^*(c;z,t))^{\rho_k}]$.

A natural question is whether optimal mechanisms for both $(z, t)$ and $(x, p)$ can be derived in polynomial time. The following \Cref{alg:opt-MD_3-Rev} and \Cref{thm:three-rev} establish their optimality and computational complexity.
\begin{algorithm}[htbp]
   \caption{Mechanism $\cM_5$ for Type 1 platforms with data broker in the three-layer market.}
    \label{alg:opt-MD_3-Rev}
\begin{algorithmic}
    \STATE \textit{\# First-stage mechanism design.}
    \STATE Replace virtual cost: $vc(\hat c)$ by $b$.
    \STATE Call \Cref{alg:opt-MD} for $y$.
    \STATE Calculate the allocations: $z_k=y_k^{1/\rho_k}$ for any $k\in[K]$.
    \STATE {\bfseries Output:} Allocation-payment pair $(z,t)=((z_1,...,z_K),b\cdot y)$.
    \STATE \textit{\# Second-stage mechanism design.}
    \STATE {\bfseries Input:} Allocation $z$, payment $t$ and report $\hat c$.
    \STATE Calculate the payment: $p_k(\hat c)=b_k z_k^{\rho_k}$.
    \STATE {\bfseries Output:} Allocation-payment pair $(x(\hat c),p(\hat c))=(z,(b_1z_1^{\rho_1},...,b_Kz_K^{\rho_K}))$.
\end{algorithmic}
\end{algorithm}

\begin{theorem}\label{thm:three-rev}
    Mechanism $\cM_5$ is a valid optimal solution to Problems~\ref{eq:sub} and \ref{eq:main} and achieves the
highest expected revenue/lowest cost for the platform with polynomial time complexity.
\end{theorem}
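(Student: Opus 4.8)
The plan is to prove Theorem~\ref{thm:three-rev} in two stages, mirroring the two-stage structure of Mechanism $\cM_5$, and to argue optimality of each stage separately. First I would analyze the broker's subproblem (Problem~\ref{eq:sub}) for an arbitrary fixed contract $(z,t)$. The key observation is that the credibility constraint $x_k(c;z,t)\ge z_k$ forces the broker to procure \emph{at least} $z_k$ units in every domain, regardless of the reported cost. Because the platform's demand is already met by $z$ alone (the first constraint of Problem~\ref{eq:main} guarantees $z_k+\sum_i\mu_{ik}z_i^{\gamma_{ik}}\ge d_k$), the broker has no reason to procure strictly more than $z_k$: increasing $x_k$ beyond $z_k$ only raises the creators' production cost $c_k x_k^{\rho_k}$, and hence (via IR) the payments $p_k$, without relaxing any constraint. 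So I would first establish the structural lemma that the broker's optimal allocation is the constant rule $x_k(c;z,t)=z_k$ for all $c$. This is the main conceptual obstacle: I need to show rigorously that no cost-dependent allocation can do better for the broker, which amounts to a pointwise (in $c$) minimization of the payment subject to IC/IR given a fixed floor $z$.

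Given the constant allocation, the second step is to pin down the optimal payment. With $x_k\equiv z_k$ fixed and independent of the report, the creators have nothing to signal through their cost, so IC becomes trivial and only IR binds. To minimize $\E_c[\sum_k p_k]$ subject to $\E_{-k}[-c_k z_k^{\rho_k}+p_k]\ge 0$ pointwise (in the worst case $c_k=b_k$, since $p_k$ cannot depend on the report when the allocation is constant), the broker must set $p_k=b_k z_k^{\rho_k}$ to guarantee participation for the highest-cost type. This recovers exactly the payment rule $p_k(\hat c)=b_k z_k^{\rho_k}$ in \Cref{alg:opt-MD_3-Rev}, and validity (monotonicity of the constant allocation) is immediate. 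I would verify the IC and IR conditions directly: IR holds with equality at $c_k=b_k$ and is slack elsewhere, and IC holds because the outcome is report-independent.

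The third step is to solve the platform's outer problem (Problem~\ref{eq:main}). Substituting the broker's optimal response, the broker's IR constraint becomes $t\ge\sum_k b_k z_k^{\rho_k}$, so the platform minimizes $t=\sum_k b_k z_k^{\rho_k}$ subject to the demand feasibility constraint on $z$. Under the transformation $y_k=z_k^{\rho_k}$ (the same monotone change of variables used in \Cref{alg:opt-MD} and justified in the proof of \Cref{thm:opt}), this is precisely the auxiliary convex program of \Cref{alg:opt-MD} with virtual cost $vc_k$ replaced by the constant $b_k$. Invoking \Cref{thm:complexity}, I can solve it in polynomial time, which matches the ``Replace virtual cost by $b$'' step in \Cref{alg:opt-MD_3-Rev} and yields the claimed complexity.

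Finally, I would assemble the pieces: the constant allocation with floor $z$ solves the broker's subproblem optimally for each $(z,t)$; the choice $t=b\cdot y=\sum_k b_k z_k^{\rho_k}$ is the minimal transfer satisfying the broker's IR; and the outer minimization over feasible $z$ coincides with the convex program whose optimality follows from \Cref{thm:opt,thm:complexity}. Together these establish that $\cM_5$ is a valid, optimal solution to both Problems~\ref{eq:sub} and~\ref{eq:main} with polynomial complexity. The hard part will be the first step: ruling out \emph{all} cost-dependent allocations for the broker, because a naive argument only compares against other constant allocations. The cleanest route is to show that for any feasible $(x,p)$ the broker's expected profit is upper bounded by that of the constant rule, using the fact that any allocation respecting the floor $z_k$ incurs at least the cost of the floor itself, combined with the IR lower bound on payments.
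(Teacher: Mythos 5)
Your overall architecture matches the paper's proof exactly: solve the broker's subproblem for fixed $(z,t)$, show the optimal allocation is the constant rule $x_k\equiv z_k$ with payment $b_kz_k^{\rho_k}$, then reduce the platform's outer problem to the convex program of \Cref{alg:opt-MD} with virtual cost replaced by $b$, and inherit the complexity bound from \Cref{thm:complexity}. Your steps two and three are sound: with a constant allocation, IC forces the interim payment to be report-independent and IR at the top type $b_k$ then pins it to $b_kz_k^{\rho_k}$; and the outer minimization of $t\ge\sum_kb_kz_k^{\rho_k}$ over feasible $z$ is exactly the stated convex program.

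The gap is in the step you yourself flag as the hard part, and the route you propose for it does not close. You suggest bounding the broker's expected payment from below by ``production cost of the floor plus IR,'' i.e.\ $\E_{-k}[p_k(c)]\ge \E_{-k}[c_kx_k^{\rho_k}(c)]\ge c_kz_k^{\rho_k}$, which after integrating over $c_k$ gives only $\E[p_k]\ge \E[c_k]\,z_k^{\rho_k}$. This is strictly below the $b_kz_k^{\rho_k}$ that your candidate mechanism actually pays, so it cannot certify that no cost-dependent allocation does better; the slack is precisely the information rent that IC, not IR, imposes. The paper closes this by reusing the Myerson-style decomposition already established in the proof of \Cref{thm:opt}: for any IC/IR mechanism, $\E_c[\sum_kp_k(c)]=\sum_ku_k(y,p,b_k)+\sum_k\E_c[vc_k(c_k)y_k(c)]$ with $u_k(y,p,b_k)\ge0$ and $vc_k(c_k)=c_k+F_k(c_k)/f_k(c_k)\ge0$, so subject to $y_k(c)\ge z_k^{\rho_k}$ the expectation is minimized pointwise at $y_k\equiv z_k^{\rho_k}$, and since $\E[vc_k(c_k)]=b_k$ the resulting lower bound $\sum_kb_kz_k^{\rho_k}$ is exactly attained by the constant rule. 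Your phrase ``pointwise minimization of the payment subject to IC/IR'' gestures at the right object, but the quantity that can legitimately be minimized pointwise is the virtual cost, not the payment; without invoking that decomposition your argument leaves open the possibility that a non-constant allocation trades higher production cost for lower rents. Substituting the virtual-cost argument for your IR-based bound repairs the proof and brings it into line with the paper's.
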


\subsection{ Welfare-Maximizing Procurement Mechanism $\cM_6$ with a Data Broker}\label{sec:three_sw}
We now turn to platforms that aim to maximize social welfare. The first-stage optimization problem is formulated as follows, while the second-stage problem remains the same as Problem~\ref{eq:sub}. The corresponding platform revenue is $Rev_6 = \E_c[-t]$.
\begin{equation}\label{eq:3swmain}
\begin{array}{rl}
SW_6=\max_{z,t}\ & \E_c[-\sum_{k=1}^K c_k(x_k^*(c;z,t))^{\rho_k}]\\
 \text{s.t.}\  &   z_k+\sum_{i=1}^K\mu_{ik}z_i^{\gamma_{ik}}\ge d_k\text{ and }z_k\ge 0 \ \text{for any }k\in[K]\\
&    \E_{c}[t-\sum_{k=1}^K p^*_k(c;z,t)]\ge 0.\\
\end{array}
\end{equation}
Since the platform must determine $z$ prior to the revelation of $c$, the optimal solution is expected to depend on $\E[c]$. Formally, we substitute $b$ in \Cref{alg:opt-MD_3-Rev} with $\E[c]$ and derive the mechanism $\cM_6$, presented in \Cref{alg:opt-MD_3-SW} of \Cref{app:M2}, and \Cref{thm:three-sw}


\begin{theorem}\label{thm:three-sw}
    Mechanism $\cM_6$ is a valid optimal solution to Problems~\ref{eq:sub} and \ref{eq:3swmain} and achieves the
highest social welfare with polynomial time complexity.
\end{theorem}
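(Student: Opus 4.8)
The plan is to use the fact that, once the platform fixes a contract $(z,t)$, the broker's second-stage problem is exactly Problem~\ref{eq:sub}, which is \emph{independent} of the platform's objective and was already solved in \Cref{thm:three-rev}; its optimal response therefore pins down the induced allocation and makes the platform's welfare a function of $z$ alone. Concretely, the broker minimizes $\E_c[\sum_k p_k]$ subject to creator IC and IR and the credibility constraint $x_k(c)\ge z_k$. Passing to $y_k=x_k^{\rho_k}$ and using the Myerson payment identity from the proof of \Cref{thm:opt}, the broker's expected payment equals $\sum_k\E[vc_k(c_k)\,y_k(c)]$ with $vc_k(c_k)=c_k+F_k(c_k)/f_k(c_k)>0$. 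Because the only active constraint is $y_k(c)\ge z_k^{\rho_k}$ and every coefficient is strictly positive, the unique profit-maximizing choice is the pointwise-constant allocation $y_k(c)\equiv z_k^{\rho_k}$, i.e. $x_k^*(c;z,t)=z_k$ for all $c$; feeding this constant allocation through the payment formula annihilates the information rent and leaves the constant transfer $p_k^*(c;z,t)=b_k z_k^{\rho_k}$, matching \Cref{alg:opt-MD_3-Rev}. I would simply cite \Cref{thm:three-rev} for this step.

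Given the broker's deterministic response, the realized welfare is $SW_6=\E_c[-\sum_k c_k(x_k^*)^{\rho_k}]=-\sum_k\E[c_k]\,z_k^{\rho_k}$, which depends on the contract only through $z$. Hence Problem~\ref{eq:3swmain} reduces to $\max_z\,-\sum_k\E[c_k]z_k^{\rho_k}$ over $\{z\ge0:\ z_k+\sum_i\mu_{ik}z_i^{\gamma_{ik}}\ge d_k\}$, and the substitution $y_k=z_k^{\rho_k}$ turns this into $\max_y\,-\sum_k\E[c_k]y_k$ over exactly the feasible region of the auxiliary program in \Cref{alg:opt-MD}. This is that program with the virtual cost $vc_k$ replaced by the mean $\E[c_k]$, which is precisely how $\cM_6$ is defined. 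By the convexity of the transferability constraints established in the proof of \Cref{thm:opt}, this is a convex program, so the same oracle and complexity bound as in \Cref{thm:complexity} produce an $\epsilon$-optimal $z$ in polynomial time, giving the claimed FPTAS. I would then set the transfer to its minimal participation-feasible level $t=b\cdot y=\sum_k b_k z_k^{\rho_k}$, invoking the Type~2 tie-breaking rule (minimum payment) to fix $t$.

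Next I would verify that the assembled mechanism $\cM_6$ lies in the valid class and satisfies every incentive constraint. Validity holds trivially because $x_k=z_k$ is constant in $c$, so both monotonicity conditions of \Cref{def:refine} hold with equality; creator IC holds because a constant allocation paired with a constant payment makes each creator's utility $(b_k-c_k)z_k^{\rho_k}$ independent of the report; creator IR holds since $c_k\le b_k$ gives $(b_k-c_k)z_k^{\rho_k}\ge0$; the broker's IR binds exactly at $t-\sum_k b_k z_k^{\rho_k}=0$; and credibility $x_k=z_k\ge z_k$ is immediate. For global welfare optimality I would note that $z$ and $t$ must be committed \emph{before} $c$ is revealed, so by the broker's-response argument every admissible contract induces the $c$-independent allocation $x=z$ and hence welfare $-\sum_k\E[c_k]z_k^{\rho_k}$; the attainable welfare is therefore capped by the maximum of this objective over feasible $z$, which $\cM_6$ attains by construction.

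The step I expect to be the main obstacle is establishing rigorously that the broker's profit-maximizing response is the degenerate allocation $x^*=z$ \emph{pointwise} in $c$, rather than merely on average. One must rule out mechanisms in which the broker overproduces ($x_k>z_k$) in some cost states to shave information rents elsewhere; the resolution is that, after the $y$-transformation, the Myerson identity renders the broker's expected payment separably linear in each $y_k(c)$ with strictly positive coefficient $vc_k(c_k)$, so no cross-state trade-off can beat the pointwise minimum $y_k\equiv z_k^{\rho_k}$. Verifying $vc_k>0$ throughout the support (using \Cref{ass:lower} and \Cref{ass:concave}) and confirming that the constant allocation remains within the valid class are the points needing the most care; everything downstream is a direct reduction to the convex-program machinery of \Cref{thm:opt} and \Cref{thm:complexity}.
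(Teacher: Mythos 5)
Your proposal is correct and follows essentially the same route as the paper: invoke the second-stage analysis from \Cref{thm:three-rev} to pin down the broker's response $x^*(c;z,t)=z$ and $p_k^*=b_kz_k^{\rho_k}$, reduce the first stage to the auxiliary convex program of \Cref{alg:opt-MD} with $vc_k$ replaced by $\E[c_k]$, set $t$ at the broker's participation lower bound $\sum_k b_k z_k^{\rho_k}$, and inherit validity and the polynomial complexity from \Cref{thm:complexity}. Your added care in justifying that the broker's optimal response is pointwise constant (via positivity of the virtual costs after the $y$-transformation) and in verifying IC/IR explicitly only makes the argument more complete than the paper's own write-up.
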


\subsection{The \emph{Lose-Lose} Effect in Three-Layer Markets}\label{sec:lose}

Lastly, we compare the platform’s revenue and social welfare in the three-layer market with those in the two-layer setting. We show that both revenue and social welfare decline relative to $\cM_2$ and $\cM_4$, indicating that the presence of data brokers distorts allocations and reduces market efficiency. Public contracting further hinders the effective incentivization of downstream creators and prevents the platform from optimizing social welfare based on realized types. This finding highlights the growing need for government oversight and regulation of data markets~\citep{AIWebBreak2024}.
\begin{theorem}\label{thm:lose}
    Compared with $\cM_2$ and $\cM_4$, both $\cM_5$ and $\cM_6$
    face a lose-lose situation that $Rev_2\ge Rev_4= Rev_5\ge Rev_6$ and $SW_2=SW_4\ge SW_6\ge SW_5$.
\end{theorem}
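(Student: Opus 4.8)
The plan is to reduce every link in both chains to a comparison among evaluations of a single value function, and to isolate the one genuinely subtle equality $Rev_4=Rev_5$. Throughout, write $\cY=\{y\ge 0: y_k^{1/\rho_k}+\sum_{i}\mu_{ik}y_i^{\gamma_{ik}/\rho_i}\ge d_k\ \forall k\}$ for the cost-independent convex feasible set obtained after the transformation $y_k=x_k^{\rho_k}$, and set $V(c)=\min_{y\in\cY}c\cdot y$ with minimizer $y^\star(c)$. By \Cref{thm:opt+two}, $\cM_4$ implements the realized-cost-optimal allocation $y^\star(\cdot)$, so $SW_4=-\E_c[V(c)]$, and $SW_2=SW_4$ by \Cref{thm:two+compare}. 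By construction of \Cref{alg:opt-MD_3-Rev}, $\cM_5$ and $\cM_6$ commit the broker to the \emph{fixed} allocations $y^{(5)}=\argmin_{y\in\cY}b\cdot y$ and $y^{(6)}=\argmin_{y\in\cY}\E[c]\cdot y$; since a fixed delivery makes the creators' IC vacuous and IR forces the worst-case payment $b_k z_k^{\rho_k}=b_ky_k$ (fully extracted by the platform), one reads off $Rev_5=-b\cdot y^{(5)}=-V(b)$, $Rev_6=-b\cdot y^{(6)}$, $SW_5=-\E[c]\cdot y^{(5)}$, and $SW_6=-\E[c]\cdot y^{(6)}$.

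The crux is $Rev_4=Rev_5$, which I would establish by showing that the path-integral payment of $\cM_4$ collapses to the constant $V(b)$. The value function $V$ is concave (a pointwise minimum of linear maps in $c$), hence differentiable almost everywhere with $\nabla V(c)=y^\star(c)$ by the envelope theorem; this also recovers the path-independence of \Cref{lem:ind+path}, since $y^\star=\nabla V$ is a gradient field. Consequently $\int_c^b y^\star(t)\cdot dt=V(b)-V(c)$ along any path, and because $c\cdot y^\star(c)=V(c)$ at the optimum, the $\cM_4$ payment satisfies
\[
p(c)=c\cdot y^\star(c)+\int_c^b y^\star(t)\cdot dt=V(c)+\big(V(b)-V(c)\big)=V(b).
\]
Taking expectations gives $Rev_4=-\E_c[p(c)]=-V(b)=Rev_5$. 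Combined with $Rev_2\ge Rev_4$ from \Cref{thm:two+compare}, this delivers $Rev_2\ge Rev_4=Rev_5$.

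The three remaining links are ``optimizing against the wrong objective'' comparisons. For $Rev_5\ge Rev_6$: both revenues equal $-b\cdot y$ at their fixed allocations, and $y^{(5)}$ is by definition the minimizer of $b\cdot y$ over $\cY$, so $b\cdot y^{(5)}\le b\cdot y^{(6)}$. For $SW_6\ge SW_5$: both welfares equal $-\E[c]\cdot y$, and $y^{(6)}$ minimizes $\E[c]\cdot y$, so $\E[c]\cdot y^{(6)}\le\E[c]\cdot y^{(5)}$. For $SW_4\ge SW_6$: the welfare maximizer may adapt $y$ to each realized $c$, so interchanging minimum and expectation gives $\E_c[\min_{y\in\cY}c\cdot y]\le\min_{y\in\cY}\E_c[c\cdot y]=\E[c]\cdot y^{(6)}$, i.e. $SW_4=-\E_c[V(c)]\ge-\E[c]\cdot y^{(6)}=SW_6$. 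Together with $SW_2=SW_4$ this yields $SW_2=SW_4\ge SW_6\ge SW_5$.

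The main obstacle is precisely the equality $Rev_4=Rev_5$: the surprising part is that $\cM_4$'s payment is constant in $c$ even though its allocation $y^\star(c)$ genuinely moves as the active demand constraints change. Everything hinges on the envelope identity $\nabla V=y^\star$ (equivalently $\sum_k c_k\,\partial_{c_j}y^\star_k(c)=0$), which makes the information rent $V(b)-V(c)$ cancel the realized cost $V(c)$ exactly. I would be careful to justify $\nabla V=y^\star$ at the non-smooth points where the binding set switches: $V$ is concave and Lipschitz on the compact cost box, hence absolutely continuous along the integration path, so the finitely many kinks form a null set and the fundamental theorem of calculus still gives $\int_c^b y^\star\cdot dt=V(b)-V(c)$. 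The other links require only feasibility of the various allocations and the convexity/uniqueness facts already used to define $y^\star$.
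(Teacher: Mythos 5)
Your proof is correct and follows essentially the same route as the paper: the same chain of reductions ($Rev_2\ge Rev_4$ from \Cref{thm:two+compare}; $\cM_4$'s payment is the constant $b\cdot y(b)=V(b)$, matching $\cM_5$'s; cross-feasibility of the fixed allocations for $Rev_5\ge Rev_6$ and $SW_6\ge SW_5$; and pointwise-versus-average optimality for $SW_2=SW_4\ge SW_6$). Your value function $V(c)=\min_{y\in\cY}c\cdot y$ with $\nabla V=y^\star$ is exactly the potential function $\Psi(c)$ the paper constructs in the proof of \Cref{lem:ind+path}; your envelope-theorem packaging just makes the constancy of $\cM_4$'s payment (and its behavior at kinks) slightly more explicit than the paper's vertex/smooth-segment case analysis.
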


An interesting result in \Cref{thm:lose} is that $Rev_4$ equals $Rev_5$. Recall that $\cM_4$ corresponds to a platform maximizing social welfare, while $\cM_5$ represents a revenue-maximizing platform. The loss of market power due to the creator union and the welfare-oriented objective is economically equivalent to the reduction in price discrimination ability resulting from pre-signed contracts. This phenomenon, observed for the first time, offers new insights into procurement mechanism design in the era of GenAI.
From \Cref{thm:lose}, we also observe that the introduction of a data broker reduces both platform revenue and social welfare. More surprisingly, the revenue of a Type 1 platform in the three-layer market falls below that of a Type 2 platform in the two-layer market. Moreover, under mechanisms $\cM_5$ and $\cM_6$, the data broker’s profit is exactly zero, indicating no benefit even to himself. This outcome arises because, in the three-layer market, the platform must make decisions before observing creators’ cost reports, leading to conservative and distorted allocations to satisfy hard demand constraints. These findings underscore the need for stronger government regulation of third-party data platforms to improve overall social welfare.

Finally, we present the following byproduct as a lower bound on the revenue of $\cM_3$, based on $\cM_5$ and obtained by constructing a feasible solution to Problem~\ref{eq:rev-2}.
\begin{corollary}\label{cor:lower+rev3}
It holds that $Rev_3\ge Rev_5$, providing a lower bound of $Rev_3$.
\end{corollary}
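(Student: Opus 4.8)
The plan is to exhibit a single feasible point of Problem~\ref{eq:rev-2} whose objective value equals $Rev_5$; since $Rev_3$ is the supremum of that objective over all feasible points, this immediately gives $Rev_3\ge Rev_5$. The construction simply transplants the output of $\cM_5$ into the union problem, bypassing the broker layer.

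First I would recall the explicit form of $\cM_5$ from \Cref{alg:opt-MD_3-Rev}: the first-stage allocation $z$ solves the auxiliary problem with every virtual cost replaced by the upper bound $b$, and the platform's transfer to the broker is the constant $t=b\cdot y=\sum_{k=1}^K b_k z_k^{\rho_k}$, so that $Rev_5=\E_c[-t]=-\sum_{k=1}^K b_k z_k^{\rho_k}$. I would then define a candidate for Problem~\ref{eq:rev-2} by making both the allocation and the scalar payment report-independent: set $x_k(c)=z_k$ for every $k$ and every $c$, and $p(c)=t$. Checking feasibility is routine. The demand and nonnegativity constraints hold verbatim because $z$ already satisfies $z_k+\sum_{i=1}^K\mu_{ik}z_i^{\gamma_{ik}}\ge d_k$ and $z_k\ge 0$ in Problem~\ref{eq:main}. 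The IC constraint holds with equality, since neither $x$ nor $p$ depends on the report and both sides of $p(c)-\sum_k c_k x_k^{\rho_k}(c)\ge p(\hat c)-\sum_k c_k x_k^{\rho_k}(\hat c)$ collapse to $t-\sum_k c_k z_k^{\rho_k}$. For IR I would invoke boundedness of costs: $p(c)-\sum_k c_k z_k^{\rho_k}=\sum_{k=1}^K(b_k-c_k)z_k^{\rho_k}\ge 0$ because $c_k\le b_k$ and $z_k\ge 0$ pointwise. Hence the candidate is feasible and its objective is $\E_c[-p(c)]=-t=Rev_5$, which yields the claim.

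There is no genuine technical obstacle here; the only step requiring care is the IR verification, and this is exactly where the conservative ``worst-case'' payment baked into $\cM_5$ (using $b$ rather than the realized or virtual cost) does the work: because the broker is paid as if every creator reported the maximal cost $b$, the same constant payment keeps every realized type individually rational even though it is report-independent. Economically, this reflects the fact that the broker earns zero profit under the optimal $\cM_5$, so the platform can reproduce the three-layer outcome by offering the union the identical constant contract; the union problem's additional freedom to use type-dependent allocation and payment rules can then only improve on this benchmark, giving $Rev_3\ge Rev_5$.
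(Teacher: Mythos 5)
Your proposal is correct and matches the paper's own argument: both construct the constant mechanism $x(c)=z$, $p(c)=\sum_k b_k z_k^{\rho_k}$ from the output of $\cM_5$, verify demand, IC (with equality), and IR (via $c_k\le b_k$), and conclude by optimality of $Rev_3$ over feasible points of Problem~\ref{eq:rev-2}. No differences worth noting.
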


We end this section by noting that although the proof of the lose-lose effect relies on the specific model, we conjecture that it is a pervasive phenomenon in three-layer markets. The fundamental reason for the lose-lose effect is as follows. When a platform and a broker sign a public contract for content procurement, such as the agreement between OpenAI and The Wall Street Journal, such large-scale deals are typically difficult to keep private due to regulatory requirements. As a result, creators in the relevant domain anticipate higher demand for their content and raise their payment requirements. This weakens the platform’s ability to price discriminate among human creators and naturally reduces the platform’s revenue.
For social welfare, once human creators become aware of the demand specified in the broker’s contract, the broker must not only compensate them for their production costs but also incentivize them to generate at least a target level of content. This requires the broker to provide additional informational rents to ensure participation and effort. Since social welfare depends solely on the true production costs, these additional incentives introduce distortions in the allocation. As a result, the maximization of social welfare is distorted, leading to lower social welfare in the three-layer market compared to the two-layer market.

\section{Empirical Studies}\label{sec:num}
\begin{figure}[!tbp]
    \centering
    \subfigure[Optimal $x_1$ under different $d_1$ and $d_2$.]{
        \includegraphics[width=0.4\textwidth]{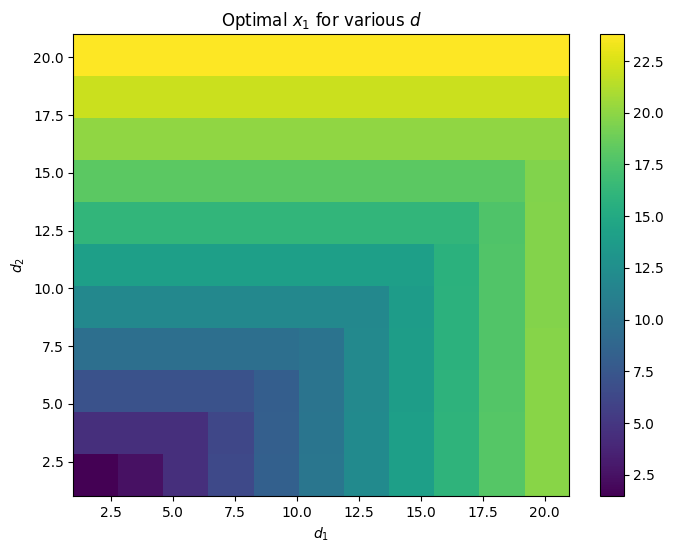}
        \label{fig:x1_d}
    }
    \subfigure[Optimal $x_2$ under different $d_1$ and $d_2$.]{
        \includegraphics[width=0.4\textwidth]{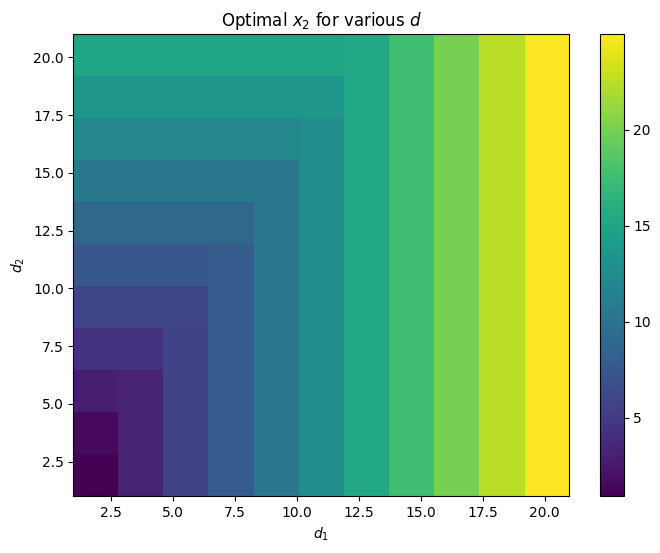}
        \label{fig:x2_d}
    }
    \caption{The overproduction phenomenon with respect to demand $d$.}
    \label{fig:changing_d}
\end{figure}
\paragraph{Impact of costs on overproduction.}
We conduct synthetic experiments to examine the phenomenon of overproduction under the setting of \Cref{sec:optMD}. For visualization, we consider $K=2$ domains. The hyperparameters $c$, $\mu$, and $\gamma$ are drawn from $\text{Unif}[0,1]$, while $\rho$ follows $\text{Unif}[1,2]$. All hyperparameters are vector-valued, and subscripts are omitted for clarity. We vary $d_1$ and $d_2$ between 1 and 20 and plot their relationship with the optimal allocations $x_1$ and $x_2$ derived from \Cref{alg:opt-MD}.
As shown in \Cref{fig:x1_d}, when $d_2$ is high (e.g., between 15 and 20), $x_1$ remains nearly constant regardless of $d_1$, indicating that the constraint $x_1 + \sum_{i=1}^2 \mu_{i1} x_i^{\gamma_{i1}} \ge d_1$ is non-binding and overproduction occurs in the first domain. The realized costs in this case are 0.035 and 0.992, respectively. Similarly, \Cref{fig:x2_d} shows overproduction in the second domain when $d_1 \ge 12.5$, where $x_2$ changes little as $d_2$ increases, implying that $x_2 + \sum_{i=1}^2 \mu_{i2} x_i^{\gamma_{i2}} \ge d_2$ becomes loose. The corresponding realized costs are 0.989 and 0.055.
When relative production costs are low, overproduction is more likely to occur. The economic intuition lies in the transferability and substitution effects across domains: producing low-cost content in one domain can help satisfy demand in others through data transfer. Although this leads to excessive production and additional expenditure in the low-cost domain, the resulting savings in high-cost domains can more than offset it. Hence, overproduction in some domains can ultimately increase platform revenue. Moreover, these experiments empirically support \Cref{corr:positive}, confirming that the optimal allocation $x_k(c)$ remains positive for all cost realizations $c$.

\paragraph{Impact of transferability parameters on overproduction.}
We next examine the specific effects of the transferability parameters $\mu$ and $\gamma$. To isolate their influence, we fix $c_1 = c_2 = 0.5$, $d_1 = d_2 = 10$, and $\rho_1 = \rho_2 = 4$.
We first study the role of $\mu$ by setting 
$\gamma_{11}=\gamma_{12}=\gamma_{21}=\gamma_{22}=0.5$.
Consider an extreme case where $\mu_{12}=1$ and $\mu_{11}=\mu_{21}=\mu_{22}=0$,
meaning that only content in domain 1 can facilitate production in domain 2, but not vice versa. As shown in \Cref{fig:x1_mu}, when $d_2$ is high and $d_1$ is low, overproduction occurs in domain 1 indicated by the uniform color in the upper-left region. This happens because domain 1 content has higher transferability, enabling the platform to train GenAI models that generate outputs in both domains. For instance, when $d_2 = 20$ and $d_1 \le 10$, the optimal $x_1$ is around 10, well above the demand $d_1$, so the constraint $x_1 + \sum_{i=1}^2 \mu_{i1}x_i^{\gamma_{i1}} \ge d_1$ is non-binding. As $d_1$ increases, this constraint becomes binding and $x_1$ rises accordingly.
In contrast, \Cref{fig:x2_mu} shows no such pattern, since the constraint related to $d_2$ remains binding across all parameter combinations. Interestingly, as $d_1$ increases, the optimal $x_2$ decreases because a larger $x_1$ transfers more knowledge to domain 2, reducing the need for direct content production there. These results highlight the asymmetric effect of $\mu$.
We then turn to the influence of $\gamma$ by fixing $\mu_{11}=\mu_{12}=\mu_{21}=\mu_{22}=0.5$
and setting $\gamma_{12}=1$, while $\gamma_{11}=\gamma_{21}=\gamma_{22}=0$.
In this case, domain 1 content can continuously support the production of domain 2 content, while the reverse effect remains constant. We again observe overproduction in domain 1. Moreover, when $d_1$ is large but $d_2$ is small, human-created content in domain 1 provides sufficient transferable knowledge, keeping $x_2$ nearly constant and close to zero (\Cref{fig:x2_gamma}).
Overall, these experiments confirm that higher transferability, whether through $\mu$ or $\gamma$, increases the likelihood of overproduction across domains.

\begin{figure}[!tbp]
    \centering
    \subfigure[Optimal $x_1$ under different $d_1$ and $d_2$ when $\mu_{12}=1$.]{
        \includegraphics[width=0.2\textwidth]{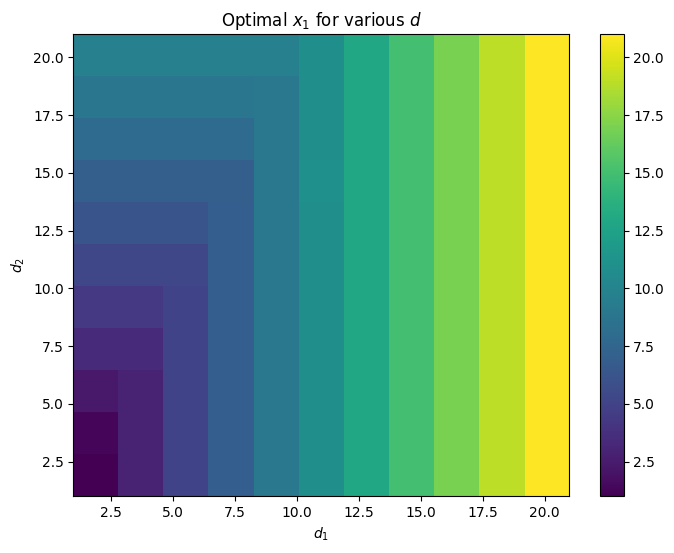}
        \label{fig:x1_mu}
    }
    \subfigure[Optimal $x_2$ under different $d_1$ and $d_2$ when $\mu_{12}=1$.]{
        \includegraphics[width=0.2\textwidth]{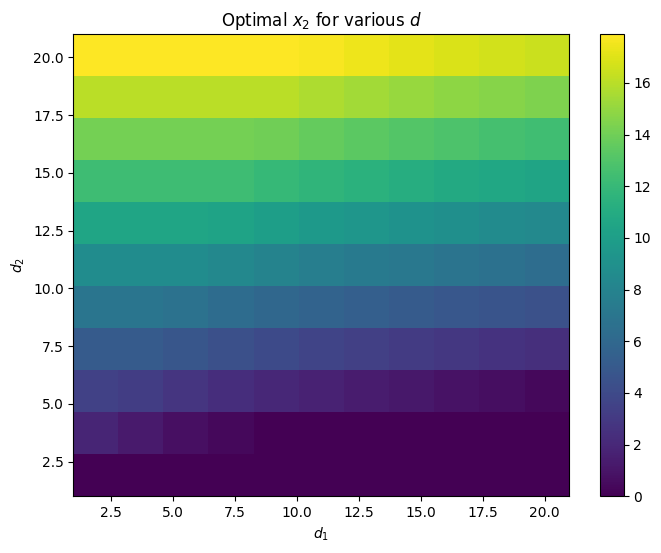}
        \label{fig:x2_mu}
    } 
    \subfigure[Optimal $x_1$ under different $d_1$ and $d_2$ when $\gamma_{12}=1$.]{
        \includegraphics[width=0.2\textwidth]{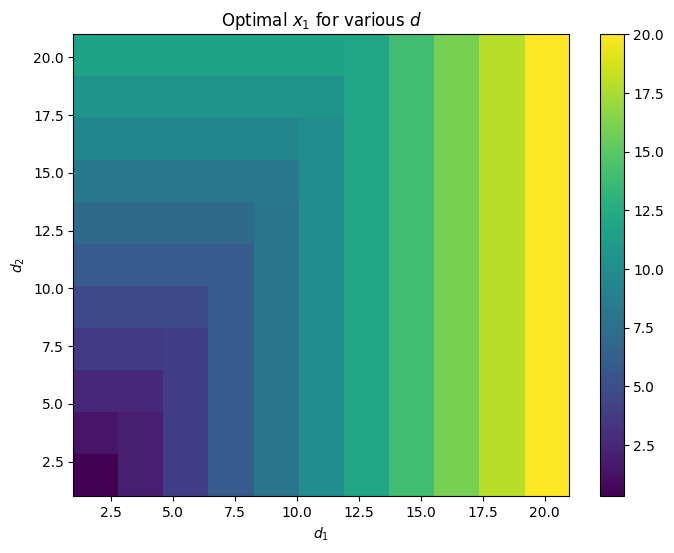}
        \label{fig:x1_gamma}
    }
    \subfigure[Optimal $x_2$ under different $d_1$ and $d_2$ when $\gamma_{12}=1$.]{
        \includegraphics[width=0.2\textwidth]{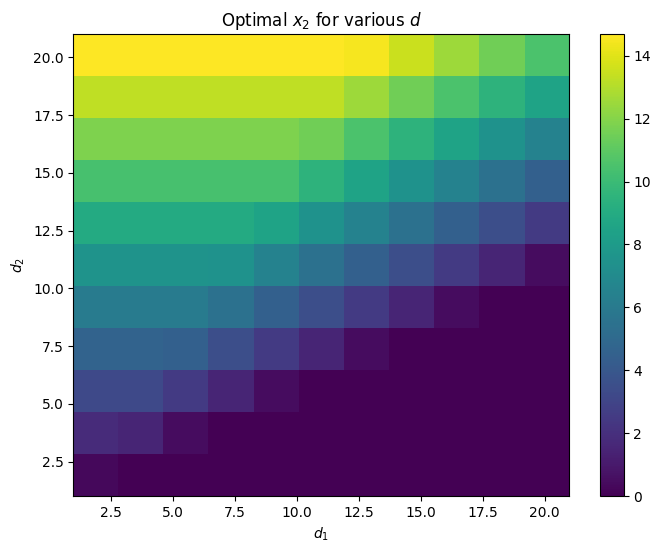}
        \label{fig:x2_gamma}
    }
    \caption{The influence of content transferability on the optimal production.}
\end{figure}

\paragraph{Experimental evaluation of revenue and welfare among mechanisms.} 
We next conducted a series of experiments to compare the performance of \Cref{alg:opt-MD,alg:opt-MD_2,alg:opt-MD_3-Rev,alg:opt-MD_3-SW,algo:MD-two-SW}. The results corroborate \Cref{thm:two+compare,thm:lose}, showing that the empirical rankings of revenue and social welfare align with the theoretical predictions. As illustrated in \Cref{fig:box}, mechanism $\cM_1$ exhibits more outliers and extreme values compared with $\cM_2$. This suggests that welfare-maximizing platforms may operate in a more stable market environment, where both social welfare and revenue fluctuate less, offering a possible explanation for why some companies choose to prioritize social welfare. Detailed results are provided in \Cref{app:exp}.

\begin{figure}[htbp]
    \centering
    \subfigure[Box plot of revenue.]{
        \includegraphics[width=0.4\textwidth]{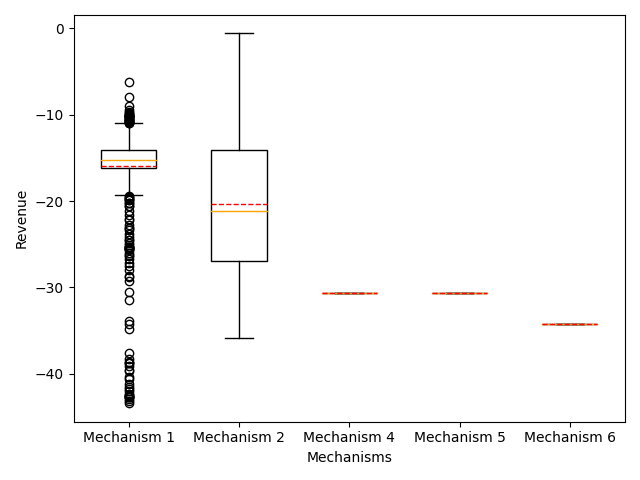}
        \label{fig:rev_box}
    }
    \subfigure[Box plot of social welfare.]{
        \includegraphics[width=0.4\textwidth]{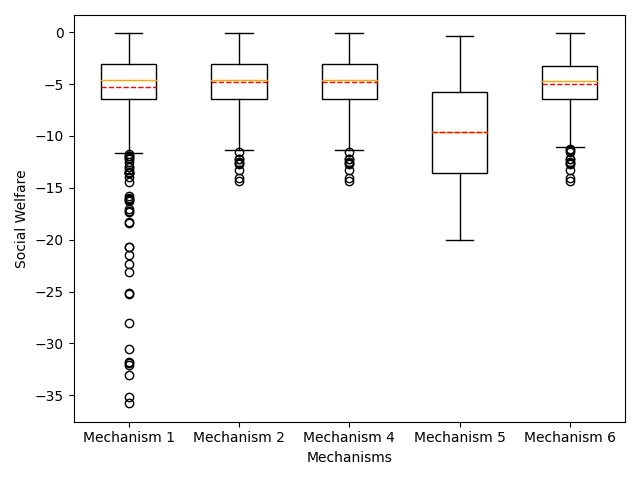}
        \label{fig:SW_box}
    }
    \caption{Concrete revenue and social welfare of different mechanisms \rebuttal{(red: mean; orange: median)}.}
    \label{fig:box}
\end{figure}

\paragraph{With and Without GenAI.}
Finally, we simulate the difference between scenarios with and without GenAI, that is, with $\mu = 0$. Using the same setup as the preceding evaluation and mechanism $\cM_1$ as an illustrative example, we report detailed results in \Cref{tab:without} of \Cref{app:exp}. In our framework, GenAI leverages human-created content to generate additional material, allowing platforms to meet total demand with less direct human creation. Consequently, both revenue and social welfare increase in the presence of GenAI. Specifically, the sample-averaged revenue rises from –30.90 to –16.52, and sample average social welfare improves from –5.66 to –4.47, representing gains of 46.54\% and 21.02\%, respectively. These results indicate that GenAI enhances production efficiency while increasing both platform revenue and social welfare. However, as the primary beneficiary of GenAI, the platform enjoys a substantially larger increase in revenue than the corresponding improvement in social welfare. Although the magnitude of improvement depends on parameter initialization, the asymmetric growth between revenue and welfare reflects a broader pattern observed in practice~\citep{hosseini2025distributive}, highlighting reflections regarding the distributional impact of GenAI adoption.

\section{Conclusion and Discussion}
In this paper, we study \emph{non-linear procurement mechanism design} in markets where \emph{human creators coexist with GenAI}. We develop optimal mechanisms for platforms seeking to maximize either revenue or social welfare across three market structures: two-layer markets (with or without a creator union) and three-layer markets involving data brokers. Our analysis shows that although the rapid advancement of GenAI substantially affects creators in certain domains, it cannot fully replace them in competitive environments~\citep{chiang2023can}. Finally, we uncover a striking result: the three-layer market structure leads to a \emph{lose-lose} outcome, echoing growing societal concerns over AI regulation and governance~\citep{smuha2021race,de2021artificial,minssen2023challenges}.


\rebuttal{This work leaves open many interesting future directions.}   For general high-dimensional mechanism design problems, such as Mechanism $\cM_3$, can the optimal solution be computed exactly in polynomial time? Given that our optimization problem retains partial linearity, could approximate solvers be effective when parameters $\rho$ and $\gamma$ are unknown? Since the three-layer market leads to simultaneous declines in both revenue and social welfare, how should government intervention be structured, for instance, what constitutes an appropriate tax rate for data brokers? Furthermore, does competition among human creators yield excess profits for data intermediaries? \rebuttal{Finally, our static model leads to predictions and insights about the ultimate equilibrium outcomes. How will these insights change in dynamic settings when GenAI and human creators dynamically interact?}  In light of the growing demand for data in the AI era and the distinctive nature of data as an economic good, these questions open several promising directions for future exploration.

\bibliography{Sections/reference}
\bibliographystyle{tmlr}
\newpage
\appendix
\section{Omitted Details in \Cref{sec:intro}}
\subsection{Related Works}\label{app:related}
We summarize the following four lines of existing literature pertinent to our work.

\paragraph{Procurement Mechanism Design.}
The topic of optimal mechanism design was introduced by the seminal paper~\citet{myerson1981optimal}. Over decades of academic attention, researchers have not only focused on optimal mechanism design from the seller's perspective but have also begun to consider it from the buyer's perspective, especially when buyers are either significantly large or even monopolistic, leading to studies on optimal procurement mechanism design~\citep{mcafee1987auctions,laffont1993theory}.
In the economics literature, \citet{burguet2012limited} examines mechanism design in procurement under limited liability and explores ways to reduce the sponsor's cost, while \citet{gerding2010scalable} addresses the service procurement problem in the presence of uncertain duration.
\citet{garg2008mechanism} adopts the framework of a Stackelberg game to study procurement auction design.
In management science and supply chain literature, \citet{zhang2010procurement} investigates procurement with price-sensitive demand, while \citet{prasad2013mechanism,ketankumar2015green} focus on the procurement of cloud computing resources.
\citet{jin2002procurement,huang2011mechanism} attempt to design optimal procurement auctions to enhance suppliers' profit margins.
Our paper goes beyond traditional procurement mechanism design for physical goods by focusing on the digital content market, which features transferability. This extends the boundaries of procurement mechanism design in this domain.

\paragraph{GenAI Content Creation and Data Pricing.} The development of generative AI has brought AI-generated content into the economic market, sparking research on the nature of AI products and traditional goods. \citet{liu2024llm} discusses the impact of large language models (LLM) on traditional search engines, while \citet{zhang2024llm} explores advertising business in the context of LLMs. \citet{iyer202413} conducts a case study on the role of LLMs in e-commerce. \citet{chkirbene2024large} provides a comprehensive survey of GenAI's applications, challenges, and trends from several aspects, including content creation and marketing. Meanwhile, due to the enormous data demand for GenAI training, public datasets are becoming insufficient, leading companies to shift toward private datasets, which has raised concerns about data pricing. Data Shapley and its variants are the most commonly used methods for data pricing~\citep{jia2019towards,ghorbani2019data,schoch2022cs}, while recent \citet{ai2024instrumental} proposes a novel data pricing method based on instrumental value. Our paper studies how much payment is needed to incentivize human creators to generate sufficient content within an agent-based framework, broadening the perspective on data pricing related to GenAI.

\paragraph{Human vs. Generative AI Competition.}
With the rapid development of generative AI, competition between human creators and GenAI has gradually emerged~\citep{esmaeili2024strategize}. 
In contrast to traditional automation~\citep{acemoglu2019automation}, GenAI relies on human-created data for training, thereby simultaneously complementing and competing with human labor.
\citet{yao2024human} employs the framework of a Tullock contest to examine the symbiosis and conflict between humans and generative AI.
\citet{immorlica2024generative} examines the augmentation of users by AI agents and the corresponding shifts in equilibrium. \citet{gao2025pandora}, within a game-theoretic framework, investigates the impact of AI tools on market outcomes and consumer welfare. It highlights that in certain scenarios, both platforms and creators may be reluctant to adopt GenAI. Moreover, some papers discuss the unique economic characteristics of LLMs, or broadly GenAI; please refer to~\citet{fish2024algorithmic,duetting2024mechanism} for further details. Meanwhile, some papers explore the collaboration between humans and GenAI in various tasks~\citep{fui2023generative,singh2023human,han2024teams}. We adopt a mechanism design perspective to study the collaboration and competition between human creators and GenAI. GenAI relies on human-created content for training, while the content generated by AI partially substitutes human-created content, providing a novel approach to studying the competition between humans and AI. 


\paragraph{Three-Layer Markets.}
The three-layer market offers a more comprehensive framework for characterizing data markets compared to the two-layer market. For further details, \citet{zhang2024survey} provides a survey on data markets. \citet{fallah2024three} models the three-layer data market as a multistage game and focuses on the subgame Nash equilibrium, while \citet{xu2020data} considers a three-layer Stackelberg game in the car-sharing market. Besides, \citet{reinders2018exploration,song2019contract} consider three-layer structures in the domain of supply chain. \citet{wang2023privacy} investigates privacy protection issues in a blockchain-based P2P three-layer market.
\citet{balseiro2021multistage,zeithammer2024auctions}  investigate the impact of intermediaries on the payoffs of buyers and sellers in auctions, as well as the phenomenon of double-shading in the market.
We compare the impacts of two-layer and three-layer markets, within the framework of competition between humans and GenAI, on both platform revenue and social welfare. We theoretically demonstrate that the three-layer structure would lead to a lose-lose outcome, offering a novel perspective for future AI data regulation.

\rebuttal{
\subsection{Relative Terms}\label{app:terms}
We use this subsection to illustrate some terminology used in this paper for readers with less economic background.  These terms are definitional, but we also provide the economic intuition behind them to assist the reader's understanding.
\begin{itemize}
    \item Virtual costs: A variant of the original cost, formally defined as $c+\frac{F}{f}$. Because the platform must incentivize creators to report their true costs, the effective cost the platform faces is the human creator’s production cost $c$ plus the incentive-related virtual cost $\frac{F}{f}$.
    \item Reverse hazard rates: Originating from survival analysis~\citep{kalbfleisch2002statistical}, and defined in this paper as $\frac{f}{F}$. It is used to characterize the log-concavity of $F$.
    \item Incentive compatibility: The terminology for a desirable mechanism property that for any creator with private cost $c$, truthfully reporting $c$ to the platform always maximizes the creator’s expected utility (payment minus cost). Hence incentive incentive-compatible mechanisms can help avoid dishonest agent behaviors during information elicitation. 
    \item Individual rationality: The terminology of another desirable mechanism property that human creators always have non-negative expected utility (so they prefer to stay in the market than to exit).
    \item Path independence: When payments are defined through a path integral (e.g., $\int_{\hat c}^b y(t)\cdot dt$ in Mechanism $\cM_4$), we require that the integral yields the same value for any path from $\hat c$ to $b$. 
    \item Myerson’s analysis: This refers to a classic and well-known analysis framework of \citet{myerson1981optimal} that converts payments into integrals of the allocation functions, which reduces mechanism design to allocation rule design. 
    \item Convex procurement: We hope the feasible set induced by demand constraints to be convex, so that the optimal mechanism can be computed efficiently in practice.
    \item Valid mechanisms: A creator’s higher cost should lead the platform to procure less from them, while other creators’ higher costs should lead the platform to procure more from this creator. This reflects substitutability in production.
\end{itemize}
}

\section{Omitted Optimal Mechanisms and Algorithms}\label{app:M2}
We now detail the concrete implementation of mechanisms $\cM_2$ and $\cM_6$.
\begin{algorithm}[htbp]
   \caption{Mechanism $\cM_2$ for welfare-maximizing platforms without union in the two-layer market.}
\begin{algorithmic}\label{algo:MD-two-SW}
    \STATE {\bfseries Input:} Report $\hat c$.
    \STATE Replace virtual cost: $vc(\hat c)$ by $\hat c$.
    \STATE Call \Cref{alg:opt-MD} for $y$.
    \STATE Calculate the allocations: $x_k(\hat c)=y_k^{1/\rho_k}$ for any $k\in[K]$.
    \STATE Calculate the payments: $p_k(\hat c)=\hat c_k y_k+\int_{\hat c_k}^{b_k}y_k(\hat c_{1},...,\hat c_{k-1},t_k,\hat c_{k+1},...,\hat c_K)dt_k$ for any $k\in[K]$.
    \STATE {\bfseries Output:} Allocation-payment pair $(x(\hat c),p(\hat c))=((x_1(\hat c),...,x_K(\hat c)),(p_1(\hat c),...,p_K(\hat c)))$.
\end{algorithmic}
\end{algorithm}

\begin{algorithm}[htbp]
   \caption{Mechanism $\cM_6$ for Type 2 platforms with data broker in the three-layer market.}
    \label{alg:opt-MD_3-SW}
\begin{algorithmic}
    \STATE \textit{\# First-stage mechanism design.}
    \STATE Replace virtual cost: $vc(\hat c)$ by $\E[c]$.
    \STATE Call \Cref{alg:opt-MD} for $y$.
    \STATE Calculate the allocations: $z_k=y_k^{1/\rho_k}$ for any $k\in[K]$.
    \STATE {\bfseries Output:} Allocation-payment pair $(z,t)=((z_1,...,z_K),b\cdot y)$.
    \STATE \textit{\# Second-stage mechanism design.}
    \STATE {\bfseries Input:} Allocation $z$, payment $t$ and report $\hat c$.
    \STATE Calculate the payment: $p_k(\hat c)=b_k z_k^{\rho_k}$.
    \STATE {\bfseries Output:} Allocation-payment pair $(x(\hat c),p(\hat c))=(z,(b_1z_1^{\rho_1},...,b_Kz_K^{\rho_K}))$.
\end{algorithmic}
\end{algorithm}

\section{Omitted Details in \Cref{sec:two+market}}
\rebuttal{
\subsection{Including GenAI Costs}\label{app:cost}
In practice, the generation cost of GenAI is typically negligible compared to that of human creators. For example, once a large model such as Sora 2 is already trained, producing a new video takes only a few minutes of computation, whereas professional human creators require substantial compensation and effort to produce comparable content.

However, our model can naturally extend to settings with small but non-zero GenAI training costs. We assume the training cost of GenAI is proportional to human production cost, and the training cost parameter of GenAI is $c^G_k$ in domain $k$. In other words, the platform needs to pay $c_k^G x_k^{\rho_k}$ to obtain a GenAI agent for domain $k$. Consider Mechanism $\cM_1$ as an example: our transformation is derived entirely from IC and IR constraints, independent of the objective function. Therefore, incorporating GenAI costs is equivalent to adding $\sum_{k=1}^K c^G_k y_k$ to the platform’s objective. The optimization problem in \Cref{alg:opt-MD} becomes as follows:
\begin{align*}
 y=\argmin_{y}\ & \sum_{k=1}^K (vc_k(\hat c)+c^G_k)y_k\\
 \text{s.t.}\  &   y_k^{1/\rho_k}+\sum_{i=1}^K\mu_{ik}y_i^{\gamma_{ik}/\rho_i}\ge d_k \text{ and } y_k\ge 0\ \text{for any }k\in[K].
\end{align*}
As a result, we only need to adjust the virtual cost from $vc_k$ to $vc_k+c^G_k$ to recover the optimal mechanism under non-zero GenAI costs.

The same modification applies to other mechanisms as well. For instance, in Mechanism $\cM_2$, replacing $\hat c_k$ with $\hat c_k+c_k^G$ leaves the structure and insights unchanged. For other kinds of GenAI costs, we can similarly include the cost in the objective without modifying the demand constraints, with or without access to a closed-form solution, depending on the concrete cost form.
}

\subsection{Further Complexity Reduction}\label{app:reduction}
Note that we adopt a deterministic payment rule in \Cref{alg:opt-MD}. However, we can move to a random manner as follows to reduce the computational complexity in \Cref{thm:complexity}. Here, $y_k(\hat c_1,...,\hat c_{k-1},\tilde c_k,\hat c_{k+1},\hat c_K)$ indicates that the input report is $(\hat c_1,...,\hat c_{k-1},\tilde c_k,\hat c_{k+1},\hat c_K)$ without ambiguity.
\begin{equation}\label{eq:expect}
    \text{Sample }\tilde{c_k}\sim\text{Unif}(\hat c_k,b_k)\text{ and }p_k(\hat c)=\hat c_ky_k(\hat c)+(b_k-\hat c_k)y_k(\hat c_1,...,\hat c_{k-1},\tilde c_k,\hat c_{k+1},\hat c_K).
\end{equation}
\begin{proposition}\label{cor:one-sample}
    Replacing the payment rule in \Cref{alg:opt-MD} by \Cref{eq:expect}, using the subgradient method as the oracle, the computational complexity of getting a $\epsilon$-optimal mechanism is $\cO(\frac{1}{\epsilon^2})$. Specially, if $\max_i\{\frac{\gamma_{ik}}{\rho_i}\}<1$ for any $k\in[K]$, the computational complexity reduces to $\cO(\frac{1}{\epsilon})$.
\end{proposition}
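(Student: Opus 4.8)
The plan is to show that the randomized payment of \Cref{eq:expect} is an \emph{unbiased} estimator of the deterministic payment of \Cref{alg:opt-MD}, so that every property certified by \Cref{thm:opt} (validity, IC, IR, optimality in expectation) transfers verbatim, while the expensive numerical integration that dominated the cost in \Cref{thm:complexity} is eliminated. The only genuinely new content is a one-line expectation identity plus a careful bookkeeping of oracle calls.

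First I would verify unbiasedness. Since $\tilde c_k\sim\text{Unif}(\hat c_k,b_k)$ has density $1/(b_k-\hat c_k)$ on $[\hat c_k,b_k]$,
\[
\E_{\tilde c_k}\big[(b_k-\hat c_k)\,y_k(\hat c_1,\dots,\tilde c_k,\dots,\hat c_K)\big]=\int_{\hat c_k}^{b_k}y_k(\hat c_1,\dots,t_k,\dots,\hat c_K)\,dt_k,
\]
so adding the shared deterministic term $\hat c_k y_k(\hat c)$ gives $\E_{\tilde c_k}[p_k(\hat c)]$ exactly equal to the payment of \Cref{alg:opt-MD}. The allocation rule is untouched — it is still the single deterministic solve $x_k(\hat c)=y_k(\hat c)^{1/\rho_k}$ — so validity in the sense of \Cref{def:refine} holds unchanged.

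Second I would argue that unbiasedness preserves IC, IR, and the revenue objective. All three are linear in the payment and the creators are risk-neutral, so taking the extra expectation over $\tilde c_k$ and invoking the identity above shows that, under any report, the creator's expected utility, the platform's expected revenue, and hence both the IC and IR inequalities coincide with those of the deterministic mechanism of \Cref{thm:opt}. Thus the randomized mechanism remains valid, IC, IR, and optimal in expectation; only the realized (not expected) payment becomes random, which is immaterial because both the $\epsilon$-optimality target and the incentive constraints are defined through expectations.

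Finally, the complexity count. Evaluating $y_k$ at the sampled point $\tilde c_k$ requires exactly one additional call to the auxiliary convex program per coordinate $k$, hence $\cO(K)$ oracle calls in total — a number independent of $\epsilon$ — in place of the numerical quadrature of the deterministic payment, which was responsible for the extra $1/\epsilon^2$ factor in \Cref{thm:complexity}. By the single-solve analysis underlying \Cref{thm:complexity}, the subgradient method returns an $\epsilon$-accurate solution of one such program in $\cO(1/\epsilon^2)$ iterations in general, and in $\cO(1/\epsilon)$ iterations when $\max_i\{\frac{\gamma_{ik}}{\rho_i}\}<1$ (the regime in which the constraint functions are smooth on the feasible region, which stays bounded away from the origin by \Cref{corr:positive} and \Cref{ass:lower}). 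Multiplying the $\epsilon$-independent number of oracle calls by the per-solve cost yields the stated $\cO(1/\epsilon^2)$ and $\cO(1/\epsilon)$ bounds, and the $\cO(\epsilon)$ solution error of each call propagates linearly to the allocation and the unbiased payment, so $\epsilon$-optimality in expectation is preserved. The main obstacle is precisely this second step: one must be certain that collapsing a deterministic integral to a single random draw costs nothing in optimality or truthfulness, and this is exactly where risk-neutrality and the linearity of the objective and the IC/IR constraints in the payment are indispensable — the single-sample variance would matter only if any guarantee were stated pathwise rather than in expectation, so I would be careful to phrase all guarantees in expectation throughout.
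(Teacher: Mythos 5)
Your proposal is correct and follows essentially the same route as the paper: establish that the single-sample payment is an unbiased estimator of the deterministic integral (so IC, IR, and expected revenue are unchanged by risk-neutrality), then observe that the per-allocation subgradient solve costs $\cO(1/\epsilon^2)$ in general and $\cO(1/\epsilon)$ in the restricted regime, with only a constant number of extra oracle calls replacing the Monte Carlo quadrature. The one small discrepancy is that you attribute the $\cO(1/\epsilon)$ rate to smoothness of the constraints, whereas the paper derives it from \emph{strong convexity} of the feasible set (\Cref{lem:strong-convex}) when $\max_i\{\gamma_{ik}/\rho_i\}<1$; the conclusion is unaffected.
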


\subsection{Omitted Proof in \Cref{sec:optMD}}
\subsubsection{Proof of \Cref{thm:opt}}\label{app:opt}
Recall Problem~\ref{eqn:optimization} and we use $y_k(c)=x_k^{\rho_k}(c)$ to simplify the notations. Besides, we use $Q_k(y,c_k)$ and $u_k(y,p,c_k)$ to denote $\int y_k(c)f_{-k}(c_{-k})dc_{-k}$ and $\int [p_k(c)-c_ky_k(c)]f_{-k}(c_{-k})dc_{-k}$ respectively. Therefore, $Q$ represents the units human creators will produce and $u$ represents their expected utility. 

Then, we have the following propositions regarding $Q_\cdot(\cdot,\cdot)$ and $u_\cdot(\cdot,\cdot,\cdot)$.
\begin{proposition}\label{prop:IC+IR}
    We have that
    \begin{itemize}
        \item for any $c_k\in[a_k,b_k]$, it holds that $u_k(y,p,c_k)=u_k(y,p,b_k)+\int_{c_k}^{b_k} Q_k(y,t_k)d_{t_k}$.
        \item for any $k\in[K]$, it holds that $u_k(y,p,b_k)\ge 0$.
        \item for any $t_k\ge c_k$, it holds that $Q_k(y,t_k)\le Q_k(y,c_k)$.
    \end{itemize}
\end{proposition}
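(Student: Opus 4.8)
The plan is to note that, after the substitution $y_k(c)=x_k^{\rho_k}(c)$, the cost term $c_kx_k^{\rho_k}=c_ky_k$ is \emph{linear} in the effective allocation $y_k$, so each creator $k$ faces an ordinary single-parameter screening problem and the three claims are precisely the Myerson characterization applied coordinate by coordinate. First I would reformulate the incentive constraints in interim form. Writing $P_k(\hat c_k)=\E_{-k}[p_k(\hat c_k,c_{-k})]$ for the expected payment and recalling $Q_k(y,\hat c_k)=\E_{-k}[y_k(\hat c_k,c_{-k})]$, independence of the cost coordinates (\Cref{ass:cost}) lets $\E_{-k}$ factorize, so a unilateral misreport of coordinate $k$ leaves the distribution of $c_{-k}$ intact and the interim utility of type $c_k$ from reporting $\hat c_k$ equals $P_k(\hat c_k)-c_kQ_k(y,\hat c_k)$. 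In particular $u_k(y,p,c_k)=P_k(c_k)-c_kQ_k(y,c_k)$, the IC condition becomes $u_k(y,p,c_k)\ge P_k(\hat c_k)-c_kQ_k(y,\hat c_k)$ for all $\hat c_k$, and the IR condition becomes $u_k(y,p,c_k)\ge 0$ for all $c_k$.

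For the monotonicity claim (third bullet) I would write the IC inequality twice: for true type $c_k$ contemplating the report $t_k$, and for true type $t_k$ contemplating the report $c_k$. Adding the two inequalities cancels both expected-payment terms and collapses to $(t_k-c_k)\big(Q_k(y,c_k)-Q_k(y,t_k)\big)\ge 0$; since $t_k\ge c_k$, this forces $Q_k(y,t_k)\le Q_k(y,c_k)$, as claimed.

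For the integral representation (first bullet) I would exploit the envelope structure: $u_k(y,p,c_k)=\sup_{\hat c_k}\big(P_k(\hat c_k)-c_kQ_k(y,\hat c_k)\big)$ is a supremum of functions that are affine in $c_k$ with slopes $-Q_k(y,\hat c_k)$, hence convex and therefore Lipschitz on $[a_k,b_k]$. The Milgrom--Segal envelope theorem then yields absolute continuity together with $\frac{\ud}{\ud c_k}u_k(y,p,c_k)=-Q_k(y,c_k)$ at every point of differentiability, and integrating from $c_k$ up to $b_k$ gives $u_k(y,p,c_k)=u_k(y,p,b_k)+\int_{c_k}^{b_k}Q_k(y,t_k)\,\ud t_k$. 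An alternative that avoids quoting the envelope theorem is to combine the two one-sided IC inequalities into the sandwich $(c_k'-c_k)Q_k(y,c_k')\le u_k(y,p,c_k)-u_k(y,p,c_k')\le(c_k'-c_k)Q_k(y,c_k)$ for $c_k<c_k'$ and integrate the now-monotone (hence Riemann integrable) map $Q_k$. The second bullet is then immediate: the IR condition $u_k(y,p,c_k)\ge 0$ holds for every type, so in particular at $c_k=b_k$.

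The only genuinely delicate point is the absolute-continuity / fundamental-theorem step inside the integral representation, since I must rule out a singular part of the derivative. This is exactly where the convexity of $u_k$ and the monotonicity established in the third bullet do the work: $Q_k(y,\cdot)$ is bounded and monotone, hence integrable, and a convex function on a compact interval is Lipschitz, so the fundamental theorem of calculus applies and the envelope identity integrates cleanly. Everything else is routine single-parameter algebra enabled by the $y_k=x_k^{\rho_k}$ linearization.
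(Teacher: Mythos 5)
Your proposal is correct and follows essentially the same route as the paper: the two-sided IC inequalities yield the sandwich that gives monotonicity of $Q_k(y,\cdot)$, the envelope/integration step gives the integral representation, and IR evaluated at the worst type $b_k$ gives nonnegativity. If anything, your treatment of the regularity issue (convexity of the interim utility plus monotonicity of $Q_k$ justifying the fundamental-theorem step) is more careful than the paper's, which simply appeals to continuity of $Q_k$ ``to be shown soon'' or to the definition of the Riemann integral.
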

\begin{proof}
    From the individual rationality condition, we know that the expected utility for human creators in domain $k$ at every time is no less than zero. Then, it holds especially when the cost corresponds to $b_k$. It immediately leads to $u_k(y,p,b_k)\ge 0$. 

    Assuming the real cost parameter is $t_k$ but the human report $c_k$. Assume $c_k\le t_k$ without loss of generality. From the incentive compatibility condition, it holds that 
    \begin{align*}
        u_k(y,p,t_k)&\ge \int [p_k(c_k)-t_ky_k(c)]f_{-k}(t_{-k})d{t_{-k}}\\
        &= \int [p_k(c_k)-c_ky_k(c)]f_{-k}(t_{-k})d{t_{-k}}+(c_k-t_k)\int y_k(c)f_{-k}(t_{-k})d{t_{-k}}\\
        &= u_k(y,p,c_k) +(c_k-t_k)Q_k(y,c_k).
    \end{align*}
    Then, we know that $u_k(y,p,t_k)-u_k(y,p,c_k)\ge (c_k-t_k)Q_k(y,c_k)$.

    Similarly, we have that $u_k(y,p,t_k)-u_k(y,p,c_k)\le (c_k-t_k)Q_k(y,t_k)$ by changing the order of $t_k$ and $c_k$. Thus, it holds that
    \[
    (c_k-t_k)Q_k(y,c_k)\le u_k(y,p,t_k)-u_k(y,p,c_k)\le (c_k-t_k)Q_k(y,t_k).
    \]
    Since $c_k\le t_k$, we know that $Q_k(y,t_k)\le Q_k(y,c_k)$.

    In addition, dividing $c_k-t_k$, it holds that $\frac{\partial u_k}{\partial c_k}=-Q_k(y,c_k)$. Here we use the fact that $Q_k(y,\cdot)$ is continuous which will be shown soon. Otherwise, we can still get the following from the definition of Riemann integral. We then obtain that $u_k(y,p,c_k)+\int_{c_k}^{t_k}-Q_{k}(y,t_k)dt_k=u_k(y,p,t_k)$. Therefore, with simple algebra, we have that \[u_k(y,p,c_k)=u_k(y,p,b_k)+\int_{c_k}^{b_k} Q_k(y,t_k)d_{t_k},\]
    which ends the proof.
\end{proof}

Now, let's turn to the problem of how to maximize content creation platforms' expected revenue (minimize expected cost). We use $u_0(y,p)=\sum_{k=1}^K\int -p_k(c)f(c)dc$ to denote the expected revenue for shorthand. Recall that the demand constraints are now 
\[
y_k^{1/\rho_k}(c)+\sum_{i=1}^K\mu_{ik}y_i(c)^{\gamma_{ik}/\rho_i}\ge d_k
\]
and
\[
y_k(c)\ge 0
\] for any $k\in[K]$.

We first decompose $u_0(y,p)$ as follows, namely,
\[
    u_0(y,p)=\underbrace{\sum_{k=1}^K \int [-p_k(c)+c_ky_k(c)]f(c)dc}_{q_1}-\underbrace{\sum_{k=1}^K\int c_ky_k(c)f(c)dc}_{q_2}.
\]
Then, it holds that
\begin{align*}
    q_1&=-\sum_k u_k(y,p,c_k)f_k(c_k)dc_k\\
    &=-\sum_k\int [u_k(y,p,b_k)+\int_{c_k}^{b_k} Q_k(y,t_k)dt_k]f_k(c_k)dc_k\\
    &= \sum_k-u_k(y,p,b_k)-\int_{a_k}^{b_k}\int_{c_k}^{b_k}f_k(c_k)Q_k(y,t_k)dt_kdc_k\\
    &= \sum_k-u_k(y,p,b_k)-\int_{a_k}^{b_k} F_k(t_k)Q_k(y,t_k)dt_k\\
    &=\sum_k-u_k(y,p,b_k)-\int F_k(t_k)\int y_k(t)f_{-k}(t_{-k})dt\\
    &=\sum_k-u_k(y,p,b_k) -\int \frac{F_k(t_k)}{f_k(t_k)}y_k(t)f(t)dt.
\end{align*}
The third equation holds due to the property of p.d.f. while the fourth equation holds because we change the order of integration.

Combining $q_2$, it holds that
\[
u_0(y,p)=\sum_{k=1}^K-u_k(y,p,b_k)-\int[c_k+\frac{F_k(c_k)}{f_k(c_k)}]y_k(c)f(c)dc.
\]
By defining the virtual cost $vc_k(c)=c_k+\frac{F_k(c_k)}{f_k(c_k)}$, then we have
\[
u_0(y,p)=\sum_{k=1}^K -u_k(y,p,b_k)-\int vc_k(c_k)y_k(c)f(c)dc.
\]

Note that in \Cref{alg:opt-MD}, we minimize $\sum_{k=1}^K vc_k(c_k)y_k(c)$ subject to $
y_k^{1/\rho_k}(c)+\sum_{i=1}^K\mu_{ik}y_i(c)^{\gamma_{ik}/\rho_i}\ge d_k
$
and
$
y_k(c)\ge 0
$ for any $k\in[K]$, so we only need to prove that the payment rule indeed results in $u_k(y,p,b_k)=0$. Here, since we consider an IC mechanism, we have naturally $c=\hat c$.

Recall that from \Cref{prop:IC+IR}, it holds that
\begin{align*}
    u_k(y,p,b_k)&=u_k(y,p,c_k)-\int_{c_k}^{b_k} Q_k(y,t_k)dt_k\\
    &=-\int[c_ky_k(c)-p_k(c)]f_{-k}(c_{-k})dc_{-k}-\int_{c_k}^{b_k}\int y_k(t_k,c_{-k})f_{-k}(c_{-k})dc_{-k}dt_k\\
    &=\int[-c_ky_k(c)+p_k(c)-\int_{c_k}^{b_k}y_k(t_k,c_{-k})dt_k]f_{-k}(c_{-k})dc_{-k}.
\end{align*}
Here, we use $(t_k,c_{-k})$ to represent $(c_1,...,c_{k-1},t_k,c_{t+1},...,c_K)$ with a little abuse of notation. Since we set the payment rule $p_k(c)= c_ky_k(c)+\int_{c_k}^{b_k}y_k(t_k,c_{-k})dt_k$ in \Cref{alg:opt-MD}, we know that $u_k(y,p,b_k)=0$ exactly. 
Together with \Cref{prop:IC+IR}, we know that the first term in $u_0(y,p)$ is maximized as well. Therefore, since two parts in $u_0(y,p)$ are maximized simultaneously, $u_0(y,p)$ takes its maximum using \Cref{alg:opt-MD}, which shows the optimality of our mechanism.

Additionally, since the allocation-related $y_k(\cdot,\cdot)\ge 0$, we know that 
\[
    -c_ky_k(c)+p_k(c)=\int_{c_k}^{b_k}y_k(c_1,...,c_{k-1},t_k,c_{k+1},...,c_K)dt_k\ge 0
\]
for every $c$. Thus, we actually obtain a much stronger individual rationality condition, say $-c_kx_k^{\rho_k}(c)+p_k(c)\ge 0$. In other words, the utility of human creators is non-negative for any $k\in [K]$ and cost vector $c\in[a_1,b_1]\times\cdots\times[a_k,b_k]$. So, of course, the individual rationality holds in the expectation manner, and human creators won't exit the market due to a negative utility.

As for the incentive compatibility condition, we first show that $vc_k(c)$ is increasing with respect to $c_k$. Since we have \Cref{ass:concave}, we know that $\log (F_k(\cdot))$ is concave, it holds that $\frac{f_k}{F_k}$ is decreasing. Therefore, it holds that $vc_k(c)=c_k+\frac{F_k(c_k)}{f_k(c_k)}$ is increasing in $c_k$ for all $k\in[K]$. Together with the implementation of \Cref{alg:opt-MD}, we know the objective function is linear in $y$, and from the following \Cref{lem:strong-convex} we will see the feasible region is convex, then we can actually derive similarly a stronger version of the IC condition, namely, $-c_kx_k^{\rho_k}(c)+p_k(c)\ge -c_kx_k^{\rho_k}(\hat c_k,c_{-k})+p_k(\hat c_k,c_{-k})$.
Specifically, when the true cost is $c_k$ but the human creator reports $\hat c_k$, it holds that the creator will have a utility loss 
\[
(c_k-\hat c_k)y_k(c_1,...,c_{k-1},\hat c_k,c_{k+1},...,c_K)+\int_{c_k}^{\hat c_k}y_k(c_1,...,c_{k-1},t_k,c_{k+1},...,c_K)dt_k.
\]
For overreport $\hat c_k>c_k$, it holds that $y_k(c_1,...,c_{k-1},\hat c_k,c_{k+1},...,c_K)\le y_k(c_1,...,c_{k-1},t_k,c_{k+1},...,c_K)$ for every $t_k\in[c_k,\hat c_k]$ in our convex optimization problem. Meanwhile, for underreport $\hat c_k<c_k$, we know that $y_k(c_1,...,c_{k-1},\hat c_k,c_{k+1},...,c_K)\ge y_k(c_1,...,c_{k-1},t_k,c_{k+1},...,c_K)$ for every $t_k\in[\hat c_k, c_k]$. In both cases, the agent incurs a non-negative loss. Therefore, truthful reporting is a dominant strategy, which implies incentive compatibility.
Similarly, we get that $Q_k(y,\cdot)$ is continuous almost everywhere, and $\cM_1$ is valid for the same reason.

\subsubsection{Proof of \Cref{thm:complexity}}\label{app:complexity}
We first prove the convexity of the area and then give the computational complexity.
\begin{lemma}\label{lem:strong-convex}
$y_k^{1/\rho_k}+\sum_{i=1}^K\mu_{ik}y_i^{\gamma_{ik}/\rho_i}\ge d_k$ and $0\le y_k \le d_k^{\rho_k}$ for any $k\in[K]$ construct a convex set. Specially, when $\max_i\{\frac{\gamma_{ik}}{\rho_i}\}<1$, it becomes a strongly convex set.
\end{lemma}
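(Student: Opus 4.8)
The plan is to show that the feasible region defined by the constraints
\[
g_k(y) := y_k^{1/\rho_k}+\sum_{i=1}^K\mu_{ik}y_i^{\gamma_{ik}/\rho_i}\ge d_k,\qquad 0\le y_k\le d_k^{\rho_k}
\]
is convex by recognizing it as an intersection of convex sets. The box constraints $\{0\le y_k\le d_k^{\rho_k}\}$ are trivially convex, so the crux is each demand constraint. Each constraint has the form $g_k(y)\ge d_k$, which describes a convex set precisely when $g_k$ is a \emph{concave} function of $y$ (a superlevel set of a concave function is convex). The whole feasible region is then a finite intersection of such convex sets and hence convex.

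First I would verify concavity of $g_k$. Since $\mu_{ik}\ge 0$, the function $g_k$ is a nonnegative combination of the terms $y_k^{1/\rho_k}$ and $y_i^{\gamma_{ik}/\rho_i}$, and a nonnegative combination of concave functions is concave. Each individual term is a power function $t\mapsto t^{\alpha}$ on $t\ge 0$. Recalling the assumptions $\rho_k\ge 1$ and $\gamma_{ik}\in[0,1]$, every exponent satisfies $1/\rho_k\in(0,1]$ and $\gamma_{ik}/\rho_i\in[0,1]$. Since $t\mapsto t^{\alpha}$ is concave on $[0,\infty)$ for $\alpha\in[0,1]$ (its second derivative $\alpha(\alpha-1)t^{\alpha-2}\le 0$), each term is concave, so $g_k$ is concave. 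This establishes the convexity claim.

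For the strong-convexity refinement under $\max_i\{\gamma_{ik}/\rho_i\}<1$, the plan is to upgrade concavity to \emph{strong} concavity of $g_k$ on the bounded box. The idea is that when every exponent is strictly less than $1$, each power term has a strictly negative, bounded-away-from-zero second derivative on the compact box $\prod_k[0,d_k^{\rho_k}]$—except near the boundary point $y_i=0$, where $t^{\alpha-2}$ blows up favorably (the curvature only increases). Concretely, the diagonal Hessian entry from the term $y_i^{\gamma_{ik}/\rho_i}$ is $\tfrac{\gamma_{ik}}{\rho_i}(\tfrac{\gamma_{ik}}{\rho_i}-1)y_i^{\gamma_{ik}/\rho_i-2}$, which is strictly negative and, on the compact box, bounded above by some $-\kappa<0$. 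The $y_k^{1/\rho_k}$ term contributes nonpositive curvature as well (strictly negative when $\rho_k>1$). I would then translate uniform strong concavity of each $g_k$ into strong convexity of the superlevel set in the standard sense (e.g., that the set contains a ball around each boundary point's supporting-hyperplane chord, or equivalently that the boundary has curvature bounded below). The delicate point—and the main obstacle—is handling the exponent $1/\rho_k$ when $\rho_k=1$, so that the ``own'' term $y_k$ is linear and contributes no curvature; strong convexity must then come entirely from the transfer terms $\sum_i\mu_{ik}y_i^{\gamma_{ik}/\rho_i}$, which requires arguing that the active transfer directions span the relevant tangent directions, or alternatively restricting the strong-convexity claim to directions transverse to the linear part. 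I would resolve this by noting that the strict inequality $\max_i\{\gamma_{ik}/\rho_i\}<1$ guarantees at least one strictly concave term is active in each constraint, and combine the per-constraint curvature with the compactness of the box to extract a uniform modulus, completing the argument.
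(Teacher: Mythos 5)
Your convexity argument is correct and is essentially the paper's: the paper computes the (diagonal) Hessian of $f_k(y)=y_k^{1/\rho_k}+\sum_{i=1}^K\mu_{ik}y_i^{\gamma_{ik}/\rho_i}$, observes that every entry is nonpositive because all exponents lie in $[0,1]$, concludes each $f_k$ is concave so each superlevel set is convex, and intersects. Your ``nonnegative combination of concave power functions'' phrasing is the same computation in different words.

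The strong-convexity half is where your proposal has a genuine gap. You correctly flag the delicate case $\rho_k=1$, where the own term $y_k^{1/\rho_k}=y_k$ is linear, but your proposed fix --- that ``at least one strictly concave term is active in each constraint'' --- does not suffice. Because $\nabla^2 f_k$ is diagonal, strong concavity of $f_k$ (hence strong convexity of $\{f_k\ge d_k\}$) requires \emph{every} diagonal entry to be strictly negative and bounded away from zero; a single strictly concave term yields curvature in only one coordinate direction and leaves the superlevel set cylindrical (flat) in the remaining directions, which is not strongly convex. What actually closes the argument, and what the paper's proof does tersely, is to note that the $i$-th diagonal entry of $\nabla^2 f_k$ equals $\mu_{ik}\frac{\gamma_{ik}}{\rho_i}(\frac{\gamma_{ik}}{\rho_i}-1)y_i^{\gamma_{ik}/\rho_i-2}$ plus the own-term contribution when $i=k$, and that under $\max_i\{\frac{\gamma_{ik}}{\rho_i}\}<1$ (together with the paper's standing transferability assumption that the $\mu_{ik}$ and $\gamma_{ik}$ are nonzero) every one of these entries is strictly negative on the bounded box --- in particular the $k$-th entry gets its curvature from the transfer term $\mu_{kk}y_k^{\gamma_{kk}/\rho_k}$ even when $\rho_k=1$, not from ``some active strictly concave term.'' Both your proposal and the paper then invoke closure of strong convexity under intersection; neither addresses the fact that the box constraints are only convex, so that last step is applied loosely in both treatments, but the coordinate-wise curvature claim is the step your argument as written would fail to deliver.
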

\begin{proof}
    Assuming $f_k(y)=y_k^{1/\rho_k}+\sum_{i=1}^K\mu_{ik}y_i^{\gamma_{ik}/\rho_i}$, it holds that
    \[
    \nabla^2 f_k(y)=\mathrm{diag}_i\{\mu_{ik}\frac{\gamma_{ik}}{\rho_i}(\frac{\gamma_{ik}}{\rho_i}-1)y_i^{\frac{\gamma_{ik}}{\rho_i}-2}+\frac{1}{\rho_k}(\frac{1}{\rho_k}-1)y_k^{\frac{1}{\rho_k}-2}\ind\{i=k\}\}.
    \]

    Since we have $\rho_k\ge 1$ and $\gamma_{ik}\le 1$, it holds that $\nabla^2 f_k(y)$ is a diagonal matrix and all components are non-positive due to $0\le y_k\le d_k^{\rho_k}$. Hence, we know that $f_k(y)$ is a concave function, showing that $f_k(y)\ge d_k$ is a convex set.

    When $\max_i\{\frac{\gamma_{ik}}{\rho_i}\}<1$, it holds that all components of $ \nabla^2 f_k(y)$ are negative. Therefore, we have that $f_k(x)$ is a strongly concave function, and similarly, $f_k(y)\ge d_k$ yields a strongly convex set. 

    Note that the intersection of (strongly) convex sets is also a (strongly) convex set. We know that constraints $y_k^{1/\rho_k}+\sum_{i=1}^K\mu_{ik}y_i^{\gamma_{ik}/\rho_i}\ge d_k$ and $0\le y_k \le d_k^{\rho_k}$ for any $k\in[K]$ yield a convex set. In the meanwhile, we know that if $\max_i\{\frac{\gamma_{ik}}{\rho_i}\}<1$, the set will become a strongly convex set, which ends the proof.
\end{proof}

With \Cref{lem:strong-convex} in hands, we know that the computational complexity of \Cref{alg:opt-MD} equalling finding the minimum of a (strongly) convex function with primal-dual methods. Therefore, using the subgradient method~\citep{nesterov2009primal,nesterov2014subgradient} and constraining $0\le y_k\le d_k^{\rho_k}$ lead to an $\cO(\frac{1}{\epsilon^2})$ complexity to calculate one allocation $y$ immediately. When $\max_i\{\frac{\gamma_{ik}}{\rho_i}\}<1$, the strongly convexity leads to $\cO(\frac{1}{\epsilon})$ complexity instantly~\citep{nesterov2005smooth}.

For the price vector $p(\hat c)$, we use the Monte Carlo method to estimate the second term, i.e., $\int_{c_k}^{b_k} y_k(t_k,c_{-k})dt_k$. Since from the implementation of \Cref{alg:opt-MD}, we know that $y_k$ is bounded by $d_k^{\rho_k}$. Then, \citet{rubinstein2016simulation} tells us that we need $\cO(\frac{1}{\epsilon^2})$ times of simulation to estimate it with tolerance $\epsilon$. Therefore, the total computational complexity is $\cO(\frac{1}{\epsilon^4})$ for general cases and $\cO(\frac{1}{\epsilon^3})$ when $\max_i\{\frac{\gamma_{ik}}{\rho_i}\}<1$ respectively.

\subsubsection{Proof of \Cref{cor:one-sample}}
Due to the linearity of $p(\cdot)$, we only need to prove that \Cref{eq:expect} gives an unbiased estimator of the optimal payment rule, i.e., $\E_{\tilde c_k}[(b_k-\hat c_k)y_k(\hat c_1,...,\hat c_{k-1},\tilde c_k,\hat c_{k+1},...,\hat c_K)]=\int_{\hat c_k}^{b_k}y_k(\hat c_{1},...,\hat c_{k-1},t_k,\hat c_{k+1},...,\hat c_K)dt_k$. 

It holds that 
\begin{align*}
    \E_{\tilde c_k}[(b_k-\hat c_k)y_k(\hat c_1,...,\hat c_{k-1},\tilde c_k,\hat c_{k+1},...,\hat c_K)]&=(b_k-\hat c_k)\int_{\hat c_k}^{b_k} y_k(\hat c_1,...,\hat c_{k-1},t_k,\hat c_{k+1},...,\hat c_K)\frac{1}{b_k-\hat c_k}dt_k\\
    &=\int_{\hat c_k}^{b_k}y_k(\hat c_{1},...,\hat c_{k-1},t_k,\hat c_{k+1},...,\hat c_K)dt_k,
\end{align*}
where the first equation holds because $\tilde c_k$ follows an uniform distribution over $[\hat c_k,b_k]$. 

Therefore, we know that we give an IC,IR and $\epsilon$-optimal mechanism with only one sample, namely $y_k(\hat c_1,...,\hat c_{k-1},\tilde c_k,\hat c_{k+1},...,\hat c_K)$. With the proof of \Cref{thm:complexity}, we know that the computational complexity is $\cO(\frac{1}{\epsilon^2})$ for general cases and $\cO(\frac{1}{\epsilon})$ when $\max_i\{\frac{\gamma_{ik}}{\rho_i}\}<1$ for any $k\in[K]$.

Nonetheless, the variance of the payment increases from $\cO(\epsilon^2)$ to $\cO(1)$ as we decrease the number of samples when using the Monte Carlo method from $\cO(\frac{1}{\epsilon^2})$ to constant. To better balance the tradeoff between computational complexity and variance, i.e., uncertainty, we may need $n$ samples to estimate $\int_{\hat c_k}^{b_k}y_k(\hat c_{1},...,\hat c_{k-1},t_k,\hat c_{k+1},...,\hat c_K)dt_k$, yielding $\cO(\frac{1}{n})$ variance and $\cO(\frac{n}{\epsilon^2})/\cO(\frac{n}{\epsilon})$ complexity.

\subsubsection{Proof of \Cref{corr:positive}}
From the constraint $y_k^{1/\rho_k}+\sum_{i=1}^K\mu_{ik}y_i^{\gamma_{ik}/\rho_i}\ge d_k$, we know that when $y_k=0$, the corresponding slope is infinity. 
Besides, from the knowledge of convex optimization, we know that $(vc_1,...,vc_K)$ is associated with the corresponding subgradient.
However, due to \Cref{ass:lower}, $vc_k$ is upper bounded and finite. 
Therefore, the solution of $y$ is strictly positive in all components given $\mu_{ik}\neq 0$ and $\gamma_{ik}\neq 0$, which ends the proof. 

Since the feasible area is an intersection of $K$ convex sets, when the optimal $y$ is located on the boundary of the $k$-th set, the $k$-th demand constraint will be tight, i.e., binding. Otherwise, it will be loose, i.e., non-binding. Then, we know that although all domains still exist, some will experience overproduction and it completes our analysis.
\subsection{Extension to Multiple Human Creators in Each Domain}\label{app:multi}
We conclude that the extension will retain the convexity of the optimization problem, \rebuttal{hence preserving the validity of} \Cref{thm:opt} and \Cref{thm:complexity}.

We first write down the revenue-maximizing problem as Problem~\ref{eqn:optimization}. We assume there are $n_k$ human creators in the $k$-th domain and we use $x_k^i$ ($c_k^i$, $p_k^i$ resp.) for the $i$-th creators where $i\in[n_k]$. We assume these $n_i$ humans have i.i.d. cost function. We stack $c_k^i$ as $c$ and we use $y_k^i$ to denote $(x_k^i)^{\rho_k}$ as before. With these preparations, we have the following optimization problem,
\begin{equation*}\label{eqn:multioptimization}
\begin{aligned}
 \max_{x,p}\ & \E_c[-\sum_{k=1}^K\sum_{i=1}^{n_k} p^i_k(c)]\\
 \text{s.t.}\  &  \sum_{i=1}^{n_k} x_k^i( c)+\sum_{j=1}^K\mu_{jk}[\sum_{i=1}^{n_j}x_j^i(c)]^{\gamma_{jk}}\ge d_k\text{ for any $k\in[K]$}\\
& \E_{-(k,i)}[-c_k^i (x^i_k(c))^{\rho_k}+p_k^i(c)]\ge\E_{-(k,i)} [ -c^i_k(x^i_k(\hat c))^{\rho_k}+p_k^i(\hat c)] \\
&    \E_{-(k,i)}[ -c^i_k(x_k^i(c))^{\rho_k}+p^i_k(c)]\ge 0 \\
& x_k^i(c)\ge 0 \text{ for any $(k,i)\in[K]\times[n_k]$},
\end{aligned}
\end{equation*}
where $(k,i)$ means taking expectation over all other human creators.

Since the IC and IR conditions hold individually, we know the optimization goal becomes $y(c)=\argmin_y\sum_{k=1}^K\sum_{i=1}^{n_k}[c_k^i+\frac{F_k( c_k^i)}{f_k( c_k^i)}]y_k^i$ accordingly. We are now ready to prove that the constraints still induce a convex set.

By replacing $(x_k^i)^{\rho_k}$ by $y_k^i$, the last constraint turns to $y_k^i\ge 0$. The first constraint is now $\sum_{i=1}^{n_k}(y_k^i)^{1/\rho_k}+\sum_{j=1}^K\mu_{jk}[\sum_{i=1}^{n_j}(y_j^i)^{1/\rho_j}]^{\gamma_{jk}}\ge d_k$. Note that $\rho_k\ge 1$ for any $k\in[K]$ and $\gamma_{jk}\in[0,1]$. We know that $\sum_{i=1}^{n_k}(y_k^i)^{1/\rho_k}+\sum_{j=1}^K\mu_{jk}[\sum_{i=1}^{n_j}(y_j^i)^{1/\rho_j}]^{\gamma_{jk}}$ is a concave function with respect to $y$. Therefore, the area corresponding to larger than $d_k$ is a convex set. It soon holds that $y$ belongs to a convex set because the intersection of convex sets is still a convex set.

We conclude that even with multiple human creators in each domain, the revenue-maximizing problem is still a convex optimization problem. We can find the optimal mechanism by solving 
\begin{equation*}
\begin{aligned}
 y(c)=\argmin_y &\sum_{k=1}^K\sum_{i=1}^{n_k}[c_k^i+\frac{F_k( c_k^i)}{f_k( c_k^i)}]y_k^i\\
 \text{s.t.}\  &  \sum_{i=1}^{n_k}(y_k^i)^{1/\rho_k}+\sum_{j=1}^K\mu_{jk}[\sum_{i=1}^{n_j}(y_j^i)^{1/\rho_j}]^{\gamma_{jk}}\ge d_k\text{ for any $k\in[K]$}\\
& y_k^i\ge 0 \text{ for any $(k,i)\in[K]\times[n_k]$}.
\end{aligned}
\end{equation*}
Additionally, the proofs of \Cref{thm:opt,thm:complexity} still hold as they are applicable for general convex optimization problems, so the optimality and the complexity results remain valid. Moreover, we can choose a different $c_k$ for each domain and then allocate assignments among human creators, showing the equivalence of considering only one content creator.

\rebuttal{
Moreover, we remark that we assume creators in domain $k$ share a common $\rho_k$ to reflect the similar economic characteristics of creators within the same domain. Mathematically, our results do not rely on the homogeneity of $\rho_k$.
Suppose that the $i$-th creator in domain $k$ has cost $c_k^i (x_k^i)^{\rho_k^i}$.
By defining $y_k^i=(x_k^i)^{\rho_k^i}$, the optimization problem can be rewritten as 
\begin{equation*}
\begin{aligned}
 y(c)=\argmin_y &\sum_{k=1}^K\sum_{i=1}^{n_k}[c_k^i+\frac{F_k( c_k^i)}{f_k( c_k^i)}]y_k^i\\
 \text{s.t.}\  &  \sum_{i=1}^{n_k}(y_k^i)^{1/\rho_k^i}+\sum_{j=1}^K\mu_{jk}[\sum_{i=1}^{n_j}(y_j^i)^{1/\rho_j^i}]^{\gamma_{jk}}\ge d_k\text{ for any $k\in[K]$}\\
& y_k^i\ge 0 \text{ for any $(k,i)\in[K]\times[n_k]$},
\end{aligned}
\end{equation*}
which still preserves convexity after the transformation.
This illustrates the generality of our model.
}

\subsection{Omitted Proof in \Cref{sec:two+SW}}
\subsubsection{Proof of \Cref{thm:two-sw}}
From the proof of \Cref{thm:opt}, we know that to satisfy IC, the payment rule has to be $p_k(c)=c_ky_k+\int_{c_k}^{b_k} y_k(c_{-k},t_k)dt_k+C$ where $c_{-k}$ means all costs except the $k$-th entry and $C$ is a global constant. To guarantee the participation or the IR constraint, we need that $p_k(c)\ge c_ky_k$ because of the parameter transformation $y_k(c)=x_k(c)^{\rho_k}$. From the implementation of \Cref{alg:opt-MD}, we know that for any positive virtual cost, $y_k(\cdot)$ is larger than zero letting alone $c$. Thus, it holds that $\int_{c_k}^{b_k} y_k(c_{-k},t_k)dt_k+C\ge C$. If $C$ is not negative, we know that the IR constraint is satisfied. Considering the corner case that $c=b$, we find that the integral is zero, so we need $C$ to be no smaller than zero. Combining these two results, we know that $C=0$ is the optimal choice. This shows the feasibility and the optimality of our payment rule. 

As we have shown that for every allocation rule, we can find a corresponding payment to motivate its implementation, we know that we only need to set the virtual cost equaling to the true cost $c$ noting that we hope to obtain the lowest social cost, namely $\min_x\sum_k c_k x_k^{\rho_k}(c)=\min_y \sum_k c_k y_k(c)$. It yields the optimality of \Cref{algo:MD-two-SW} as it satisfies both IC and IR, and achieves the highest social welfare while fulfilling all demands. Since we only need $vc_k(\cdot)$ to be increasing to obtain the validity and $c_k$ is of course increasing, it holds that $\cM_2$ is valid.

As for the complexity, we can inherit the results of \Cref{thm:complexity} directly as the main computational time lies in the subroutine \Cref{alg:opt-MD}. So, the computational complexity is polynomial and it finishes the proof.

\subsection{Omitted Proof in \Cref{sec:union}}\label{app:union}
\subsubsection{Proof of \Cref{lem:ind+path}}
In order to prove that $\int_c^b y(t)\cdot dt $ is independent of the integration path for any $c$, we first write down how we decide $y(\cdot)$. From the implementation of \Cref{alg:opt-MD}, we know that 
\begin{equation*}
\begin{aligned}
 y(c)=\argmin_{y}\ & c\cdot y\\
 \text{s.t.}\  &   y_k^{1/\rho_k}+\sum_{i=1}^K\mu_{ik}y_i^{\gamma_{ik}/\rho_i}\ge d_k\\
& y_k\ge 0\ \text{for any }k\in[K].
\end{aligned}
\end{equation*}

From the knowledge of analysis~\citep{griffiths2019introduction}, we know that a sufficient condition is $\nabla\times y(c)=0$. On the other hand, if we can find a potential function $\Psi(c)$ such that $\nabla \Psi(c)=y(c)$, it holds that $\nabla\times y(c)=\nabla\times\nabla\Psi(c)=0$ due to $\nabla\times\nabla=0$.

\begin{figure}[htbp]
    \begin{minipage}{0.47\textwidth}
        \centering
        \includegraphics[width=\textwidth]{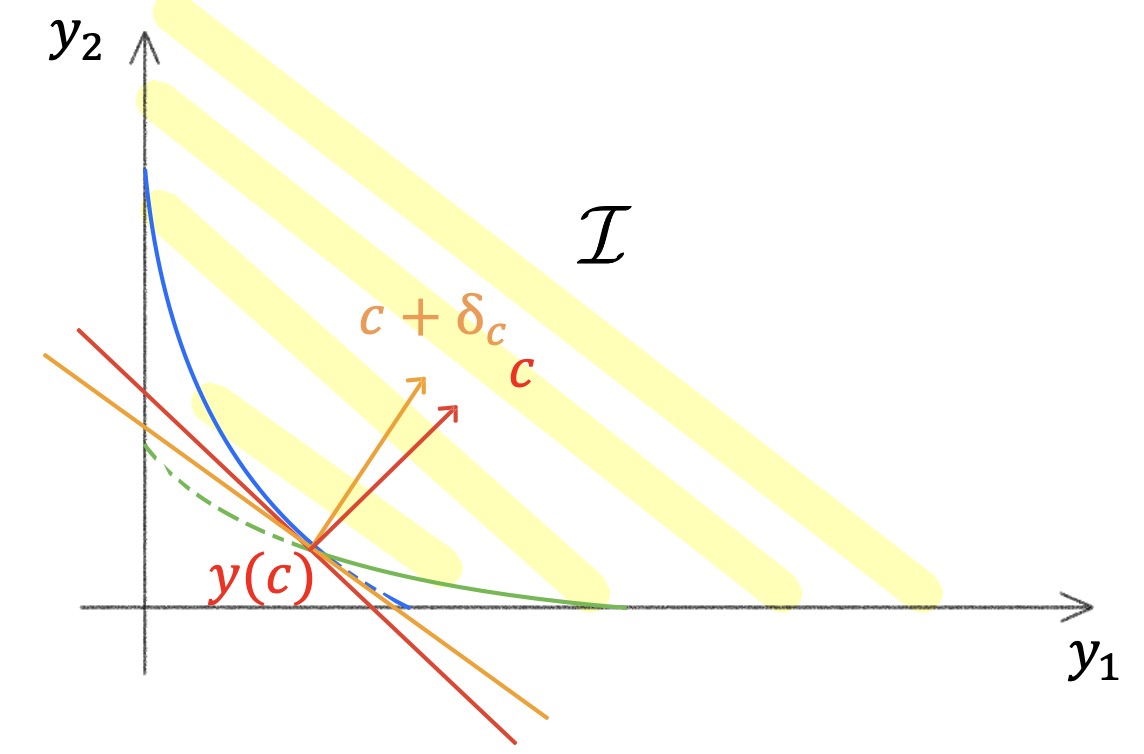} 
        \caption{When $y(c)$ is a vertex.}\label{fig:1}
    \end{minipage}
    \hfill
    \begin{minipage}{0.47\textwidth}
        \centering
        \includegraphics[width=\textwidth]{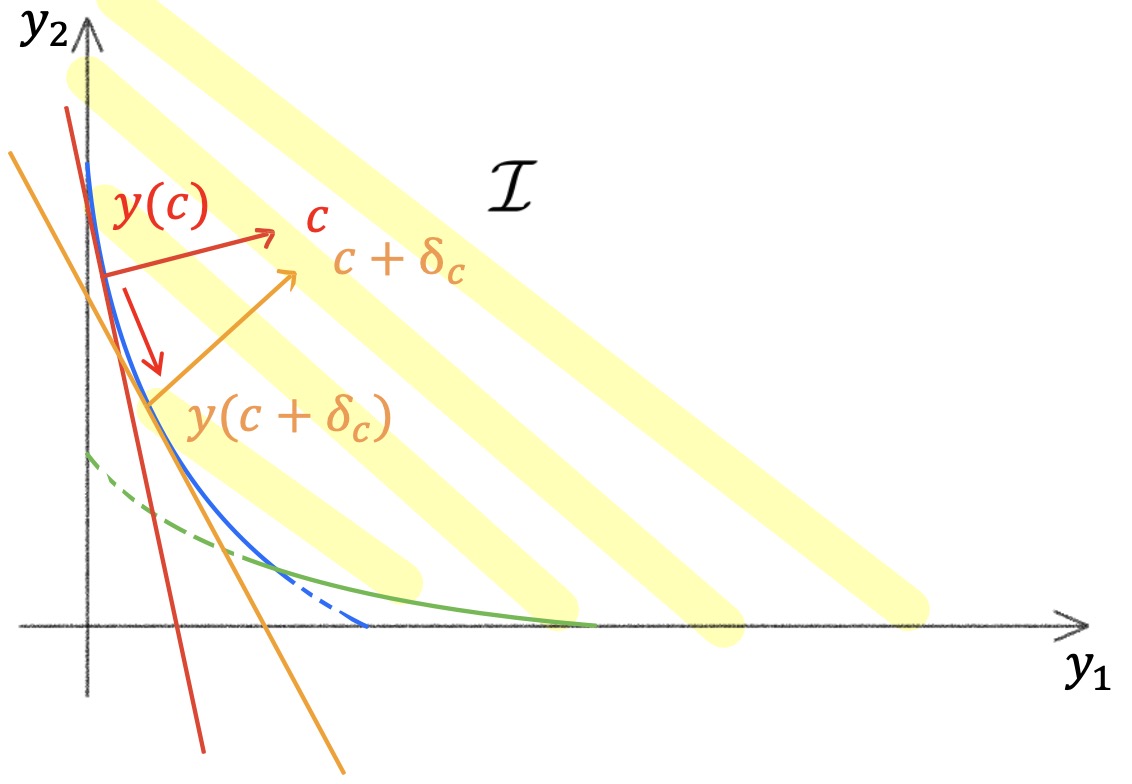} 
        \caption{When $y(c)$ is not a vertex.}\label{fig:2}
    \end{minipage}
\end{figure}

We use $\cI$ to denote the constrained area that $\cI=\{y:y_k^{1/\rho_k}+\sum_{i=1}^K\mu_{ik}y_i^{\gamma_{ik}/\rho_i}\ge d_k\ \text{and\ }y_k\ge 0\ \text{for any }k\in[K] \}$. From \Cref{lem:strong-convex}, we know that $\cI$ is a convex set. Note that $c\cdot y$ is a linear function, then we can define $\Psi(c)=\min_{y\in \cI} c\cdot y$. Let's now detail when we change $c$ to $c+\delta_c$, how will $y$ change. There are two possible outcomes. If $y(c)$ is a non-smooth point, then $y(c+\delta_c)=y(c)$ (ref. \Cref{fig:1}). Otherwise, $y(c)$ is on a smooth segment, and we know that $c\cdot \nabla y(c)\cdot\delta_c=0$ (ref. \Cref{fig:2}). Since $\delta_c$ can be arbitrary, it holds that $c\cdot\nabla y(c)=0$. Therefore, we know that
$
\Psi(c+\delta_c)-\Psi(c)=\delta_c\cdot y(c)
$
in the first case and $\Psi(c+\delta_c)-\Psi(c)\rightarrow \delta_c\cdot y(c)+c\cdot \nabla y(c)\cdot\delta_c=\delta_c\cdot y(c)$ as $\delta_c\rightarrow 0$ in the second case. We know thereof that $\nabla\Psi(c)=y(c)$, showing the path independence. Since we have proven the path independence, it holds that the payment rule $p(\cdot)$ is well-defined which ends the proof. 

\subsubsection{Proof of \Cref{thm:opt+two}}
From the revelation principle~\citep{roughgarden2010algorithmic}, we know the existence of mechanisms satisfying both IC and IR. Let's analyze the IC condition first. It holds that $p(c)-p(\hat c)\ge c\cdot y( c)-c\cdot y(\hat c)$. Since $\hat c$ can approach $c$ from any direction, we know that $\nabla p(c)=c\cdot\nabla y(c)$.
In practice, applicable mechanisms are usually smooth, so we assume that these gradients exist. However, as we will see below the payment rule is indeed a constant, this assumption is not mandatory. 

Consequently, it holds that 
\[
p(c)=p(b)-\int_c^b t\cdot\nabla y(t)\cdot dt=c\cdot y(c)+\int_c^b y(t)\cdot dt+C,
\]
where the second equation holds due to the integration by parts formula and $C$ here is a constant.

We now know we can use this kind of payment rule to induce any allocation rule, which is thereof applied to social welfare maximizing platforms. Since it contains a family of mechanisms, we use the IR condition to find the one with the highest revenue. Recall that we need to guarantee $p(c)-c\cdot y(c)\ge 0$ for any $c$. Since there are constraints that $y_k(\cdot)\ge 0$ for any $k\in [K]$, it holds that $C\ge\max_c [-\int_c^b y(t)\cdot dt]$. When $c=b$, we know that the right-hand side takes the maximum zero. Therefore, we know the smallest $C$ is zero which is optimal.

The last question is that we only show a necessary condition and we need to prove the IC condition actually holds. From the proof of \Cref{lem:ind+path}, we know that $c\cdot\nabla y(c)=0$. Therefore, we know that $p(\cdot)$ is in fact a constant, say $b\cdot y(b)$. Then, we only need to show that $c\cdot y(c)\le c\cdot y(\hat c)$. Note that from \Cref{alg:opt-MD_2}, we find $y(c)$ to minimize $c\cdot y$. Thus, it holds that $c\cdot y(c)=\argmin_{y\in \cI} c\cdot y\le c\cdot y(\hat c)$ as $y(\hat c)$ is also a feasible candidate.

Therefore, we show the payment rule in \Cref{alg:opt-MD_2} satisfies both IC and IR. Since the objective function is $\sum_k \hat c_ky_k=\sum_k c_ky_k=c\cdot y$ when $c=\hat c$ and $y_k=x_k^{\rho_k}$, we know that the mechanism $\cM_4$ achieves the highest social welfare. Moreover, since $\cM_4$ has the same allocation rule as $\cM_2$, we know it's valid immediately.

Note that most of the computational resource consumption comes from calls to \Cref{alg:opt-MD}, hence we know \Cref{alg:opt-MD_2} is a polynomial algorithm, which ends our proof.

We finally give the following remark. Observe that from the proof of \Cref{thm:opt+two}, we don't need \Cref{ass:cost} anymore unlike \Cref{thm:opt,thm:two-sw}. It may suggest the potential success of high-dimensional mechanism design problems beyond cycle monotonicity~\citep{lavi2007truthful}, which is of independent interest for future work.

\rebuttal{
\subsubsection{Discussion on Mechanism $\cM_3$}
The difficulty of Mechanism $\cM_3$ stands in sharp contrast to $\cM_4$, which maximizes social welfare. Under welfare maximization, one can leverage a potential-function-based formulation $\Psi(c)$ to construct path-independent payments. However, this approach does not apply to Mechanism $\cM_3$. Specifically, for $\cM_3$ we require a path-independent allocation rule $y$ that satisfies $y(c)=\arg\min c\cdot y+\int_c^b t\cdot \nabla y(t)\cdot dt$, which is fundamentally challenging because the linear objective and the integral term are tightly coupled. Moreover, enforcing path independence, without access to a potential function (The existence of a potential function implies that the field is conservative, which naturally leads to path-independent payments), makes identifying such an allocation especially difficult. Specifically, there is no clear evidence showing that naively derived $y(c)=\arg\min c\cdot y+\int_c^b t\cdot \nabla y(t)\cdot dt$ naturally satisfies the integral path-independence property.

This comparison highlights an important advantage of social-welfare maximization: it enables simple and practically implementable mechanisms. Meanwhile, our experiments show that the welfare-maximizing mechanism only incurs a minor revenue loss, further supporting its practical relevance. 
}

\subsection{Omitted Proof in \Cref{sec:two+compare}}\label{app:two+compare}
\subsubsection{Proof of \Cref{thm:two+compare}}
First, from the implementation of \Cref{alg:opt-MD_2,algo:MD-two-SW}, we know that they achieve social welfare optimum as the demand constraints are unavoidable. Therefore, we know that $SW_2=SW_4$ and they are larger than both $SW_1$ and $SW_3$, yielding $SW_2=SW_4\ge \max\{SW_1,SW_3\}$.

In the meanwhile, since $\cM_1$ and $\cM_2$ have the same constraints, then the two solutions are both feasible for each optimization problem. Thus, since $\cM_1$ maximizes the revenue of the platform, we know that $Rev_1\ge Rev_2$.

Now, let's compare $Rev_2$ and $Rev_4$. Since they have the same allocation rule, we only need to compare $\sum_k\int_{c_k}^{b_k}y_k(c_{-k},t_k)dt_k$ and $\int_c^b y(t)\cdot dt$. Since we know the latter integral doesn't depend on the integration path, we choose a path as follows. We first go from $c$ to $(b_1,c_2,...,c_K)$ and then from $(b_1,c_2,...,c_K)$ to $(b_1,b_2,c_3,...,c_K)$ etc. The $k$-th road is from $(b_1,...,b_{k-1},c_k,...,c_K)$ to $(b_1,...,b_{k},c_{k+1},...,c_K)$. Therefore, it holds that $\int_c^b y(t)\cdot dt=\sum_k \int_{c_k}^{b_k}y_k(b_1,...,b_{k-1},t_k,c_{k+1},...,c_K)dt_k$. From the implementation of \Cref{alg:opt-MD_2,algo:MD-two-SW}, we know that $y_k(b_1,...,b_{k-1},t_k,c_{k+1},...,c_K)\ge y_k(c_1,...,c_{k-1},t_k,c_{k+1},...,c_K)$ as the objective function is linear and the feasible region is a convex set, yielding $\sum_k\int_{c_k}^{b_k}y_k(c_{-k},t_k)dt_k\le \int_c^b y(t)\cdot dt$. 
Thus, we know that $\E_c[\sum_k p_k(c)]$ in $\cM_2$ is no larger than $\E_c[p(c)]$ in $\cM_4$ and $Rev_2\ge Rev_4$.

Note that $\cM_3$ and $\cM_4$ have the same constraints though different optimization objectives. From the above proof, we know that if the allocation of $\cM_3$ is $x(\cdot)$, it holds that $p(c)\ge c\cdot y(c)+\int_c^b y(t)\cdot dt$ where $y_k(\cdot)=x_k(\cdot)^{\rho_k}$. Here, we also need that $\int_c^b y(t)\cdot dt$ is independent of the integration path.

Consequently, as $\cM_3$ has one more constraint on the integration path than $\cM_1$, the allocation of $\cM_3$ is also feasible for the optimization Problem~\ref{eqn:optimization}. Let's assume the corresponding allocation rule is associated with $y(\cdot)$.

Notice that the payment rule of $\cM_3$ holds $p(c)\ge c\cdot y(c)+\int_c^b y(t)\cdot dt$, we only need to show that $c\cdot y+\int_c^b y(t)\cdot dt\ge c\cdot y+ \sum_k\int_{c_k}^{b_k} y_k(t_k,c_{-k})dt_k$. Since $\cM_3$ is a valid mechanism and has path independence, we choose the following path which goes from the first coordinate to the last one in order. More specifically, the $k$-th road is from $(b_1,...,b_{k-1},c_k,c_{k+1},...,c_K)$ to $(b_1,...,b_{k-1},b_k,c_{k+1},...,c_K)$ along the $k$-th coordinate. Due to \Cref{def:refine}, it holds that $y_k(b_1,...,b_{k-1},t_k,c_{k+1},...,c_K)\ge y_k(t_k,c_{-k})$, and then  we know $\int_c^b y(t)\cdot dt\ge \sum_k\int_{c_k}^{b_k} y_k(t_k,c_{-k})dt_k$ similar to the proof of \Cref{thm:two+compare}, yielding $p(c)\ge c\cdot y+ \sum_k\int_{c_k}^{b_k} y_k(t_k,c_{-k})dt_k$. 

Since the payment of $\cM_1$ is the minimum of $c\cdot y+\sum_k\int_{c_k}^{b_k}y_k(t_k,c_{-k})dt_k$ over all possible allocation rule $x(\cdot)$ or equivalently $y(\cdot)$, it holds that $p(c)$ is no smaller than the payment of $\cM_1$. Therefore, the expected payment of $\cM_3$ is at least as large as the one of $\cM_1$. Since the platform's revenue is equal to the negative of the payment, it holds that $Rev_3\le Rev_1$, which ends our proof.



\section{Omitted Details in \Cref{sec:three}}
\subsection{Timeline of the Three-Layer Platform-Broker-Creator Markets Markets}\label{app:three-layer}
We use the following figure to show the timeline in three-layer markets.
\begin{figure}[h]
    \centering
    \includegraphics[width=\linewidth]{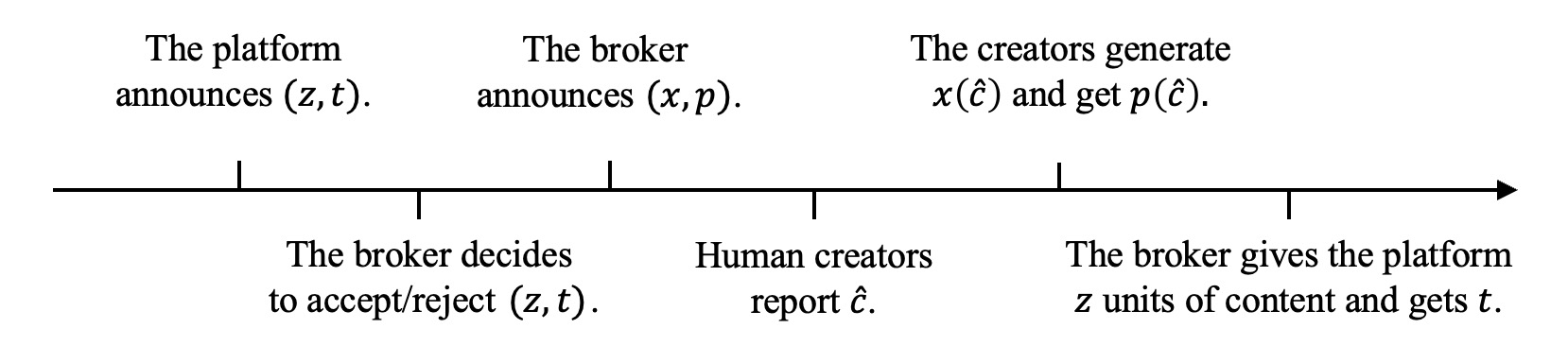}
    \caption{The timeline of the three-layer market.}
    \label{fig:three}
\end{figure}

\subsection{Omitted Proof in \Cref{sec:three+rev}}
\subsubsection{Proof of \Cref{thm:three-rev}}
First, as we notice that the final allocation in both stages is constant, we know that $\cM_5$ is valid. 

We now begin to consider the second-stage mechanism design. Note that in this stage, $t$ and $z$ are fixed constants. Therefore, changing the objective from $\E_c[t-\sum_k p_k(c;z,t)]$ to $\E_c[-\sum_k p_k(c;z,t)]$ won't change the optimal allocation and payment rule. Comparing Problem~\ref{eq:sub} and Problem~\ref{eqn:optimization}, we know that the only change is the demand constraint is now $x_k(c;z,t)\ge z_k$, parallel to other axes. Since we know $z_k\ge 0$, we can certainly omit $x_k\ge 0$. From the proof of \Cref{thm:opt}, we know that this optimization problem equaling to minimizing $\min_y \sum_k (c_k+\frac{F_k(c_k)}{f_k(c_k)})y_k $ subject $y_k^{1/\rho_k}\ge z_k$ where $y_k=x_k^{\rho_k}$. Since $c_k+\frac{F_k(c_k)}{f_k(c_k)}\ge 0$, it holds that the optimal $y_k(c;z,t)$ is $z_k^{\rho_k}$, hence $x_k(c;z,t)=z_k$ for any $c$. Recall that due to \Cref{ass:concave}, $c_k+\frac{F_k(c_k)}{f_k(c_k)}$ is increasing, so we have validity directly. In the meanwhile, we know that $p_k(c;z,t)=c_ky_k(c;z,t)+\int_{c_k}^{b_k}y_k(\tau_k,c_{-k};z,t)d\tau_k$, using the fact that $y_k(c;z,t)=z_k^{\rho_k}$ for all $c$, we know that $p_k(c;z,t)=b_ky_k(c;z,t)=b_k z_k^{\rho_k}$. Since $b_k\ge c_k$, the IR constraint is preserved.

The third step is to find the optimal first-stage mechanism. Since we need to satisfy $\E_c[t-\sum_k p_k^*(c;z,t)]\ge 0$ and from above we know that $p_k^*(c;z,t)=b_ky_k(c;z,t)=b_kz_k^{\rho_k}$. So, we know that $t\ge \sum_kb_kz_k^{\rho_k}$. As the objective is to maximize $\E_c[-t]$, it equals to minimizing $\sum_{k=1}^K b_k\omega_k$ where we use $\omega_k$ to denote $z_k^{\rho_k}$. Now we turn to the constraints. It holds that $\omega_k^{1/\rho_k}+\sum_{i=1}^K\mu_{ik}\omega_i^{\gamma_{ik}/\rho_i}\ge d_k$ and $\omega_k\ge 0$ for any $k\in[K]$. From the implementation of \Cref{alg:opt-MD}, we know that we only need to replace $vc_k(\hat c)$ by $b_k$, and $\omega$ is the output $y$ of \Cref{alg:opt-MD}.

Therefore, we prove the optimality of the second-stage mechanism design using $x_k(c;z,t)=z_k$ and $p_k(c;z,t)=b_k z_k^{\rho_k}$. Also, we know that the allocation rule in $\cM_5$ is optimal. Since $t$ can be any number no smaller than $\sum_{k=1}^Kb_kz_k^{\rho_k}$, we know that the payment rule in the first stage is optimal as well.

Finally, since the time complexity mainly depends on the call of \Cref{alg:opt-MD}, we know $\cM_5$ is polynomial-time from \Cref{thm:complexity} immediately. 

\subsection{Omitted Proof in \Cref{sec:three_sw}}
\subsubsection{Proof of \Cref{thm:three-sw}}
Note that the second-stage optimization problem is the same as the one of \Cref{thm:three-rev}. We only need to replace the first-stage problem with Problem~\ref{eq:3swmain}. Now, the objective is to minimize $\E_c[-\sum_k c_k(x_k^*(c;z,t))^{\rho_k}]$. Since we know that $x_k^*(c;z,t)=z_k$, it holds that we need to minimize $\E[\sum_k c_k z_k^{\rho_k}]$. Using $\omega_k$ to denote $z_k^{\rho_k}$, it holds that the objective is now $\E[c]\cdot \omega$ under the constraints $\omega_k^{1/\rho_k}+\sum_{i=1}^K\mu_{ik}\omega_i^{\gamma_{ik}/\rho_i}\ge d_k$ and $\omega_k\ge 0$ for any $k\in[K]$. Replacing $vc_k(\hat c)$ by $\E[c_k]$ yields the optimal $\omega$ immediately. 

Since we need to guarantee $t\ge \sum_kb_kz_k^{\rho_k}$ so that the broker will participate in the market, we set $t=\sum_{k=1}^K b\cdot\omega$, which is the lower bound of feasible $t$. Therefore, it holds that $(z,t)$ in $\cM_6$ is the optimal first-stage mechanism. 

Similarly, we know that the corresponding $(x,p)$ is the optimal second-stage mechanism. 
Since the allocation is fixed in each stage, we know that $\cM_6$ is valid. As for the computational complexity, it's polynomially dominated by the one of \Cref{alg:opt-MD}. Therefore, we know it's a polynomial-time algorithm from \Cref{thm:complexity} which ends the proof.

\subsection{Omitted Proof in \Cref{sec:lose}}
\subsubsection{Proof of \Cref{thm:lose}}
From \Cref{thm:two+compare}, we know that $Rev_2\ge Rev_4$. Also, in the proof of \Cref{thm:two-sw}, we know that the payment in $\cM_4$ is in fact a constant $b\cdot y(b)$ where $y(b)$ is the output of \Cref{alg:opt-MD} with input $b$. From the implementation of \Cref{alg:opt-MD_3-Rev}, we know that the payment of $\cM_5$ is also $b\cdot y(b)$ thereof $Rev_4=Rev_5$. Note that $\cM_5$ and $\cM_6$ have the same second-stage optimization problem, therefore $\cM_6$ is also feasible for a Type 1 platform. Due to the optimality of $\cM_5$ shown in \Cref{thm:three-rev}, we know that $Rev_5\ge Rev_6$. Hence, it holds that $Rev_2\ge Rev_4=Rev_5\ge Rev_6$.

Similarly, as $\cM_5$ is feasible for a Type 2 platform, we know that $SW_6\ge SW_5$ because of \Cref{thm:three-sw}. Note that $\cM_2$ and $\cM_4$ achieve the highest social welfare for each $c$ while $SW_6$ only uses the optimal allocation with respect to $\E[c]$. Therefore, for each $c$, the allocation of $\cM_2$ and $\cM_4$ is at least as good as the one of $\cM_6$. We then know that $SW_2=SW_4\ge SW_6$. To sum up, it holds that $SW_2=SW_4\ge SW_6\ge SW_5$.

As a result, we've shown that the three-layer market will lead to a lose-lose situation that \[\max\{Rev_5,Rev_6\}\le \min\{Rev_2,Rev_4\}\] and 
\[
\max\{SW_5,SW_6\}\le \min\{SW_2,SW_4\}.
\]

\subsubsection{Proof of \Cref{cor:lower+rev3}}
We only need to construct a mechanism based on $\cM_5$ which is feasible for Problem~\ref{eq:rev-2}. We choose the second-stage mechanism in \Cref{alg:opt-MD_3-Rev} as a candidate. We know that the total payment of $\cM_5$ is now $\sum_k b_kz_k^{\rho_k}$ and the allocation rule is a constant vector $z$. Then, we choose $p(c)=\sum_{k=1}^K b_kz_k^{\rho_k}$ and $x(c)=z$ for Problem~\ref{eq:rev-2}.

For Problem~\ref{eq:rev-2}, since we know $z$ also satisfies demand constraints in all domains, it holds that $x_k( c)+\sum_{i=1}^K\mu_{ik}x_i(c)^{\gamma_{ik}}\ge d_k$ also holds. For the IC constraint, we know that $ p(c)-\sum_{k=1}^K c_k x_k^{\rho_k}(c)=\sum_{k=1}^K b_kz_k^{\rho_k}-\sum_{k=1}^K c_k z_k^{\rho_k}= p(\hat c)-\sum_{k=1}^K c_k x_k^{\rho_k}(\hat c)$ so preserved. For the IR constraint, it holds that $ p(c)-\sum_{k=1}^K c_k x_k^{\rho_k}(c)=\sum_{k=1}^K b_kz_k^{\rho_k}-\sum_{k=1}^K c_k z_k^{\rho_k}=\sum_{k=1}^K (b_k-c_k)z_k^{\rho_k}\ge 0$ as $b_k\ge c_k$. Finally, $x_k(c)\ge 0$ for all $k\in [K]$ as $z\ge 0$ entrywise. 

Therefore, we find a feasible solution to Problem~\ref{eq:rev-2}. Since $Rev_3$ corresponds to the optimal solution, we know that $Rev_3\ge \sum_{k=1}^K b_kz_k^{\rho_k}=Rev_5$, yielding a lower bound for $Rev_3$, and it finishes the proof.
\section{Omitted Details in \Cref{sec:num}}\label{app:exp}
We conduct all the numerical experiments written in Python 3.11.7 running on a laptop with an Apple M2 CPU, and we provide more omitted details as follows.


\paragraph{Experimental evaluation of revenue and welfare among mechanisms.} For series of experiments to compare the performance of \Cref{alg:opt-MD,alg:opt-MD_2,alg:opt-MD_3-Rev,alg:opt-MD_3-SW,algo:MD-two-SW}, we randomly choose $\mu$ following $\text{Unif}[0,0.5]$, $\gamma$ following $\text{Unif}[0.8,0.9]$, $d$ following $\text{Unif}[2,5]$ and $\rho$ following $\text{Unif}[1,1.5]$. To better visualize our results, we still set $K=2$. We introduce asymmetry that $c_1$ follows $\cN(0,1)$ truncated by $[0,10]$ and $c_2$ is sampled from $\text{Unif}[0,1]$. We follow \Cref{eq:expect} and use 100 trials to take the average in order to reduce the variance of our payment rule. Finally, we sample the cost vector $c$ 1000 times and compare the corresponding revenue and social welfare.  \rebuttal{Here, we compare the resulting revenue and social welfare under different market structures and objective choices, holding the distribution of costs $c$ (i.e., market randomness) and other hyperparameters fixed across all mechanisms. Therefore, the differences observed are solely caused by structural changes (e.g., two-layer vs. three-layer, revenue-maximizing vs. welfare-maximizing), which makes the comparison meaningful.}

\begin{figure}[!tbp]
    \centering
    \subfigure[Comparative platform revenue volatility for $\cM_1$ and $\cM_2$ over 100 trials.]{
        \includegraphics[width=0.47\textwidth]{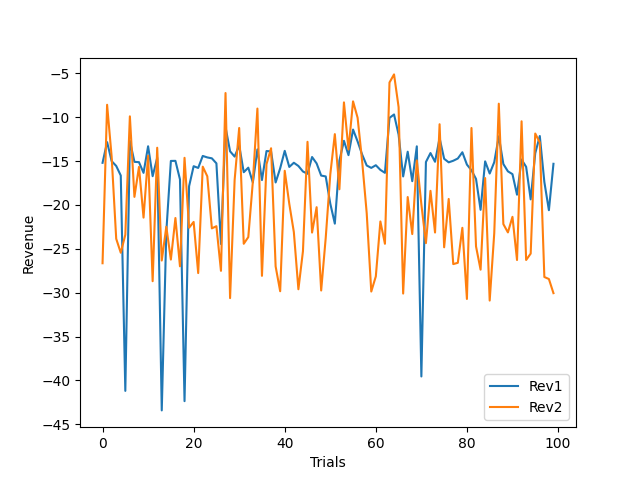}
    }
    \subfigure[Comparative social welfare volatility for $\cM_1$ and $\cM_2$ over 100 trials.]{
        \includegraphics[width=0.47\textwidth]{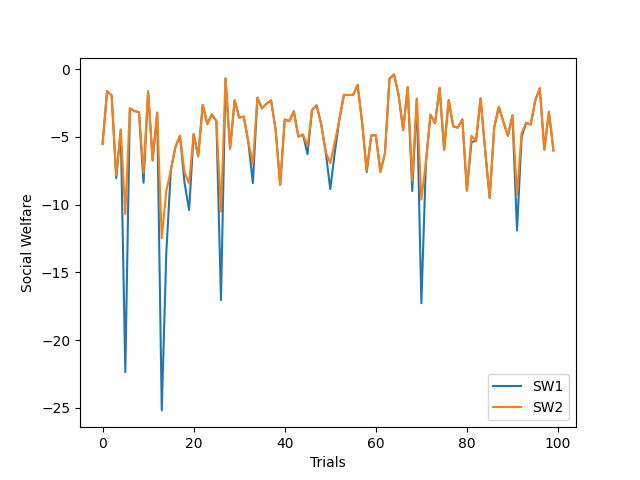}
    }
    \caption{Volatility comparison between Mechanism 1 and Mechanism 2.}
    \label{fig:vol}
\end{figure}

We first give the following \Cref{tab:compare} to show the average revenue and social welfare of $\cM_1$, $\cM_2$, $\cM_4$, $\cM_5$ and $\cM_6$, and we use \Cref{thm:two+compare} and \Cref{cor:lower+rev3} to bound the ones with $\cM_3$.
\begin{table}[htbp]
\centering
\caption{Revenue and social welfare of our mechanisms.}
\begin{tabular}{|c|c|c|c|c|c|c|}
\hline
&$\cM_1$& $\cM_2$& $\cM_3$& $\cM_4$&$\cM_5$ & $\cM_6$\\ \hline
Revenue & -15.96&-20.28&[-30.68,-15.96]&-30.68& -30.68& -34.21 \\ \hline
Social Welfare & -5.24& -4.79& $\le$ -4.79 & -4.79&-9.58& -4.96\\ \hline
\end{tabular}\label{tab:compare}
\end{table}

Note that the revenue is the negative payment, and the social welfare is the negative total human cost, so they are both less than zero. The experimental findings corroborate \Cref{thm:two+compare,thm:lose}, demonstrating that the rankings of revenue and social welfare observed in the experiments are consistent with their theoretical rankings.

We visualize in \Cref{fig:box} the distributions of revenue and social welfare across different algorithms and provide further explanation and economic insights.
We use a red line to represent the mean and an orange one for the median. Moreover, we draw the first quantile, the third quantile, and the extension with a 1.5 interquartile range in \Cref{fig:box}. We find that compared with mechanism $\cM_2$, mechanism $\cM_1$ has more outliers and extreme points. 
We also plotted the specific $Rev_1$, $Rev_2$, $SW_1$ and $SW_2$ for the first 100 trials in \Cref{fig:vol}, which clearly demonstrated the high volatility of mechanism $\cM_1$.
It may hint that the social welfare maximizer can enjoy a more stable market, say the social welfare and even the revenue are steady, compared with a revenue maximizer. This could perhaps explain why some companies turn to maximizing social welfare. Moreover, in the three-layer market, data brokers and online platforms first establish contracts, which are independent of the realization value of cost $c$, resulting in less volatility. This is also clearly reflected in the box plot.

\begin{figure}[!tbh]
    \centering
    \includegraphics[width=\linewidth]{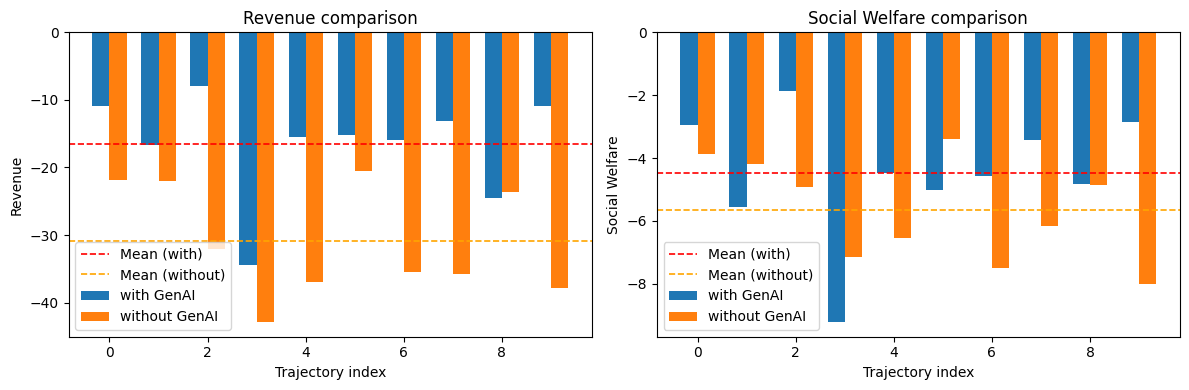}
    \caption{Visualization of $Rev$ and $SW$ with and without GenAI (lower values mean worse performance).}
    \label{fig:bar}
\end{figure}

\paragraph{With and Without GenAI.}
Finally, we provide more details when studying the gap between the presence and absence of GenAI. Recall that when $\mu=0$ is equivalent to expelling GenAI out of the market. 
We adopt the same setup as in the evaluation of the mechanisms. 
The experiment is repeated 10 times, and the results are reported in \Cref{tab:without}. Besides, we include bar plots to visualize their gaps in \Cref{fig:bar}.
\begin{table}[htbp]
\centering
\caption{Revenue and social welfare with and without GenAI in 10 trajectories.}
\begin{tabular}{|c|c|c|c|c|c|c|c|c|c|c|c|}
\hline
\multirow{2}{*}{with GenAI} &$Rev$ & -10.88&-16.65&-7.98&-34.49& -15.55& -15.24&-15.88&-13.18&-24.51&-10.87 \\ \cline{2-12}
&$SW$ & -2.94& -5.54& -1.86 & -9.22&-4.47& -5.01&-4.58&-3.42&-4.81&-2.86\\ \hline
\multirow{2}{*}{without GenAI} &$Rev$ & -21.82&-21.95&-32.08&-42.91& -36.93& -20.57&-35.46&-35.83&-23.68&-37.80 \\ \cline{2-12}
&$SW$ & -3.86& -4.20& -4.93 &-7.13&-6.54& -3.40&-7.49&-6.15&-4.84&-8.02\\ \hline
\end{tabular}\label{tab:without}
\end{table}

The average revenue increases from -30.90 to -16.52 while the average social welfare increases from -5.66 to -4.47. Therefore, the revenue and social welfare increase by 46.54\% and 21.02\%, respectively. This indicates that the application of GenAI moderately improves production efficiency, while simultaneously increasing both the platform’s revenue and overall social welfare.

\end{document}